   \def\@citecolor{blue}%
   \def\@urlcolor{blue}%
   \def\@linkcolor{blue}%
\def\orcidID#1{\smash{\href{http://orcid.org/#1}{\protect\raisebox{-1.25pt}{\protect\includegraphics{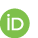}}}}}
\newcommand{\cmcomment}[1]{\todo[color=green,size=\scriptsize,fancyline,author=Christoph]{#1}\xspace}
\newenvironment{changed}{}{}
\newcommand{\changedinline}[1]{#1}
\newcommand{\cm}[1]{#1}
\newcommand{\cmout}[1]{}
\newcommand{\slemp}{\sfsymbol{\textbf{emp}}}
\newcommand{\tuplenotation}[1]{(#1)}
\newcommand{\singleton}[2]{\iverson{#1 \mapsto #2}}
\newcommand{\slsingleton}[2]{#1 \mapsto #2}
\newcommand{\validpointer}[1]{\iverson{#1 \mapsto \,{-}\,}}
\newcommand{\sepcon}{\mathbin{{\star}}}
\newcommand{\sepimp}{\mathbin{\text{\raisebox{-0.1ex}{$\boldsymbol{{-}\hspace{-.55ex}{-}}$}}\hspace{-1ex}\text{\raisebox{0.13ex}{\rotatebox{-17}{$\star$}}}}}
\newcommand{\Sup}{\reflectbox{\textnormal{\textsf{\fontfamily{phv}\selectfont S}}}\hspace{.2ex}}
\newcommand{\Inf}{\raisebox{.6\depth}{\rotatebox{-30}{\textnormal{\textsf{\fontfamily{phv}\selectfont \reflectbox{J}}}}\hspace{-.1ex}}}
\newcommand{\emax}[2]{#1 \max #2}
\newcommand{\emin}[2]{#1 \min #2}
\newcommand{\entails}{\models}
\DeclareRobustCommand*{\mirrormodels}{%
	\Relbar\joinrel\mathrel{|}%
}
\newcommand{\mirrorentails}{\mirrormodels}
\newcommand{\QSL}{\sfsymbol{QSL}\xspace}
\newcommand{\SL}{\sfsymbol{SL}\xspace}
\newcommand{\QSLfrag}{\sfsymbol{Q}\xspace}
\newcommand{\SLfrag}{\sfsymbol{S}\xspace}
\newcommand{\QSLpar}[1]{\QSL\left[ #1 \right]\xspace}
\newcommand{\SLpar}[1]{\SL\left[ #1\right]\xspace}
\newcommand{\QSLA}{\QSLpar{\Predset}}
\newcommand{\SLA}{\SLpar{\Predset}}
\newcommand{\QSH}{\sfsymbol{QSH}\xspace}
\newcommand{\eQSH}{\sfsymbol{eQSH}\xspace}
\newcommand{\atleastsymbol}{\preceq}
\newcommand{\atleast}[2]{\lceil #1 \atleastsymbol #2 \rceil}
\newcommand{\evaluationSetSymbol}{\sfsymbol{Val}}
\newcommand{\evaluationSet}[1]{\ensuremath{\evaluationSetSymbol\left[#1\right]}}
\newcommand{\NonNegRats}{\Rats_{\geq 0}}
\newcommand{\ra}{\ensuremath{\alpha}}
\newcommand{\rb}{\ensuremath{\beta}}
\newcommand{\rc}{\ensuremath{\gamma}}
\newcommand{\rd}{\ensuremath{\delta}}
\newcommand{\SLRuntime}{\text{SL-Time}}
\newenvironment{restateTheorem}[1]{%
\begingroup

\begin{theorem}}{%
\end{theorem}
\addtocounter{theorem}{-1}
\endgroup}
\newenvironment{restateLemma}[1]{%
\begingroup

\begin{lemma}}{%
\end{lemma}
\addtocounter{lemma}{-1}
\endgroup}
\newenvironment{restateDefinition}[1]{%
\begingroup

\begin{definition}}{%
\end{definition}
\addtocounter{definition}{-1}
\endgroup}
\newcommand{\sfsymbol}[1]{\textsf{\upshape {#1}}}
\newcommand{\ttsymbol}[1]{\texttt{\upshape {#1}}}
\newcommand{\wlpsymbol}{\sfsymbol{wlp}}
\newcommand{\boldwlpsymbol}{\textbf{\sfsymbol{wlp}}}
\newcommand{\wlp}[2]{\wlpsymbol\llbracket#1\rrbracket\left(#2\right)}
\newcommand{\wlpC}[1]{\wlpsymbol\llbracket#1\rrbracket}
\newcommand{\conditionalPair}[2]{{\let\oldarraystretch\arraystretch}\renewcommand{\arraystretch}{1}~\holter{~\raisebox{.5ex}{${#1}$}~}{~\raisebox{.125ex}{${#2}$}~}~\renewcommand{\arraystretch}{\oldarraystretch}}
\newcommand{\cc}{\ensuremath{C}} 
\newcommand{\guard}{\ensuremath{B}} 
\newcommand{\ee}{\ensuremath{E}} 
\newcommand{\preda}{\ensuremath{\Phi}} 
\newcommand{\sla}{\ensuremath{\varphi}} 
\newcommand{\slb}{\ensuremath{\psi}} 
\newcommand{\slc}{\ensuremath{\vartheta}}
\newcommand{\hh}{\ensuremath{h}} 
\newcommand{\sk}{\ensuremath{s}} 
\newcommand{\pp}{\ensuremath{p}} 
\newcommand{\qq}{\ensuremath{q}}
\newcommand{\FF}{\ensuremath{X}} 
\newcommand{\INV}{\ensuremath{I}} 
\newcommand{\ff}{\ensuremath{f}} 
\newcommand{\fg}{\ensuremath{g}}
\newcommand{\fh}{\ensuremath{u}}
\newcommand{\xx}{\ensuremath{x}} 
\newcommand{\xy}{\ensuremath{y}} 
\newcommand{\xz}{\ensuremath{z}}
\newcommand{\BB}{\ensuremath{B}} 
\newcommand{\bb}{\ensuremath{\pi}} 
\newcommand{\recnum}{\ensuremath{\Bbbk}}
\newcommand{\loca}{\ell}
\newcommand{\locseta}{L}
\newcommand{\terma}{\ensuremath{t}} 
\newcommand{\SKIP}{\ttsymbol{skip}}
\newcommand{\AssignSymbol}{\mathrel{\textnormal{\texttt{:=}}}}
\newcommand{\ASSIGN}[2]{\ensuremath{#1 \AssignSymbol #2}}
\newcommand{\ALLOC}[2]{\ensuremath{{#1} \AssignSymbol \mathtt{new}\left( #2 \right)}}
\newcommand{\AVAILLOC}[1]{\PosNats}
\newcommand{\dereference}[1]{\texttt{<}\,#1\,\texttt{>}}
\newcommand{\HASSIGN}[2]{\ensuremath{\dereference{#1} \AssignSymbol #2}}
\newcommand{\ASSIGNH}[2]{\ensuremath{#1 \AssignSymbol \dereference{#2}}}
\newcommand{\FREE}[1]{\ensuremath{\mathtt{free}(#1)}}
\newcommand{\SEMI}{\ensuremath{\,;\,}}
\newcommand{\COMPOSE}[2]{\ensuremath{{#1}{\,;}~ {#2}}}
\newcommand{\PCHOICE}[3]{\ensuremath{\left\{\, {#1} \,\right\}\mathrel{\left[\,#2\,\right]}\left\{\, {#3} \,\right\}}}
\newcommand{\IFSYMBOL}{\ensuremath{\textnormal{\texttt{if}}}}
\newcommand{\ELSESYMBOL}{\ensuremath{\textnormal{\texttt{else}}}}
\newcommand{\ITE}[3]{\ensuremath{\IFSYMBOL\,\left(\, {#1} \,\right)\,\left\{\, {#2} \,\right\}\,\ELSESYMBOL\,\left\{\, {#3} \,\right\}}}
\newcommand{\WHILESYMBOL}{\ensuremath{\textnormal{\texttt{while}}}}
\newcommand{\WHILE}[1]{\ensuremath{\WHILESYMBOL \left(\, {#1} \,\right)\left\{\right.}}
\newcommand{\WHILEDO}[2]{\ensuremath{\WHILESYMBOL \left(\, {#1} \,\right)\left\{\, {#2} \,\right\}}}
\newcommand{\hpgcl}{\textnormal{\sfsymbol{hpGCL}}\xspace}   
\newcommand{\Vars}{\ensuremath{\mathsf{Vars}}\xspace}   
\newcommand{\Vals}{\ensuremath{\mathsf{Vals}}\xspace}    
\newcommand{\Locs}{\ensuremath{\mathsf{Locs}}\xspace}    
\newcommand{\Nats}{\ensuremath{\mathbb{N}}\xspace}
\newcommand{\PosNats}{\ensuremath{\mathbb{N}_{>0}}\xspace}
\newcommand{\Ints}{\ensuremath{\mathbb{Z}}\xspace}
\newcommand{\Rats}{\ensuremath{\mathbb{Q}}\xspace}
\newcommand{\Probs}{\mathbb{P}}
\newcommand{\Preds}{\pot{\States}}
\newcommand{\Eone}{\mathbb{E}_{\leq 1}}
\newcommand{\Predset}{\mathfrak{A}}
\newcommand{\dom}[1]{\sfsymbol{dom}\left({#1}\right)}
\newcommand{\iverson}[1]{\left[ {#1} \right]}
\newcommand{\subst}[2]{\left[ {#1} \texttt{:=} {#2}\right]}
\newcommand{\statesubst}[2]{\left[ {#1} \texttt{:=} {#2}\right]}
\newcommand{\eval}[1]{\ensuremath{\sfsymbol{Eval} \left( #1\right)}}
\newcommand{\pot}[1]{\mathcal{P}\left({#1}\right)}
\newcommand{\sizeof}[1]{|#1|}
\newcommand{\probsizeof}[1]{|#1|_p}
\newcommand{\sem}[1]{\ensuremath{\left\llbracket {#1} \right\rrbracket}}
\newcommand{\semapp}[2]{\ensuremath{\sem{#1} #2}}
\newcommand{\bigo}{\ensuremath{\mathcal{O}}}
\newcommand{\Stacks}{\sfsymbol{Stacks}\xspace}
\newcommand{\Heaps}{\sfsymbol{Heaps}_\recnum\xspace}
\newcommand{\emptyheap}{h_\emptyset}
\newcommand{\disjoint}{\mathrel{\bot}}
\newcommand{\States}{\sfsymbol{States}\xspace}
\newcommand{\To}{\rightarrow}
\newcommand{\true}{\mathsf{true}}
\newcommand{\false}{\mathsf{false}}
\newcommand{\mydot}{\text{{\Large\textbf{.}}~}}
\newcommand{\ie}{i.e.,\xspace}
\newcommand{\eg}{e.g.,\xspace}
\newcommand{\cf}{cf.,\xspace}
\newcommand{\qand}{\quad\textnormal{and}\quad}
\newcommand{\qimplies}{\quad\textnormal{implies}\quad}
\newcommand{\eeq}{~{}={}~}
\newcommand{\lleq}{~{}\leq{}~}
\newcommand{\mmodels}{~{}\models{}~}
\newcommand{\eentails}{~{}\entails{}~}
\newcommand{\iin}{~{}\in{}~}
\newcommand{\setcomp}[2]{\left\{\, {#1} ~\middle|~ {#2} \,\right\}}
\newcounter{computationarrowsone}
\newcounter{computationarrowstwo}
\newcounter{sarrow}
\newcommand{\Update}[2]{\ensuremath{\subst{#1}{#2}}} 
\newcommand{\Lssymbol}{\ensuremath{\mathsf{ls}}}
\newcommand{\slLs}[2]{\ensuremath{\Lssymbol\left(#1,#2\right)}}
\newcommand{\cswap}{\cc_{\text{swap}}}
\newcommand{\cpop}{\cc_{\text{populate}}}
\newcommand{\ls}[2]{\textsf{ls}(#1,#2)}
\newcommand{\ils}[2]{\iverson{\ls{#1}{#2}}}
\begin{document}
\title{Foundations for Entailment Checking\\in Quantitative Separation Logic \\ (extended version)\thanks{This work is partially supported by the ERC AdG project 787914 FRAPPANT.}
}
\titlerunning{Foundations for Entailment Checking\\in Quantitative Separation Logic}
%
\author{%
	Kevin Batz\inst{1}\orcidID{0000-0001-8705-2564} 
	\and 
	Ira Fesefeldt\inst{1}\orcidID{0000-0001-7837-2611} 
	\and  
	Marvin Jansen\inst{1}  
	\and 
	Joost-Pieter Katoen\inst{1}\orcidID{0000-0002-6143-1926} 
	\and \\
	Florian Ke{\ss}ler\inst{1} 
	\and 
	Christoph Matheja\inst{2,3}\orcidID{0000-0001-9151-0441}
	\and 
	Thomas Noll\inst{1}\orcidID{0000-0002-1865-1798}
}
%
\authorrunning{Batz \emph{et al.}}
%
\institute{Software Modeling and Verification Group, RWTH Aachen University, Germany\\
	\email{\{kevin.batz,fesefeldt,katoen,noll\}@cs.rwth-aachen.de}
	\and
	Programming Methodology Group, ETH Zürich, Switzerland\\
	\and
	Technical University of Denmark, 
	\email{chmat@dtu.dk}
	}
%
\maketitle              
\begin{abstract}
Quantitative separation logic ($\QSL$) is an extension of separation logic ($\SL$) for the verification of probabilistic pointer programs. In $\QSL$, formulae evaluate to real numbers instead of truth values, e.g., the probability of memory-safe termination in a given symbolic heap. As with $\SL$, one of the key problems when reasoning with $\QSL$ is \emph{entailment}: does a formula $\ff$ entail another formula $\fg$?

 We give a generic reduction from entailment checking in $\QSL$ to entailment checking in $\SL$. This allows to leverage the large body of $\SL$ research for the automated verification of probabilistic pointer programs. We analyze the complexity of our approach and demonstrate its applicability. In particular, we obtain the first decidability results for the verification of such programs by applying our reduction to a quantitative extension of the well-known symbolic-heap fragment \mbox{of separation logic.}
	%

\end{abstract}
\section{Introduction}
\emph{Separation logic}~\cite{Ishtiaq2001BI} (\SL) is a popular formalism for Hoare-style verification of
imperative, heap-manipulating and, possibly, concurrent programs.
Its assertion language extends first-order logic with two connectives---the separating
conjunction $\sepcon$ and the magic wand $\sepimp$---that enable concise specifications
of how program memory, or other resources, can be split-up and combined.
\SL builds upon these connectives to champion \emph{local reasoning} about the resources
employed by programs. Consequently, program parts can be verified by considering only those
resources they actually access---a crucial property for building scalable 
tools including automated verifiers~\cite{Piskac2013Automating,BerdineCO05,ChinDNQ12,0001SS17,JacobsSPVPP11}, 
static analyzers~\cite{BerdineCCDOWY07,GotsmanBCS07,CalcagnoDOY11}, 
and interactive theorem provers~\cite{JungKJBBD18}.
At the foundation of almost any automated approach based on \SL, lies the \emph{entailment problem}
$\varphi \entails \psi$: are all models of \SL formula $\varphi$ also models of \SL formula $\psi$?
For example, Hoare-style verifiers need to solve entailments whenever they invoke the
rule of consequence, and static analyzers ultimately solve entailments to perform abstraction.
While undecidable in general~\cite{Antonopoulos2014Foundations}, the wide adoption of \SL and the central
role of the entailment problem have triggered a massive research effort to
identify \SL fragments with a decidable entailment problem~\cite{Berdine2004Decidable,Cook2011Tractable,Echenim2021Decidable,Echenim2021Unifying,Iosif2013Tree,Iosif2014Deciding,Katelaan2019Effective,Matheja2020Complete,Reynolds2016Decision,Demri2018,Echenim2020Bernays},
 and to build practical entailment solvers~\cite{Piskac2013Automating,BerdineCO05,ChinDNQ12,TaLKC18}.

\emph{Probabilistic programs}, that is, programs with the ability to sample from probability
distributions, are an increasingly popular formalism for, amongst others,
designing efficient randomized algorithms~\cite{0012859} and
describing uncertainty in systems~\cite{Gordon2014Probprogs,CarbinMR16}.
While formal reasoning techniques for probabilistic programs exist since the 80s
(\cf~\cite{Kozen1997Semantics,Kozen1983Probabilistic,Saheb1978}), they are rarely automated and typically target only
simplistic programming languages.
For example, verification techniques that support reasoning about both randomization
and data structures are, with notable exceptions~\cite{Tassarotti2019Separation,Batz2019Quantitative}, rare---a surprising situation given that randomized algorithms typically rely on dynamic data structures.

\emph{Quantitative separation logic} (\QSL)
is a weakest-precondition-style verification technique that targets randomized algorithms manipulating complex data structures; it marries \SL and weakest preexpectations~\cite{Morgan1996}---a well-established calculus for reasoning about probabilistic programs.
In contrast to classical \SL, \QSL's assertion language does not consist of predicates, which evaluate to Boolean values, 
but \emph{expectations} (or: random variables), which evaluate to real numbers.
\QSL has been successfully applied to the verification of randomized algorithms,
and \QSL expectations have been formalized in Isabelle/HOL~\cite{haslbeck_diss}.
However, reasoning is far from automated---mainly due to the lack of decision procedures or
solvers for entailments between expectations in \QSL.

This paper presents, to the best of our knowledge, the first technique for
automatically deciding \QSL entailments.
More precisely, we reduce \QSL quantitative entailments to classical entailments between SL formulas.
Hence, we can leverage two decades of separation logic research
to advance \QSL entailment checking, and thus also automated reasoning about probabilistic programs.

\paragraph{Contributions.} We make the following technical contributions:
\begin{itemize}
  \item We present a generic construction that reduces the entailment problem for
        quantitative separation logic
        to solving multiple entailments in fragments of \SL; 
        if we reduce to an \SL fragment where entailment is decidable, 
        our construction yields a \QSL fragment with a decidable entailment problem.
  \item We provide simple criteria for whether one can leverage
        a decision procedure or a practical entailment solver for \SL to build an entailment solver for \QSL.
  \item We analyze the complexity of our approach parameterized in the complexity of solving entailments in a given \SL fragment;
        whenever we identify a decidable \QSL fragment, it is thus accompanied by upper complexity bounds.
   \item We use our construction to derive the \QSL fragment of quantitative symbolic heaps
         for which entailment is decidable via a reduction to the Bernays-Sch\"onfinkel-Ramsey
         fragment of \SL~\cite{Echenim2020Bernays}.
\end{itemize}

\paragraph{Outline.} 
\Cref{sec:intensional} introduces (quantitative) separation logic. \Cref{sec:motivation} motivates our approach by providing the foundations for probabilistic pointer program verification with $\QSL$ together with several examples. We present the key ideas and our main contribution of reducing $\QSL$ entailment checking to $\SL$ entailment checking in \Cref{sec:entailment_checking}. We analyse the complexity of our approach in \Cref{sec:complexity}. In \Cref{sec:applications}, we apply our approach to obtain the first decidability results for probabilistic pointer verification. Finally, \Cref{sec:related} discusses related work and \Cref{sec:conclusion} concludes.

\cmcomment{Remove for arxiv, include for CR}

\begin{table}[h]
	\centering
	\caption{Metavariables used throughout this paper.}%
	\label{tab:metavariables}%
	\renewcommand{\arraystretch}{1.25}%
	\begin{tabular}{l@{\qquad}l@{\qquad}l@{\qquad}l}
		\hline\hline
		\textbf{Entities}			& \textbf{Metavariables}  				& \textbf{Domain}				\\
		\hline
		Natural numbers 		& $n,\, i,\, j,\, k$ 				                        & $\Nats$				             					              \\
		Rational probabilities 		& $\pp,\qq,\ra, \rb, \rc, \rd$ 				                           & $\Probs$									              \\
		%
		%
		%
		%
		\\
		Programs                    & $\cc$                                                           &$\hpgcl$                                                     \\
		Stacks          & $\sk$                                       &$\Stacks$                                                     \\%
		Heaps          & $\hh$                                       &$\Heaps$                                                     \\
		\\
		Variables 		                & $\xx,\xy,\xz$ 				                           & $\Vars$			         						              \\
		Values 		                & $v,w$ 				                           & $\Vals$			         						              \\
		Locations 		                & $\loca$ 				                           & $\Locs$			         						              \\
		\\
	     Predicates                   & $\preda$            & $\Preds$                                                  \\
	     %
	     %
	     one-bounded expectations               & $\FF$                              & $\Eone$                                                                  \\       %
         \\
         \SL formulae                         &     $\sla,\slb,\slc$                               &             $\SLpar{\cdot}$ \\
         Pure formulae           & $\bb$   &  &                                 \\
         \QSL formulae                         &     $\ff,\fg,\fh,\INV$                               &             $\QSLpar{\cdot}$ \\
	     
%
		\hline\hline
	\end{tabular}%
	\renewcommand{\arraystretch}{1}%
\end{table}

\section{(Quantitative) Separation Logic}
\label{sec:intensional}
\subsection{Program States}\label{sec:qsl:program-states}
Let $\Vals$ be a countably infinite set of \emph{values}, and let $\Vars$ be a countably infinite set of variables with domain $\Vals$. The set of \emph{stacks} is given by
$$
	{\Stacks} = \setcomp{\sk}{\sk \colon \Vars \To \Vals}~.
$$ 
%
%
Let $\Locs \subset \Vals$ be an infinite set of \emph{locations}. We denote locations by $\loca$ and variations thereof.
We fix a natural number $\recnum \geq 1$ and a heap model where finite sets of locations are mapped to fixed-size records over $\Vals$ of size $\recnum$. Put more formally, the set of \emph{heaps} is given by
\begin{align*}
	{\Heaps} \eeq \setcomp{\hh}{\hh \colon \locseta \To \Vals^\recnum,~ \locseta \subseteq \Locs,~ |\locseta| < \infty}.
\end{align*}
%
%
The set of \emph{program states} is then given by
$$
	{\States} \eeq \setcomp{(\sk, \hh)}{\sk \in \Stacks,~ \hh \in \Heaps}~.
$$ 
%
Given a program state $(\sk, \hh)$ and an expression $\terma$ over $\Vars$, we denote by \textbf{$\terma(\sk)$} the evaluation of expression $\terma$ in $\sk$, \ie the value that is obtained by evaluating $\terma$ after replacing any occurrence of any variable $\xx \in \Vars$ in $\terma$ by the value $\sk(\xx)$.
We write $\sk\subst{x}{v}$ to indicate that we set variable $x$ to value $v \in \Vals$ in $\sk$, i.e.\footnote{We use $\lambda$-expressions to denote functions: Function $\lambda X \mydot f$ applied to an argument $v$ evaluates to $f$ in which every occurrence of $X$ is replaced by $v$.},	
\begin{align*}
	\sk\statesubst{x}{v} \eeq \lambda\, y\mydot \begin{cases}
		v, & \textnormal{if } y = x\\
		\sk(y), & \textnormal{if } y \neq x.
	\end{cases}
\end{align*}
For heap $\hh$, $\hh\Update{\loca}{(v_1,\ldots,v_\recnum)}$ is defined analogously. 
For a given heap $\hh \colon \locseta \To \Vals^\recnum$, we denote by $\dom{\hh}$ its \emph{domain} $\locseta$. 
%
%
%
Two heaps $\hh_1$, $\hh_2$ are \emph{disjoint}, denoted $\hh_1 \disjoint \hh_2$, if their domains do not overlap, \ie $\dom{\hh_1} \cap \dom{\hh_2} = \emptyset$.
%
%
The \emph{disjoint union} of two disjoint heaps $\hh_1 \colon \locseta_1 \To \Vals^\recnum$ and $\hh_2 \colon \locseta_2 \To \Vals^\recnum$ is
\begin{align*}
  \hh_1 \sepcon \hh_2\colon \dom{\hh_1} \mathrel{\dot{\cup}} \dom{\hh_2} \To \Vals^\recnum, ~ 
	\bigl(\hh_1 \sepcon \hh_2 \bigr) (\loca) \eeq \begin{cases}\hh_1(\loca), & \textnormal{if } \loca \in \dom{\hh_1} \\ \hh_2(\loca), & \textnormal{if } \loca \in \dom{\hh_2}. \end{cases}
\end{align*}
%
%
%
%

\subsection{Separation Logic}
\label{sec:sl}
\begin{table}[t]
	\centering
	\caption{Semantics of $\SLA$ formulae.}%
	\label{tab:semantics_sl}%
	\renewcommand{\arraystretch}{1.4}%
	\begin{tabular}{l@{\qquad}l@{\qquad}l@{\qquad}l}
		\hline\hline
		$\sla$			&          $(\sk,\hh)\models \sla$ iff			\\
		\hline
		$\slc$		& $(\sk,\hh)\in \sem{\slc}$		              \\
		$\neg \slb$		& $(\sk,\hh)\not\models \slb$		              \\
		$\slb \wedge \slc$		& $(\sk,\hh)\models \slb$ and $(\sk,\hh)\models \slc$ 		              \\
		$\slb \vee \slc$		& $(\sk,\hh)\models \slb$ or $(\sk,\hh)\models \slc$ 		              \\
		$\exists \xx \colon \slb$  & $(\sk\statesubst{\xx}{v},\hh)\models \slb$ for some  $v \in \Vals$ \\
		$\forall \xx \colon \slb$  & $(\sk\statesubst{\xx}{v},\hh)\models \slb$ for all $v \in \Vals$\\
		$\slb \sepcon \slc$ &  $(\sk,\hh_1) \models \slb$ and $(\sk,\hh_2)\models \slc$ for some $\hh_1 \sepcon \hh_2 = \hh$ \\
		$\slb \sepimp \slc$ &  $(\sk,\hh\sepcon\hh')\models \slc$ for all $\hh' \disjoint \hh$ with $(\sk,\hh') \models \slb$ \\
		\hline\hline
	\end{tabular}%
	\renewcommand{\arraystretch}{1}%
	%
\end{table}
 A \emph {predicate} $\preda \in \Preds$ is a set of states. A predicate $\preda$ is called \emph{pure} if it does not depend on the heap, i.e, for every stack $\sk$ and heaps $\hh,\hh'$, we have $(\sk,\hh) \in \preda$ iff $(\sk,\hh')\in \preda$. 
 %
 %
 
%
%
We consider a separation logic $\SLA$ with standard semantics \cite{Reynolds2002Separation}.
 A distinguishing aspect is that $\SLA$ is parametrized by a set $\Predset$ of predicate symbols $\slb$ with \emph{given} semantics $\sem{\slb} \in \Preds$. We often identify predicate symbols $\slb$ with their predicates $\sem{\slb}$. Elements of $\Predset$ build the atoms of $\SLA$.
 Our reduction from quantitative entailments to qualitative entailments does not depend on the choice of these predicate symbols. We therefore take a generic approach that allows for user-defined atoms, e.g., list or tree predicates. 
%
%
\begin{definition}
	Let $\Predset$ be a countable set of predicate symbols. Formulae in \emph{separation logic $\SLA$} with atoms in $\Predset$ adhere to the grammar
	\begin{align*}
	\sla \quad \rightarrow \quad 
	&\slc 
	~\mid~ \neg \sla 
	 ~\mid~ \sla \wedge \sla  
	~\mid~ \sla \vee \sla  
	~\mid~ \exists \xx \colon \sla 
	~\mid~ \forall \xx \colon \sla 
	~\mid~ \sla \sepcon \sla 
	~\mid~ \sla \sepimp \sla ~, 
	\end{align*} 
	where $\slc \in \Predset$,  and where $\xx \in \Vars$. \hfill $\triangle$
\end{definition}
The Boolean connectives $\neg$, $\wedge$, and $\vee$ as well as the quantifiers $\exists$ and $\forall$ are standard. $\sepcon$ is the \emph{separation conjunction} and $\sepimp$ is the \emph{magic wand}.

The semantics $\sem{\sla} \in \Preds$ of a formula $\sla \in \SLA$ is defined by induction on the structure of $\sla$ as shown in \Cref{tab:semantics_sl}. Recall that we assume the semantics $\sem{\slb}$ of predicate symbols $\slb\in \Predset$ to be given. We often write $(\sk,\hh)\models \sla$ instead of $(\sk,\hh) \in \sem{\sla}$. For $\sla,\slb \in \SLA$, we say that \emph{$\sla$ entails $\slb$}, denoted $\sla \entails \slb$, if whenever $(\sk,\hh) \in \States$ such that $(\sk,\hh)\models \sla$, also $(\sk,\hh)\models \slb$.
%
%
%
%
\begin{example}\label{ex:sl}
	\begin{changed}
	Let $\Vals = \Ints$, $\Locs=\PosNats$, and $\recnum = 1$. A \emph{term} $\terma$ is either a variable $\xx\in\Vars$ or the constant $0 \in \Vals$. 
	The set $\Predset$ of \mbox{predicate symbols is} 
	\begin{align*}
		\Predset \eeq \{\: \true, \slemp, \slsingleton{\xx}{\terma},
	\terma=\terma', \terma\neq\terma', \slLs{\terma}{\terma'} ~\mid~ \xx \in \Vars, \terma, \terma'~\text{terms}\:\}
	\end{align*}
	%
	%
	%
	%
	Here, apart from standard predicates for $\true$, equalities,
	 and disequalities,
	\begin{enumerate}
		\item $\slemp$ is 
		      the \emph{empty-heap predicate}, i.e., 
		\[
		(\sk,\hh)\models\slemp \quad \text{iff}\quad \dom{\hh}= \emptyset~,
		\]
		\item $\slsingleton{\xx}{\terma}$ is the \emph{points-to predicate}, i.e., 
		\[
		(\sk,\hh)\models \slsingleton{\xx}{\terma}
		\quad \text{iff}\quad 
		\dom{\hh} = \{\sk(\xx)\}~\text{and}~ \hh(\sk(\xx)) = \terma(\sk)~,
		\]
		\item the \emph{list predicate}  $\slLs{\terma}{\terma'}$ asserts that the heap models a singly-linked list segment from $\terma$ to $\terma'$:
		\begin{align*}
		 &(\sk,\hh)\models\slLs{\terma}{\terma'} \\
		 \text{iff} \quad&
		 \text{$\dom{\hh} = \emptyset$ and $\terma(\sk) = \terma'(\sk)$ or} \\
		 &\text{there exist $n \geq 1$ and terms $\terma_1,\ldots,\terma_n$ with $\terma_n = \terma'$ such that} \\
		  &\text{$(\sk,\hh)\models \slsingleton{\terma}{\terma_1}\sepcon\ldots\sepcon\slsingleton{\terma_{n-1}}{\terma_n}$~.}
		 %
		\end{align*}
	\end{enumerate}
	\end{changed}
	In this setting, $\SLA$ contains, e.g., the well-known \emph{symbolic heap fragment} of separation logic with lists. 
	For instance, the $\SLA$ formula
	\[
	\exists y\colon ~ \exists z\colon ~  \slsingleton{x}{y} \sepcon \slsingleton{y}{z} \sepcon \slLs{z}{0}~.
	\] 
	asserts that the heap consists of a list with head $\xx$ of length at least $2$. \hfill $\triangle$
\end{example}

\subsection{Quantitative Separation Logic}
\begin{table}[t]
	\centering
	\caption{Semantics of $\QSLA$ formulae.}%
	\label{tab:semantics_qsl}%
	\renewcommand{\arraystretch}{1.4}%
	\begin{tabular}{l@{\qquad}l@{\qquad}l@{\qquad}l}
		\hline\hline
		$\ff$			&          $\semapp{\ff}{(\sk,\hh)}$			\\
		\hline
		$\iverson{\slb}$		& $\iverson{\slb}(\sk,\hh)$	              \\
		$\iverson{\bb} \cdot \fg + \iverson{\neg\bb} \cdot \fh $ & $\iverson{\bb}(\sk,\hh) \cdot \semapp{\fg}{(\sk,\hh)} + \iverson{\neg\bb}(\sk,\hh) \cdot \semapp{\fh}{(\sk,\hh)}$ \\
		$\qq \cdot \fg + (1-\qq) \cdot \fh $ & $\qq \cdot \semapp{\fg}{(\sk,\hh)} + (1-\qq) \cdot \semapp{\fh}{(\sk,\hh)}$ \\
		$\fg \cdot \fh $  & $\semapp{\fg}{(\sk,\hh)} \cdot \semapp{\fh}{(\sk,\hh)}$ \\
		$1-\fg$ & $1-\semapp{\fg}{(\sk,\hh)}$  \\
		$\emax{\fg}{\fh}$ & $\max \{\semapp{\fg}{(\sk,\hh)}, \semapp{\fh}{(\sk,\hh)} \} $\\
		$\emin{\fg}{\fh}$ & $\min \{\semapp{\fg}{(\sk,\hh)}, \semapp{\fh}{(\sk,\hh)} \} $\\
		$\Sup \xx \colon \fg $ & $\max \big\{\semapp{\fg}{(\sk\statesubst{\xx}{v},\hh)} ~\mid~ v \in \Vals \big\} $\\
		$\Inf \xx \colon \fg $ & $\min \big\{\semapp{\fg}{(\sk\statesubst{\xx}{v},\hh)} ~\mid~ v \in \Vals \big\} $\\
		$\fg \sepcon \fh $ & $\max \left\{\semapp{\fg}{(\sk,\hh_1)} \cdot \semapp{\fh}{(\sk,\hh_2)} ~\mid~ \hh = \hh_1 \sepcon \hh_2 \right\}$ \\
		$\iverson{\slb} \sepimp \fg$ & $\inf \left\{ \semapp{\fg}{(\sk,\hh\sepcon\hh')}  ~\mid~ \hh' \disjoint \hh ~\text{and}~ \iverson{\slb}(\sk,\hh) = 1 \right\}$ \\
		\hline\hline
	\end{tabular}%
	\renewcommand{\arraystretch}{1}%
	%
\end{table}
In quantitative separation logic \cite{Batz2019Quantitative,Matheja2020Automated}, formulae evaluate to non-negative real numbers or infinity instead of truth values. By conservatively extending the weakest preexpectation calculus by McIver \& Morgan \cite{McIver2005Abstraction}, this enables the compositional verification of probabilistic pointer programs by reasoning about expected list-sizes, probabilities of terminating with an empty heap, and alike.

We consider here a fragment of quantitative separation logic suitable for reasoning about the likelihood of events in probabilistic pointer programs such as, e.g., the probability of terminating in a given symbolic heap. The formulae we consider evaluate to \emph{rational probabilities} rather than arbitrary reals or infinity. We denote the set $[0,1]\cap \NonNegRats$ of rational probabilities by $\Probs$. Like $\SLA$, quantitative separation logic is parameterized by a set $\Predset$ of predicate symbols $\slb$ with \emph{given} semantics $\sem{\slb}\in \Preds$, building the atoms of $\QSLA$.

\begin{definition}
Let $\Predset$ be a countable set of predicate symbols. Formulae in \emph{quantitative separation logic $\QSLA$} with atoms in $\Predset$ \mbox{adhere to the grammar}
	\begin{align*}
		\ff \quad \rightarrow \quad 
		&\iverson{\slb} 
		~\mid~ \iverson{\bb} \cdot \ff + \iverson{\neg\bb} \cdot \ff 
		~\mid ~ \qq \cdot \ff + (1-\qq) \cdot \ff  
		~\mid~ \ff \cdot \ff  \\
	    &\mid~ 1 - \ff  
		~\mid~ \emax{\ff}{\ff} 
		~\mid~ \emin{\ff}{\ff} 
		~\mid~ \Sup \xx \colon \ff 
		~\mid~ \Inf \xx \colon \ff \\
		&\mid~ \ff \sepcon \ff 
		~\mid~ \iverson{\slb} \sepimp \ff ~, 
	\end{align*} 
    where $\slb,\bb \in \Predset$ with $\bb$ pure,  $\qq \in \Probs$, and where $\xx \in \Vars$. \hfill $\triangle$

\end{definition}
The semantics of a formula $\ff \in \QSLA$ is a \emph{(one-bounded) expectation}. The set $\Eone$ of one-bounded expectations is defined as
\[
\Eone \eeq \left\{ \FF ~\mid~\FF\colon \States \to [0,1] \right\}~.
\]
We use the \emph{Iverson bracket}~\cite{Iverson1962} notation $\iverson{\preda}$ to associate with predicate $\preda$ its indicator function.
Formally, 
\begin{align*}
\iverson{\preda} \colon\quad \States \To \{0,1\},\quad \iverson{\preda}(\sk, \hh) \eeq \begin{cases}
1, & \textnormal{if $(\sk, \hh) \in \preda$}\\
0, & \textnormal{if $(\sk, \hh) \not\in \preda$}~.
\end{cases}
\end{align*} 
Given a predicate symbol $\slb$, we often write $\iverson{\slb}$ instead of $\iverson{\sem{\slb}}$.
The semantics $\sem{\ff} \in \Eone$ of $\ff \in \QSLA$ is defined by induction on the structure of $\ff$ in \Cref{tab:semantics_qsl}. We write $\ff \equiv\fg$ if $\ff$ and $\fg$ are \emph{equivalent}, i.e. if $\sem{\ff} = \sem{\fg}$. Infima and suprema are taken over the complete lattice $([0,1], \leq)$. In particular, $\inf \emptyset =1$ and $\sup \emptyset = 0$.
\begin{theorem} \label{thm:qsl_well_defined}
	The semantics of $\QSLA$ formulae is well-defined, i.e., for all $\ff\in \QSLA$, we have
	%
	$\sem{\ff} \in\Eone.$
\end{theorem}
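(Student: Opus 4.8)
The plan is to prove the statement by structural induction on the grammar of $\QSLA$ formulae. The claim is that for every $\ff \in \QSLA$ and every state $(\sk,\hh) \in \States$, the value $\semapp{\ff}{(\sk,\hh)}$ is well-defined and lies in $[0,1]$. "Well-defined" here carries a subtle point: several clauses in \Cref{tab:semantics_qsl} involve suprema or infima over possibly infinite index sets (the $\Sup$, $\Inf$, $\sepcon$, and $\sepimp$ clauses), and the table writes these as $\max$ and $\min$; part of the job is to justify that these extrema are actually attained (or, failing that, to reconcile the $\max/\min$ notation with the $\sup/\inf$ over the complete lattice $([0,1],\leq)$ mentioned just before the theorem). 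Since $[0,1]$ is a complete lattice, arbitrary suprema and infima exist, so the semantics is at least well-defined as an element of $[0,1]$; I would remark that this is the reading intended by the sentence "Infima and suprema are taken over the complete lattice $([0,1],\leq)$."

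\textbf{Base case.} For $\ff = \iverson{\slb}$ with $\slb \in \Predset$, the semantics is $\iverson{\slb}(\sk,\hh) \in \{0,1\} \subseteq [0,1]$ by definition of the Iverson bracket, and it is clearly well-defined since $\sem{\slb}$ is a given predicate.

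\textbf{Inductive step.} Assume the claim holds for all proper subformulae. I would then check each remaining production:
\begin{itemize}
  \item \emph{Convex/conditional combinations} $\iverson{\bb}\cdot\fg + \iverson{\neg\bb}\cdot\fh$ and $\qq\cdot\fg + (1-\qq)\cdot\fh$: since $\iverson{\bb}(\sk,\hh)$ and $\iverson{\neg\bb}(\sk,\hh)$ are complementary values in $\{0,1\}$, and $\qq,1-\qq \in [0,1]$ with $\qq \in \Probs$, the result is a convex combination of two values in $[0,1]$, hence in $[0,1]$.
  \item \emph{Product} $\fg\cdot\fh$: a product of two numbers in $[0,1]$ lies in $[0,1]$.
  \item \emph{Complement} $1-\fg$: if $\semapp{\fg}{(\sk,\hh)} \in [0,1]$ then $1 - \semapp{\fg}{(\sk,\hh)} \in [0,1]$.
  \item \emph{Pointwise max/min} $\emax{\fg}{\fh}$, $\emin{\fg}{\fh}$: immediate, $[0,1]$ is closed under binary $\max$ and $\min$.
  \item \emph{Quantifiers} $\Sup\,\xx\colon\fg$ and $\Inf\,\xx\colon\fg$: the set $\{\semapp{\fg}{(\sk\statesubst{\xx}{v},\hh)} \mid v \in \Vals\}$ is a nonempty subset of $[0,1]$ by the induction hypothesis, so its supremum (resp.\ infimum) in the complete lattice $([0,1],\leq)$ exists and lies in $[0,1]$.
  \item \emph{Separating conjunction} $\fg\sepcon\fh$: the index set ranges over all decompositions $\hh = \hh_1 \sepcon \hh_2$, each contributing $\semapp{\fg}{(\sk,\hh_1)}\cdot\semapp{\fh}{(\sk,\hh_2)} \in [0,1]$; taking the supremum over this nonempty set (it always contains the decomposition $\hh \sepcon \emptyheap$) yields a value in $[0,1]$. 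Here I would note that $\dom{\hh}$ is finite, so there are only finitely many decompositions, and the supremum is in fact a maximum — this justifies the $\max$ in the table.
  \item \emph{Magic wand} $\iverson{\slb}\sepimp\fg$: the index set $\{\hh' \mid \hh' \disjoint \hh,\ \iverson{\slb}(\sk,\hh') = 1\}$ may be empty (then the infimum is $1 \in [0,1]$, by the convention $\inf\emptyset = 1$) or nonempty (then each $\semapp{\fg}{(\sk,\hh\sepcon\hh')} \in [0,1]$ by the induction hypothesis, and the infimum lies in $[0,1]$).
\end{itemize}

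\textbf{The main obstacle} I anticipate is not any single calculation but the bookkeeping around the $\max/\min$-versus-$\sup/\inf$ discrepancy in the table for the $\Sup$, $\Inf$, and $\sepimp$ clauses: over an infinite $\Vals$ the extrema in the $\Sup$/$\Inf$ clauses need not be attained, so the honest statement is that the semantics is the supremum/infimum in $([0,1],\leq)$, which exists by completeness — and this is exactly what makes the theorem true without further hypotheses. The $\sepcon$ clause is the one place where finiteness of $\dom{\hh}$ genuinely makes the maximum attained. I would make this point explicitly once and then let the structural induction run routinely; the whole argument is short, the only content being the closure properties of $[0,1]$ under the arithmetic operations and under arbitrary suprema/infima.
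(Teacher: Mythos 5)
Your proof is correct for the theorem as literally stated and follows the same structural induction as the paper, but it diverges on the one substantive point. Where you resolve the $\max$/$\min$-versus-$\sup$/$\inf$ tension for the $\Sup$, $\Inf$ and $\sepcon$ clauses by retreating to completeness of the lattice $([0,1],\leq)$, the paper instead shows that the relevant sets of \emph{values} are finite, so the extrema are genuinely attained and the $\max$/$\min$ notation of \Cref{tab:semantics_qsl} is literally justified. Concretely, the set $\big\{\semapp{\fg}{(\sk\statesubst{\xx}{v},\hh)} \mid v \in \Vals\big\}$ is a subset of $\eval{\fg}$, which is finite by \Cref{thm:evalset} even though $\Vals$ is infinite; your remark that these extrema ``need not be attained'' is therefore false in this setting --- it is precisely the finiteness of $\eval{\fg}$ (Observation~1) that you are missing. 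This is not merely cosmetic: the attainment of the maximum is used later, e.g.\ in the $\Sup\,\xx\colon\fg$ case of the proof of \Cref{thm:atleast}, where $\ra \leq \semapp{\Sup\,\xx\colon\fg}{(\sk,\hh)}$ must yield a witness $v$ with $\ra \leq \semapp{\fg}{(\sk\statesubst{\xx}{v},\hh)}$, which fails for a non-attained supremum. Your handling of $\sepcon$ via finiteness of $\dom{\hh}$ (hence finitely many decompositions $\hh = \hh_1 \sepcon \hh_2$) is actually more elementary than the paper's, which cites \Cref{thm:evalset} for that case too, and your appeal to lattice completeness has the minor virtue of avoiding the paper's forward reference to a later theorem; but to match what the paper proves (and what the rest of the paper relies on) you should add the observation that each $\eval{\fg}$ is finite, either by forward reference or by strengthening the induction hypothesis to carry finiteness of the value set alongside well-definedness.
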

\begin{proof}
	By induction on the structure of $\ff$. 
	For details see \Cref{app:qsl_well_defined}.\cmcomment{TR / App.}
\end{proof}
Let us go over the individual constructs. Formulae of the form $\iverson{\slb}$ are the atomic formulae.  $\iverson{\bb} \cdot \fg + \iverson{\neg\bb} \cdot \fh$ is a \emph{Boolean} choice between $\fg$ and $\fh$ that does not depend upon the heap since $\sem{\bb}$ is pure.  $\qq \cdot \fg + (1-\qq) \cdot \fh$ is a convex combination of $\fg $ and $\fh$. $\fg \cdot \fh $ is the pointwise multiplication of $\fg$ and $\fh$. $1-\fg$ is the quantitative (or probabilistic) negation of $\fg$. $\emax{\fg}{\fh}$ and $\emin{\fg}{\fh}$ is the pointwise maximum and minimum of $\fg$ and $\fh$, respectively.

 $\Sup \xx \colon \fg$ is the \emph{supremum quantification} that, given a state $(\sk,\hh)$, evaluates to the supremum of the set obtained from evaluating $\fg$ in $(\sk\statesubst{\xx}{v}, \hh)$ for every value $v \in \Vals$. In our setting, this supremum is actually a maximum. Dually, $\Inf\xx \colon \fg$ is the \emph{infimum quantification}. 
 
 $\sepcon$ and $\sepimp$ are the quantitative analogous of the separating conjunction and the magic wand from separation logic as defined in \cite{Batz2019Quantitative}. $\fg \sepcon \fh$ is the \emph{quantitative separating conjunction} of $\fg$ and $\fh$. Intuitively speaking, whereas the \emph{qualitative} separating conjunction maximizes a \emph{truth value} under all appropriate partitionings of the heap, the \emph{quantitative} separating conjunction maximizes a \emph{probability}. $\iverson{\slb} \sepimp \fh$ is the \emph{quantitative magic wand}. Whereas the \emph{qualitative} magic wand minimizes a \emph{truth value} under all appropriate extensions of the heap, the \emph{quantitative} magic wand minimizes a \emph{probability}. For an in-depth treatment of these connectives, we refer to \cite{Batz2019Quantitative}.

\begin{example}
	Let $\Vals$, $\Locs$, $\recnum$, and $\Predset$ be as in \Cref{ex:sl}. Then $\QSLA$ contains, e.g., a quantitative extension of the symbolic heap fragment of separation logic with lists. For instance, the $\QSLA$ formula 
	\[
	0.7 \cdot(\Sup y\colon ~ \Sup z\colon ~  \singleton{x}{y} \sepcon \singleton{y}{z} \sepcon \ils{z}{0}) + 0.3 \cdot \iverson{\slemp}
	\] 
	expresses that with probability $0.7$ the heap consists of a list with head $\xx$ of length at least $2$ and that with probability $0.3$ the heap is empty. \hfill $\triangle$
\end{example}

Finally, given $\ff,\fg \in \QSLA$, we say that $\ff$ \emph{entails} $\fg$, denoted $\ff \entails \fg$, if
\[
\text{for all}~ (\sk,\hh) \in \States\colon \quad \semapp{\ff}{(\sk,\hh)} \lleq \semapp{\fg}{(\sk,\hh)}~.
\]
\cm{Quantitative entailments $\ff \entails \fg$ generalize classical entailments in the sense that $\ff$ (pointwise) lower-bounds the quantity $\fg$. 
For example, if $g$ assigns to each state the probability that some program $\cc$ terminates without a memory error, then the entailment $\iverson{\true} \entails \fg$ means that $\cc$ terminates almost-surely, i.e., with probability one.
}
Our problem statement now reads as follows: Reduce entailment checking in $\QSLA$ to checking finitely many entailments in $\SLA$.

%
%

%
%
%
%
%
%
%
%
%
%
%
%
%
%
%
%

\section{Entailments in Probabilistic Program Verification}
\label{sec:motivation}
Our primary motivation for studying the entailment problem for quantitative
separation logic is to provide foundations for the automated verification
of probabilistic pointer programs.
In this section, we consider examples of such programs
written in \hpgcl---an extension of McIver \& Morgan's probabilistic guarded command
language (\cf \cite{McIver2005Abstraction}) by heap-manipulating instructions--- and the entailments that
arise from their verification.
We briefly formalize reasoning about \hpgcl programs with weakest liberal preexpectations;
for a thorough introduction of \hpgcl programs and techniques for their verification,
we refer to~\cite{Batz2019Quantitative,Matheja2020Automated}.

\subsection{Heap-manipulating pGCL}
Recall from \Cref{sec:qsl:program-states} that heaps map memory locations to fixed-size records (or tuples) of length $\recnum \geq 1$.
The set of programs in \emph{heap-manipulating probabilistic guarded command language} for $\recnum = 1$, \changedinline{$\Vals=\Ints$ and $\Locs=\PosNats$}, denoted $\hpgcl$, is given by the grammar
\begin{align*}
    \begin{aligned}
    \cc  ~~\longrightarrow~~ &\SKIP & \text{(effectless program)} \\
    & |~~ \ASSIGN{x}{\ee} & \text{(assignment)} \\
    & |~~ \PCHOICE{\cc}{\pp}{\cc} & \text{(prob. choice)}\\
    & |~~ \COMPOSE{\cc}{\cc} & \text{(seq. composition)} \\
    & |~~ \ITE{\guard}{\cc}{\cc} & \text{(conditional choice)} \\
    & |~~ \WHILEDO{\guard}{\cc} & \text{(loop)} \\
    & |~~ \ALLOC{x}{\ee} & \text{(allocation)} \\
    & |~~ \FREE{\ee}, & \text{(disposal)} \\
    & |~~ \ASSIGNH{x}{\ee} & \text{(lookup)} \\
    & |~~ \HASSIGN{\ee}{\ee'} & \text{(mutation)} \\
    \end{aligned}
\end{align*}
where $x\in\Vars$, $\pp\in\Probs$, 
$\ee,\ee'$ are arithmetic expressions and
$\guard$ is a Boolean expression.
We assume that expressions do not depend on the heap.
For now, we do not fix a specific syntax for expressions but assume evaluation mappings 
\begin{align*}
  \ee\colon \Stacks \to \Ints
  \qand
  \guard\colon \Stacks \to \{\true,\false\}~.
\end{align*}
In addition to the usual control flow structures for sequential composition, conditionals, and loops, 
$\SKIP$ does nothing, 
$\ASSIGN{x}{\ee}$ assigns the value $\ee(\sk)$ obtained from evaluating expression $\ee$ in the current program state $(\sk,\hh)$ to $x$, and the probabilistic choice
$\PCHOICE{\cc_1}{\pp}{\cc_2}$ flips a coin with bias $p$---it executes $\cc_1$ if the coin flip yields heads, and $\cc_2$ otherwise.
The allocation $\ALLOC{x}{\ee}$ nondeterministically selects a fresh location, stores it in $x$, and puts a record with value $\ee$ on the heap at that location.
Since we assume an infinite address space, allocation never fails.
Conversely, $\FREE{\ee}$ disposes the record at location $\ee$ from the heap; it fails if no such location exists.
The mutation $\HASSIGN{\ee}{\ee'}$ and the lookup $\ASSIGNH{x}{\ee}$ update to $\ee'$ resp.\ assign to $x$ the value stored at location $\ee$; both statements fail if the heap contains no such location.

\subsection{Weakest Liberal Preexpectations}

We formalize reasoning about $\hpgcl$ programs in terms of the 
weakest liberal preexpectation transformer
$\wlpsymbol\colon \hpgcl \to (\QSLA \to \QSLA)$, where
$\Predset$ at least contains formulae of the form $\singleton{\ee}{\ee'}$;
\Cref{tab:wlp} summarizes the rules for computing $\wlpsymbol$
of \emph{loop-free} programs on the program structure.

\begin{table}[t]
\centering
\caption{Rules for compositionally computing weakest liberal preexpectations.
Here, $\ff$ is a $\QSLA$ formula representing the postexpectation.
$\ff\subst{x}{\ee}$ denotes the substitution of every free occurrence
of $x$ by $\ee$ in $\ff$.
$\validpointer{\ee}$ desugars to $\protect\Sup z\colon\singleton{\ee}{z}$.
}
\label{tab:wlp}
\renewcommand{\arraystretch}{1.5}
\begin{tabular}{@{\hspace{1em}}l@{\hspace{2em}}l}
\hline
$\boldsymbol{\cc}$
&
$\boldsymbol{\boldwlpsymbol\llbracket\cc\rrbracket\left(\ff\right)}$
\\
\hline
$\SKIP$
&
$\ff$
\\
$\ASSIGN{x}{\ee}$
&
$\ff\subst{x}{\ee}$
\\
$\PCHOICE{\cc_1}{\pp}{\cc_2}$
&
$\pp \cdot \wlp{\cc_1}{\ff} + (1- \pp) \cdot \wlp{\cc_2}{\ff}$ 
\\
$\COMPOSE{\cc_1}{\cc_2}$
&
$\wlp{\cc_1}{\vphantom{\big(}\wlp{\cc_2}{\ff}}$ 
\\
$\ITE{\guard}{\cc_1}{\cc_2}$
&
$\iverson{\guard} \cdot \wlp{\cc_1}{\ff} + \iverson{\neg \guard} \cdot \wlp{\cc_2}{\ff}$ 
\\
$\ALLOC{x}{\ee}$
&
$\displaystyle\Inf y\colon \singleton{y}{\ee} \sepimp \ff\subst{x}{y}$
\\
$\FREE{\ee}$
&
$\validpointer{\ee} \sepcon \ff$ 
\\
$\ASSIGNH{x}{\ee}$
&
$\displaystyle\Sup y\colon \singleton{\ee}{y} \sepcon \displaystyle\bigl( \singleton{\ee}{y} \sepimp \ff\subst{x}{y} \bigr)$
\\
$\HASSIGN{\ee}{\ee'}$
&
$\displaystyle\validpointer{\ee} \sepcon
\bigl( \singleton{\ee}{\ee'} \sepimp \ff \bigr)$
\\
\hline
\end{tabular}
\end{table}

Conceptually, the \emph{weakest liberal preexpectation} $\semapp{\wlp{\cc}{\ff}}{(\sk,\hh)}$
of program $\cc$ with respect to 
\emph{postexpectation} $\ff \in \QSLA$ on $(\sk,\hh)$ is the least expected value of $\sem{\ff}$
(measured in the final states) after successful\footnote{\ie without encountering a memory error.} termination of $\cc$ on initial state $(\sk, \hh)$,
plus the probability that $\cc$ does not terminate on $(\sk, \hh)$.
Adding the non-termination probability can be thought of as a partial correctness view: we include the non-termination probability of $\cc$ on state $(\sk,\hh)$ in the $\wlpsymbol$ of $\cc$ just as we include the state $(\sk,\hh)$ in the weakest liberal pre\emph{condition} of $\cc$ in case $\cc$ does not terminate on $(\sk,\hh)$.

A reader familiar with separation logic will realize the close similarity between
the rules in \Cref{tab:wlp} and the weakest preconditions for SL by
Ishtiaq and O'Hearn~\cite{Ishtiaq2001BI}.
The main differences are (1) the use of the quantitative connectives $\sepcon$, $\sepimp$, and $\cdot$, and $+$, and (2) the additional rule for probabilistic choice,
$\wlp{\PCHOICE{\cc_1}{\pp}{\cc_2}}{\ff}$, which is a convex sum 
that weights $\wlp{\cc_1}{\ff}$ and $\wlp{\cc_2}{\ff}$ by $\pp$ and $(1-\pp)$, respectively.

The transformer $\wlpsymbol$ is well-defined in the sense that, for every loop-free $\hpgcl$-program and every $\QSLA$ formula, we obtain---under mild conditions---again a $\QSLA$ formula:
\begin{theorem}\label{thm:wp_closed}
Let $\cc \in \hpgcl$ be loop-free and $\Predset$ be a set of \mbox{predicate symbols. If}
	\begin{enumerate}
		\item  $\Predset$ contains the points-to predicate for all variables and all expressions occurring in allocation, disposal, lookup and mutation in $\cc$,
		\item $\Predset$ contains all guards and their negations occurring in $\cc$, and
		\item all predicates in $\Predset$ are closed under substitution of variables by variables and arithmetic expressions occurring on right-hand sides of assignments in $\cc$, 
	\end{enumerate}
	then, for every $\QSLA$ formula $\ff$, $\wlp{\cc}{\ff} \in \QSLA$.
\end{theorem}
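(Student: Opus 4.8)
The plan is to prove the statement by structural induction on the loop-free program $\cc$, each time unfolding the defining equation for $\wlpsymbol$ from \Cref{tab:wlp} and checking that the resulting expression is generated by the grammar of $\QSLA$. First I would observe that hypotheses (1)--(3) are \emph{monotone} in the program: if $\Predset$ satisfies them for $\cc$, it satisfies them for every subprogram, since the variables, expressions, guards, and assignment right-hand sides occurring in a subprogram form a subset of those occurring in $\cc$, and closure of $\Predset$ under a larger set of substitutions entails closure under a smaller one. Hence the induction hypothesis applies to the subprograms arising in the compound cases.

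The compound cases are then immediate. For $\PCHOICE{\cc_1}{\pp}{\cc_2}$ the induction hypothesis gives $\wlp{\cc_i}{\ff}\in\QSLA$, and $\pp\cdot\wlp{\cc_1}{\ff}+(1-\pp)\cdot\wlp{\cc_2}{\ff}$ is an instance of the convex-combination production with $\qq=\pp\in\Probs$. For $\ITE{\guard}{\cc_1}{\cc_2}$ the expression $\iverson{\guard}\cdot\wlp{\cc_1}{\ff}+\iverson{\neg\guard}\cdot\wlp{\cc_2}{\ff}$ is an instance of the Boolean-choice production: by (2) we have $\guard\in\Predset$, and $\guard$ is pure because guards are assumed heap-independent. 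For $\COMPOSE{\cc_1}{\cc_2}$ one applies the induction hypothesis twice, first to conclude $\wlp{\cc_2}{\ff}\in\QSLA$ and then, with this formula as postexpectation, $\wlp{\cc_1}{\wlp{\cc_2}{\ff}}\in\QSLA$.

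For the atomic statements there is no recursion, so it suffices to inspect the right-hand column of \Cref{tab:wlp} (first desugaring $\validpointer{\ee}$ into $\Sup z\colon\singleton{\ee}{z}$ with $z$ fresh). The points-to formulas occurring there, such as $\singleton{y}{\ee}$ in the allocation rule and $\singleton{\ee}{y}$, $\singleton{\ee}{\ee'}$ in the lookup and mutation rules, are atoms of $\QSLA$ by hypothesis (1), while all surrounding connectives ($\Sup$, $\Inf$, $\sepcon$, and the magic wand $\iverson{\slb}\sepimp(\cdot)$ with a points-to left operand) are productions of the grammar. The only non-trivial point is that the substituted postexpectations $\ff\subst{x}{\ee}$ (for $\ASSIGN{x}{\ee}$) and $\ff\subst{x}{y}$ (for $\ALLOC{x}{\ee}$ and $\ASSIGNH{x}{\ee}$, with $y$ fresh) are again in $\QSLA$. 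This is the single auxiliary lemma the proof needs: provided $\Predset$ is closed under substitution of variables by variables and by the arithmetic expressions on assignment right-hand sides (hypothesis (3)), the class $\QSLA$ is closed under the same substitutions. It is proved by a nested induction on the structure of the postexpectation: the atom case reduces to $\iverson{\slb}\subst{x}{\ee}=\iverson{\slb\subst{x}{\ee}}$ and invokes (3); the Boolean and arithmetic productions commute with substitution; and for $\Sup\xx\colon\fg$ and $\Inf\xx\colon\fg$ one first $\alpha$-renames the bound variable away from $x$ and from the variables of $\ee$ to prevent capture. One also checks that purity of a predicate is preserved under substitution by a heap-independent expression, so that the side condition of the Boolean-choice production remains satisfied.

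The main obstacle I expect is exactly this substitution lemma: getting the capture-avoidance bookkeeping for $\Sup$ and $\Inf$ correct and ensuring that the closure hypothesis (3) is stated for precisely the substitutions performed by the $\wlpsymbol$ rules (variable-by-expression for assignments, variable-by-fresh-variable for allocation and lookup). A subsidiary point worth making explicit is the intended reading of (1): because $\validpointer{\ee}$, disposal, and lookup introduce points-to atoms whose two components mix program expressions with freshly chosen variables, (1) must be understood as providing $\singleton{\terma}{\terma'}$ for all such component combinations, or, equivalently, one again appeals to the variable-renaming closure from (3). Everything else is a routine match of \Cref{tab:wlp} against the $\QSLA$ grammar.
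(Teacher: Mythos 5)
Your proposal is correct and follows essentially the same route as the paper's proof: a structural induction on the loop-free program matching each rule of \Cref{tab:wlp} against the $\QSLA$ grammar, with the closure of $\QSLA$ under the relevant substitutions (reduced, via a nested induction, to hypothesis~(3) on the atoms of $\Predset$) as the one auxiliary fact. Your extra care about capture-avoidance for $\Sup$/$\Inf$ and about how hypothesis~(1) must cover the points-to atoms with fresh-variable components makes explicit details the paper's proof leaves implicit, but it is the same argument.
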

\begin{proof}
	By induction on loop-free $\cc$. 
	For details see \Cref{app:wp_closed}.\cmcomment{TR / App.}
\end{proof}
For loops, $\wlp{\WHILEDO{\guard}{\cc}}{\ff}$ is typically characterized as the greatest fixed point of loop unrollings.
However, we fixed an explicit syntax of formulae instead of allowing arbitrary
expectations; the above fixed point is in general not expressible in our syntax.\footnote{It is noteworthy that a sufficiently expressive syntax for weakest preexpectation reasoning without heaps has been developed only recently~\cite{Batz2021Relatively}.}
To deal with loops, we thus require a user-supplied invariant $\INV$ and apply the following proof rule 
(\cf~\cite{Kaminski2019Advanced}) to approximate $\wlpsymbol$:
\[
\INV \entails \iverson{\neg \guard} \cdot \ff + \iverson{\guard} \cdot \wlp{\cc'}{\INV}
\qimplies \INV \entails \wlp{\WHILEDO{\guard}{\cc'}}{\ff}
\]
Notice that verifying that $\INV$ is indeed an invariant via the above rule requires proving an entailment between $\QSLA$ formulae.

\subsection{Interfered Swap}\label{sec:motivation:swap}
Our first example concerns a program $\cswap$, implemented in $\hpgcl$ below, 
that attempts to swap the contents
of two memory locations $x$ and $y$.
However, since variable $x$ is shared with a concurrently running process, writing to $x$ can be unreliable, 
that is, instead of the intended value, the concurrently running process may write a corrupted value $\textsf{err}$ into
memory with some \mbox{probability, say $0.001$}. A similar situation occurs, e.g., when using the protocol described in \cite{PWCS2013}.
\begin{align*}
		\cswap\colon \qquad
		& \ASSIGNH{\textsf{tmp1}}{x}\SEMI \\
		& \ASSIGNH{\textsf{tmp2}}{y}\SEMI \\
		& \PCHOICE{\HASSIGN{x}{\textsf{tmp2}}}{0.999}{\HASSIGN{x}{\textsf{err}}}\SEMI \\
		& \HASSIGN{y}{\textsf{tmp1}}~.
\end{align*}
We can use $\wlpsymbol$ to verify an upper bound on the
probability that an erroneous write operation happened by solving the \QSL entailment
    \begin{align*}
       &\wlp{\cswap }{\singleton{\xx}{\xz_2} \sepcon\singleton{\xy}{\xz_1} }    \\
       \models{}~& \iverson{\xz_2 = \textsf{err}}\cdot(\singleton{\xx}{\xz_1} \sepcon\singleton{\xy}{\xz_2} )
                         +\iverson{\xz_2 \neq \textsf{err}}\cdot(0.999 \cdot (\singleton{\xx}{\xz_1} \sepcon\singleton{\xy}{\xz_2} ))~.                      
    \end{align*}
That is, the probability that $\cswap$ successfully swaps the contents of $\xx$ and $\xy$
is at most $0.999$ if $\xy$ does initially not point to the corrupt value $\textsf{err}$. 

As we will see in \Cref{sec:qsh}, our approach for solving \QSL entailments is capable
of deciding the above entailment, where
$\wlp{\cswap }{\singleton{\xx}{\xz_2} \sepcon\singleton{\xy}{\xz_1} }$ is computed 
according to the rules in \Cref{tab:wlp}.

\subsection{Avoiding Magic Wands}\label{sec:no-wands}
Recall from \Cref{tab:wlp} that computing $\wlpsymbol$ introduces a magic wand ($\sepimp$)
for almost every statement that accesses the heap.
This is unfortunate because many decidable separation logic fragments as well as 
practical entailment solvers do not support magic wands.

In particular, in \Cref{sec:qsh} we present a $\QSL$ fragment with a decidable 
entailment problem that supports magic wands only on the left-hand side of entailments.
Hence, proving a \emph{lower} bound on the probability that the program $\cswap$ from
above successfully swapped the contents of two memory cells, \eg
\begin{align*}
0.98 \cdot (\singleton{\xx}{\xz_2} \sepcon \singleton{\xy}{\xz_1}) \entails \wlp{\cswap }{\singleton{\xx}{\xz_1} \sepcon\singleton{\xy}{\xz_2} }~,  
\tag{$\dag$}
\end{align*}
might still be possible with our technique but requires a different
separation logic fragment to reduce to.
%

Fortunately, we can often avoid introducing magic
wands by employing local reasoning and rules for computing $\wlpsymbol$ for specific
pre- and postexpectations.
In particular, the $\wlpsymbol$ calculus features 
(1) the \emph{frame rule} from separation logic, \ie
if no free variable in $\fg$ is modified by $\cc$, then
$\wlp{\cc}{\ff} \sepcon \fg \entails \wlp{\cc}{\ff \sepcon \fg}$, (2) \cm{ \emph{super-distributivity} for convex combinations and maximum, i.e., $\qq \cdot \wlp{\cc}{\ff} + (1-\qq) \cdot \wlp{\cc}{\fg} \entails \wlp{\cc}{\qq \cdot \ff + (1-\qq) \cdot \fg}$ and $\emax{\wlp{\cc}{\ff}}{\wlp{\cc}{\fg}} \entails \wlp{\cc}{\emax{\ff}{\fg}}$}, and (3) \emph{monotonicity}, i.e., $\ff\entails\fg$ implies $\wlp{\cc}{\ff}\entails\wlp{\cc}{\fg}$.
Moreover, we give four examples of specialized rules that avoid magic wands but require
specific postexpectations: if $x$ is \emph{not} a free variable of $\ee$ or $\ff$, and $\xx$ and $\xy$ are distinct variables, then
\begin{enumerate}[(i)]
  \item 
        $\wlp{\ASSIGNH{\xx}{\ee}}{(\singleton{\ee}{\xy} \cdot \iverson{\xx=\xy}) \sepcon \ff} 
         = \singleton{\ee}{\xy} \sepcon \ff\subst{\xx}{\xy}$~;
  \item 
    $\wlp{\HASSIGN{\ee}{\ee'}}{
      \singleton{\ee}{\ee'} \sepcon \ff} = \validpointer{\ee} \sepcon \ff$~;
  \item 
  $\wlp{\ALLOC{x}{x}}{\Sup \xy \colon \singleton{\xx}{\xy} \sepcon f} =
     \ff\subst{\xy}{\xx}$~; and
  \item 
  $\wlp{\ALLOC{\xx}{\xy}}{\singleton{\xx}{\xy} \sepcon f} = \ff$~.
\end{enumerate}
\cm{Similar rules have been used successfully for symbolic execution with separation logic in non-probabilistic settings~\cite{Berdine2005Symbolic}.}
Combining the above rules with framing, distributivity, and monotonicity often allows avoiding 
magic wands. In such cases, we have a richer set of decidable SL fragments upon which to build
solvers for $\QSL$ entailments at our disposal.
Coming back to the entailment ($\dag$) from above and writing $\cswap=\cc_1;\cc_2;\cc_3;\cc_4$, we calculate %
\begin{changed}
\begin{align*}
                      & \wlp{\cswap}{\singleton{x}{\xz_1} \sepcon \singleton{y}{\xz_2}} \\
\mirrorentails ~ & \wlp{\cswap}{\singleton{y}{\textsf{tmp1}} \sepcon \singleton{x}{\textsf{tmp2}} \cdot \iverson{\textsf{tmp1}=\xz_2} \cdot \iverson{\textsf{tmp2} =\xz_1}} \tag{monotonicity} \\
\mirrorentails ~& \wlpC{\cc_1 \SEMI \cc_2 \SEMI \cc_3}(\wlp{\cc_4}{\singleton{y}{\textsf{tmp1}}} \tag{framing} \\
&\qquad \qquad \qquad \sepcon (\singleton{x}{\textsf{tmp2}}  
 \cdot (\iverson{\textsf{tmp1} =\xz_2} \cdot \iverson{\textsf{tmp2} =\xz_1})))  \\
\mirrorentails ~& \wlp{\cc_1 \SEMI \cc_2 \SEMI \cc_3}{\validpointer{y} \sepcon (\singleton{x}{\textsf{tmp2}} \cdot (\iverson{\textsf{tmp1} =\xz_2} \cdot \iverson{\textsf{tmp2} =\xz_1})) } \tag{Rule (ii)} \\
\vdots \quad& \\
\phantom{a}\\
\mirrorentails ~ & \wlp{\cc_1}{0.999 \cdot (\singleton{y}{\xz_1} \sepcon (\iverson{\textsf{tmp1} = \xz_2} \cdot \validpointer{x})) + 0.001 \cdot \iverson{\false}} \tag{Rule (i)} \\
\mirrorentails ~ & 0.999 \cdot \wlp{\cc_1}{(\singleton{x}{\xz_2} \cdot \iverson{\textsf{tmp1} = \xz_2}) \sepcon \singleton{y}{\xz_1}}
 + 0.001 \cdot \iverson{\false} \tag{super-distributivity, monotonicity and commutativity} \\
\mirrorentails ~& 0.999 \cdot (\singleton{x}{\xz_2} \sepcon \singleton{y}{\xz_1}) + 0.001 \cdot \iverson{\false}  \tag{Rule (i)}\\
\end{align*}
\end{changed}
which yields a preexpectation without magic wand.
Hence, we obtain a magic wand-free entailment in ($\dag$). 
\changedinline{We have used our technique to transform this quantitative entailment into 
several qualitative entailments and checked them successfully using the separation logic extension of CVC4 \cite{Reynolds2016Decision}. 
Detailed calculuations, the resulting qualitative entailments, and the input for CVC4 in SMT-LIB 2 format are found in \Cref{app:swap_example}.}\cmcomment{TR / App.}

\subsection{Randomized List Population}
Our second example populates a singly-linked list by flipping coins
and adding a list element until the coin flip yields heads, \ie
we consider the program
\begin{align*}
  \cpop\colon\qquad & \WHILE{c \neq 0} \\ 
  & \qquad \PCHOICE{\ASSIGN{c}{0}}{0.5}{\ALLOC{x}{x}} \\
  & \}~,
\end{align*}
where $x$ is the head of a linked list.
Assume we would like to determine a lower bound on the probability that the above program does not crash and produces a list of length at least two\footnote{plus the probability of nontermination, which is 0.}. 
For that, recall from \Cref{ex:sl} the 
separation logic formula $\ls{x}{y}$ for singly-linked list segments.
The aforementioned probability is then given by
$\wlp{\cpop}{\ff}$ for postexpectation
\[
  \ff \eeq \Sup y\colon ~ \Sup z\colon ~  \singleton{x}{y} \sepcon \singleton{y}{z} \sepcon \ils{z}{0}~.
\] 
We propose the loop invariant $I$ below to show that
$I \entails \wlp{\cpop}{\ff}$, \ie $I$ is a lower bound on the sought-after probability.
\begin{align*}
  I \eeq &
  \Sup y\colon ~ \singleton{x}{y} \sepcon 
   \big(\iverson{c = 0} \cdot \Sup z\colon ~ \singleton{y}{z} \sepcon \ils{z}{0} \\
    & \qquad\qquad\qquad + \iverson{c \neq 0} \cdot \sfrac{1}{2} \cdot (\Sup z\colon~ \singleton{y}{z} \sepcon \ils{z}{0} + \sfrac{1}{2} \cdot \ils{z}{0})\big)~.
\end{align*}
To verify that $I$ is indeed a loop invariant (hint: it is),
we need to prove that
\begin{align*}
  I ~\entails~
  \iverson{c = 0} \cdot \ff
  + \iverson{c \neq 0} \cdot \wlp{\PCHOICE{\ASSIGN{c}{0}}{0.5}{\ALLOC{x}{x}}}{I}~.
\end{align*}
As described in \Cref{sec:no-wands},
we can compute $\wlpsymbol$ in a way such that the resulting formula contains no magic wands.
Our reduction from $\QSL$ entailments to standard $\SL$ entailments
then allows us to discharge the above invariant check using existing separation logic
solvers with support for fixed list \mbox{predicates, \eg \cite{Piskac2013Automating}.}

\section{Quantitative Entailment Checking}
\label{sec:entailment_checking}
We  present our main contribution of reducing entailment checking in $\QSLA$ to entailment checking in $\SLA$. We consider the key observations leading to our reduction in \Cref{sec:qsla_to_sla_idea}. We then deal with the formalization and more technical considerations of our approach \mbox{in Sections \ref{sec:evaluationset} and \ref{sec:atleast}.}

\subsection{Idea and Key Observations}
\label{sec:qsla_to_sla_idea}
We reduce entailment checking in $\QSLA$ to entailment checking in $\SLA$, i.e.,
\begin{center}
	\emph{Given $\ff,\fg \in \QSLA$, we reduce checking $\ff \entails \fg$ to checking finitely many entailments of the form $\sla \entails \slb$ with $\sla,\slb \in\SLA$.}
\end{center}
We instantiate $\QSLA$ and $\SLA$, respectively, for the sake of concreteness. For that, we fix the set $\Predset$ of predicate symbols given by
\begin{align*}
	%
	\Predset ~=~ \{\: \true,~ \slemp,~\xx=\xy,~\xx\neq\xy,~\slsingleton{\xx}{\xy} ~\mid~ \xx,\xy \in \Vars \:\}~.
%
%
%
%
\end{align*}
Now, consider the following entailment $\fh_1 \entails \fh_2$ as a running example:
\begin{align*}
   \textcolor{gray}{\fh_1 \eeq} 
   0.4 \cdot (\underbrace{\singleton{\xx}{\xy} \sepcon \singleton{\xy}{\xz}}_{=\fg_1}) + 0.6 \cdot \underbrace{\singleton{\xx}{\xy} }_{=\fg_2}
   \eentails 0.6 \cdot (\singleton{\xx}{\xy} \sepcon \iverson{\true}) \textcolor{gray}{\eeq \fh_2}~.
\end{align*}
Intuitively speaking, $\fh_1$ expresses that with probability $0.4$ the heap consists of two cells where $\xx$ points to $\xy$ and separately $\xy$ points to $\xz$, and that with probability $0.6$ the heap consists of a single cell where $\xx$ points to $\xy$. Formula $\fh_2$ expresses that with probability $0.6$ the heap contains a cell where $\xx$ points to $\xy$. How can we reduce the problem of checking whether $\fh_1 \entails \fh_2$ holds to checking finitely many entailments in $\SLA$? 
We rely on two key observations: \\ \\
\noindent
\emph{Observation 1.} For every $\ff \in \QSLA$, the set 
\[
    \eval{\ff} \eeq \left\{ \sem{\ff}{(\sk,\hh)} ~\mid~ (\sk,\hh) \in \States \right\} ~{}\subset{}~\Probs
\]
is \emph{finite}. Moreover, there is an effectively constructible \emph{finite} and \emph{sound overapproximation} $\evaluationSet{\ff}$ of $\eval{\ff}$, i.e., $\eval{\ff} \subseteq \evaluationSet{\ff}$. 

\begin{example}
Consider the expectation $\fh_1$ from our running example: We have $\eval{\fh_1} = \{0,0.4,0.6\}$. We construct a finite overapproximation of $\eval{\fh_1}$ as follows: First, we observe that both subformulae $\fg_1$ and $\fg_2$ evaluate to a value in $\{0,1\}$, i.e, $\evaluationSet{\fg_1} = \evaluationSet{\fg_2} = \{0,1\}$. From $\evaluationSet{\fg_1}$ and $\evaluationSet{\fg_2}$, we obtain a finite overapproximation $\evaluationSet{\fh_1}$ of $\eval{\fh_1}$ given by
\[
\evaluationSet{\fh_1} \eeq \left\{0.4\cdot \alpha + 0.6\cdot \beta ~\mid \ra\in \evaluationSet{\fg_1},~\rb\in \evaluationSet{\fg_2} \right\} \eeq \{0,0.4,0.6,1\}~.
\]
Notice that $\evaluationSet{\fh_1}$ is a \emph{proper} superset of $\eval{\fh_1}$ since $1\not\in\eval{\fh_1}$. \hfill $\triangle$
\end{example}
We consider the construction of $\evaluationSet{\ff}$ for arbitrary \mbox{$\ff \in \QSLA$ in \Cref{sec:evaluationset}.} \\ \\
\noindent
\emph{Observation 2.} Given $\ff \in \QSLA$ and a probability $\ra \in \Probs$, there is an effectively constructible $\SLA$ formula, which we denote by $\atleast{\ra}{\ff}$, such that $(\sk,\hh)$ is a model of $\atleast{\ra}{\ff}$ if and only if $\ff$ evaluates at least to $\ra$ on state $(\sk,\hh)$, i.e.,
\[
     \underbrace{(\sk,\hh) \mmodels \atleast{\ra}{\ff}}_{\text{in}~ \SLA}
     \qquad \text{iff} \qquad  
     \underbrace{\ra \lleq \semapp{\ff}{(\sk,\hh)}}_{\text{in}~ \QSLA}~.
\]
We can thus lower bound $\QSLA$ formulae in terms of $\SLA$ formulae.
\begin{example}
Continuing our running example, we construct $\atleast{0.5}{\fh_1}$, i.e., an $\SLA$ formula evaluating to $\true$ on state $(\sk,\hh)$ if and only if $\fh_1$ evaluates at least to $0.5$. We start by considering the subformulae of $\fh_1$. Since both $\fg_1$ and $\fg_2$ embed $\SLA$ predicates, we have for every $\ra  \in \Probs$
\begin{align*}
    &\atleast{\ra}{\fg_1} \eeq
    \true~\text{if $\ra=0$ else}~\slsingleton{\xx}{\xy} \sepcon \slsingleton{\xy}{\xz} \\
  %
  %
  \text{and} \quad&
  \atleast{\ra}{\fg_2} \eeq
  \true~\text{if $\ra=0$ else}~\slsingleton{\xx}{\xy} ~.
%
\end{align*}
The intuition is as follows: $\ra=0$ lower bounds every probability. Conversely, if $\ra >0$ then $\ra$ lower bounds $\fg_1$ (resp.\ $\fg_2$) on state $(\sk,\hh)$ if and only if $(\sk,\hh)$ satisfies the predicate $\fg_1$ (resp.\ $\fg_2$).
Now, when does $\fh_1$ evaluate at least to $0.5$? Given $\evaluationSet{\fg_1}$ and $\evaluationSet{\fg_2}$ and the fact that the valuation of $\fh_1$ is a convex combination of the valuations of $\fg_1$ and $\fg_2$, there are (at most) two cases: Either \emph{both} $\fg_1$ \emph{and} $\fg_2$ evaluate to (at least) $1$, or $\fg_2$ (but not necessarily $\fg_1$) evaluates to (at least) $1$. Given $\atleast{1}{\fg_1}$ and $\atleast{1}{\fg_2}$, the aforementioned informal disjunction translates to a formal \mbox{disjunction in $\SLA$:}
\begin{align*}
  \atleast{0.5}{\fh_1} \eeq&  
  \big(\atleast{1}{\fg_1} \wedge  \atleast{1}{\fg_2} \big) \vee \atleast{1}{\fg_2}
  \\
  \eeq &\big((\slsingleton{\xx}{\xy} \sepcon \slsingleton{\xy}{\xz}) \wedge  \slsingleton{\xx}{\xy}\big) \vee \slsingleton{\xx}{\xy}~.
\end{align*}
Notice that---as it is the case for $\evaluationSet{\fh_1}$---we construct $\atleast{0.5}{\fh_1}$ \emph{syntactically}. In particular, we disregard that the disjunct $(\slsingleton{\xx}{\xy} \sepcon \slsingleton{\xy}{\xz}) \wedge  \slsingleton{\xx}{\xy}$ is unsatisfiable and therefore equivalent to $\false$. \hfill $\triangle$
\end{example}
We provide the construction of $\atleast{\ra}{\ff}$ for \emph{arbitrary} $\QSLA$ formulae $\ff$---including quantitative quantifiers and the magic wand---in \Cref{sec:atleast}. \\

%
%
Finally, Observations $1$ and $2$ together yield our reduction from $\ff \entails \fg$ to finitely many entailments in $\SLA$. Intuitively speaking, we formalize that 
\begin{center}
	\emph{whenever $\ff$ evaluates at least to $\ra$, then $\fg$ too evaluates at least to $\ra$}
\end{center}
equivalently in terms of finitely many $\SLA$ entailments. Put more formally, since $\evaluationSet{\ff}$ is finite, we have
\begin{align*}
	&\ff \entails \fg \\
	\text{iff}\quad &\text{for all $(\sk,\hh)$}\colon \semapp{\ff}{(\sk,\hh)} \leq \semapp{\fg}{(\sk,\hh)} 
	\tag{by definition}\\
	\text{iff}\quad& \text{for all $(\sk,\hh)$ and all $\alpha \in \evaluationSet{\ff}$}\colon \ra \leq \semapp{\ff}{(\sk,\hh)} 
	~\text{implies}~ \alpha \leq \semapp{\fg}{(\sk,\hh)} 
	\tag{by Observation $1$}\\
	\text{iff}\quad& \text{for all $(\sk,\hh)$ and all $\alpha \in \evaluationSet{\ff}$}\colon (\sk,\hh)\models\atleast{\ra}{\ff}
	~\text{implies}~(\sk,\hh)\models\atleast{\ra}{\fg}
	\tag{by Observation $2$}\\
	\text{iff}\quad& \text{for all $\alpha \in \evaluationSet{\ff}$}\colon \atleast{\ra}{\ff} \entails \atleast{\ra}{\fg}~.
	\tag{by definition}
\end{align*}
\begin{example}
	Reconsider our running example. Since $\sizeof{\evaluationSet{\fh_1}} = 4$, the $\QSLA$ entailment $\fh_1 \entails\fh_2$ is equivalent to the four entailments 
	\[
	   \atleast{\ra}{\fh_1}\entails\atleast{\ra}{\fh_2}\quad\text{for $\ra \in \{0,0.4,0.6,1\}$}
	 \]
	in $\SLA$, each of which actually holds. \hfill $\triangle$
\end{example}
%
%

\subsection{Constructing Finite Overapproximations of $\eval{\ff}$}
\label{sec:evaluationset}

\begin{table}[t]
	\centering
	\caption{Inductive definition of $\evaluationSet{\ff}$.}
	\label{tab:evluationsets}%
	\renewcommand{\arraystretch}{1.4}%
		\begin{tabular}[t]{l@{\qquad\qquad}l@{\qquad}l@{\qquad}l}
		\hline\hline
		$\ff \in \QSLA$			&          $\evaluationSet{\ff} \subset \Probs $			\\
		\hline
		$\iverson{\slb}$		& $\{0,1\}$	              \\
		$\iverson{\bb} \cdot \fg + \iverson{\neg\bb} \cdot \fh $ & $\evaluationSet{\fg} \cup \evaluationSet{\fh}$ \\
		$\qq \cdot \fg + (1-\qq) \cdot \fh $ & $\pp\cdot \evaluationSet{\fg} + (1-\pp)\cdot \evaluationSet{\fh}$ \\
		$\fg \cdot \fh $  & $\evaluationSet{\fg} \cdot \evaluationSet{\fh}$ \\
		$1-\fg$ & $1-\evaluationSet{\fg}$  \\
		$\emax{\fg}{\fh}$ & $\emax{\evaluationSet{\fg}}{\evaluationSet{\fh}}$ \\
		$\emin{\fg}{\fh}$ & $\emin{\evaluationSet{\fg}}{\evaluationSet{\fh}}$ \\
		$\Sup \xx \colon \fg $ & $\evaluationSet{\fg}$\\
		$\Inf \xx \colon \fg $ & $\evaluationSet{\fg}$\\
		$\fg \sepcon \fh $ & $\evaluationSet{\fg} \cdot \evaluationSet{\fh}$ \\
		$\iverson{\slb} \sepimp \fg$ & $\evaluationSet{\fg}$ \\
		\hline\hline
	\end{tabular}%
\renewcommand{\arraystretch}{1}%
	
\end{table}
We consider the formal construction underlying Observation 1 from the previous section, i.e., given $\ff \in \QSLA$, we provide a syntactic construction of a finite overapproximation $\evaluationSet{\ff}$ of $\eval{\ff}$. This construction is by induction on the structure of $\ff$ as shown in \Cref{tab:evluationsets}. For that, we define some shorthands. Given $\ra \in \Probs$,  $V,W \subseteq \Probs$, and a binary operation $\circ \colon \Probs \times\Probs \to \Probs$, we define
\begin{align*}
	\ra \cdot V \eeq \left\{  \ra \cdot \rb~\mid~ \rb \in V\right\} 
	\quad\text{and}\quad
	V\circ W \eeq \left\{  \rb \circ\rc~\mid~ \rb\in V, ~\rc\in W\right\}~.
\end{align*}
Let us now go over the individual cases. \\ \\
 \noindent
 \emph{The case $\ff = \iverson{\slb}$.} We have $\iverson{\slb}(\sk,\hh) \in\{0,1\}$ by definition.  \\ \\
 %
 %
 %
 %
 \noindent
\emph{The case $\ff = \iverson{\bb} \cdot \fg + \iverson{\neg\bb} \cdot \fh$.} For every $(\sk,\hh)$, the formula $\ff$ \emph{either} evaluates to $\semapp{\fg}{(\sk,\hh)}$ \emph{or} to $\semapp{\fh}{(\sk,\hh)}$, depending on whether $(\sk,\hh) \models \bb$ holds. \\ \\
%
%
%
\noindent
\emph{The case $\ff = \pp \cdot \fg + (1-p) \cdot \fh$.} 
The formula $\ff$ evaluates to $\pp \cdot \ra + (1-\pp)\cdot \rb$ for some $\ra \in \evaluationSet{\fg}$ and $\rb\in\evaluationSet{\fh}$.  \\ \\
%
\noindent
\emph{The case $\ff =  \fg \cdot\fh$ or $\ff = \fg \sepcon\fh$.} The formula $\ff$ evaluates to $\ra\cdot \rb$ for some $\ra \in \evaluationSet{\fg}$ and $\rb\in\evaluationSet{\fh}$.  \\ \\
%
\noindent
\emph{The case $\ff =  1-\fg$.} The formula $\ff$ evaluates to $1- \ra$ for some $\ra \in \evaluationSet{\fg}$. \\ \\
%
%
\noindent
\emph{The case $\ff =  \fg \circ \fh$ for $\circ \in \{\max,\min \}$.} Since $\max$ and $\min$ are defined point-wise, the formula $\ff$ evaluates to some value $\ra \circ\rb$ for $\ra \in \evaluationSet{\fg},\rb\in\evaluationSet{\fh}$. \\ \\
%
%
\noindent
\emph{The case $\ff =  \Sup \xx \colon \fg$ or $\ff = \Inf \xx \colon \fg$.} Since $\evaluationSet{\fg}$ overapproximates the set of \emph{all} valuations of $\fg$, quantitative quantifiers do not add any valuation. \\ \\
%
\noindent
\emph{The case $\ff =  \iverson{\slb}\sepimp \fg$.} Recall that
\[
   \semapp{\ff}{(\sk,\hh)} \eeq \inf \left\{ \semapp{\fg}{(\sk,\hh\sepcon\hh')}  ~\mid~ \hh' \disjoint \hh ~\text{and}~ \iverson{\slb}(\sk,\hh')=1 \right\}~.
\]
If the above set is \emph{non}-empty, the infimum is actually a minimum and therefore $\ff$ evaluates to some value in $\evaluationSet{\fg}$. If the above set \emph{is} empty, then $\semapp{\ff}{(\sk,\hh)} =1$. It is easy to verify that $1$ is necessarily an element of $\evaluationSet{\fg}$ (\cf\Cref{lem:zero_one_preservation}). \\ \\
\noindent
Summarizing our considerations on $\evaluationSet{\ff}$, we get:
\begin{theorem}
	\label{thm:evalset}
	For every $\ff \in \QSLA$, the effectively constructible set $\evaluationSet{\ff} \subset \Probs$ given in \Cref{tab:evluationsets} satisfies
	\[
	\sizeof{\evaluationSet{\ff}}< \infty \quad \text{and}\quad \eval{\ff}\subseteq \evaluationSet{\ff}~.
	\]
\end{theorem}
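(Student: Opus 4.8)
The plan is to prove both conjuncts by induction on the structure of $\ff$, formalising the case analysis already sketched after \Cref{tab:evluationsets}. I would do this in two passes. \emph{First}, a routine induction gives $\sizeof{\evaluationSet{\ff}} < \infty$: the base set $\evaluationSet{\iverson{\slb}} = \{0,1\}$ is finite, and every rule in \Cref{tab:evluationsets} builds $\evaluationSet{\ff}$ from already-finite sets using only finitary operations --- a binary union, a scaling $\ra \cdot V$, the pointwise lift $V \circ W$ of a binary operation on $\Probs$ (which has at most $\sizeof{V}\cdot\sizeof{W}$ elements), the map $1-V$, or the identity. Hence finiteness propagates upward.

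\emph{Second}, I would prove $\eval{\ff} \subseteq \evaluationSet{\ff}$ by induction on $\ff$ using the semantics in \Cref{tab:semantics_qsl}, and it is here that the finiteness just established actually gets used. Alongside the main claim I would carry the small invariant $\{0,1\} \subseteq \evaluationSet{\ff}$ for every $\ff$ (this is \Cref{lem:zero_one_preservation}; it follows by an immediate separate induction, since $0$ and $1$ lie in $\{0,1\}$ and are preserved by all the set operations involved). The Boolean and arithmetic cases are then routine: for $\ff = \iverson{\slb}$ the value on any state lies in $\{0,1\}$ by definition; for $\ff = \iverson{\bb}\cdot\fg + \iverson{\neg\bb}\cdot\fh$ the value on $(\sk,\hh)$ equals $\semapp{\fg}{(\sk,\hh)}$ or $\semapp{\fh}{(\sk,\hh)}$ according to whether $(\sk,\hh)\models\bb$, hence lies in $\evaluationSet{\fg}\cup\evaluationSet{\fh}$ by the induction hypothesis; and for $\pp\cdot\fg+(1-\pp)\cdot\fh$, for $\fg\cdot\fh$, for $\fg\sepcon\fh$, for $1-\fg$, and for $\emax{\fg}{\fh}$, $\emin{\fg}{\fh}$, one simply substitutes the inductive inclusions into the defining operation (noting that both $\fg\cdot\fh$ and $\fg\sepcon\fh$ evaluate to a product $\ra\cdot\rb$, and that $\max$, $\min$ act pointwise).

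The three cases that need care are the quantifiers $\Sup\xx\colon\fg$, $\Inf\xx\colon\fg$ and the magic wand $\iverson{\slb}\sepimp\fg$, where $\evaluationSet{\ff}$ is defined to be just $\evaluationSet{\fg}$. For $\Sup\xx\colon\fg$ on $(\sk,\hh)$, the set $\{\,\semapp{\fg}{(\sk\statesubst{\xx}{v},\hh)} \mid v\in\Vals\,\}$ is, by the induction hypothesis, a \emph{nonempty} subset of the \emph{finite} set $\evaluationSet{\fg}$, so its supremum is attained and belongs to $\evaluationSet{\fg}$; the case $\Inf\xx\colon\fg$ is symmetric, with no empty-set subtlety since $\Vals$ is nonempty. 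For the wand, $\{\,\semapp{\fg}{(\sk,\hh\sepcon\hh')} \mid \hh'\disjoint\hh,\ \iverson{\slb}(\sk,\hh')=1\,\}$ is likewise a finite subset of $\evaluationSet{\fg}$: if it is nonempty its infimum is a minimum and hence in $\evaluationSet{\fg}$; if it is empty, the convention $\inf\emptyset = 1$ makes the value $1$, which is in $\evaluationSet{\fg}$ by the invariant above. This closes the induction.

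I expect the only real obstacle to be bookkeeping in these last cases: one has to notice that the ``the supremum is a maximum'' feature of $\QSLA$ is not free but is bought precisely by $\sizeof{\evaluationSet{\fg}} < \infty$, which is why finiteness must be discharged first, and that the empty-set corner of the magic wand is exactly what forces the $\{0,1\} \subseteq \evaluationSet{\ff}$ invariant to be threaded through the induction rather than proved afterwards.
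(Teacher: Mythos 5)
Your proposal is correct and follows essentially the same route as the paper: the paper's proof is the case analysis given in the prose of Section 4.2, formalised as a structural induction, with the $\{0,1\}\subseteq\evaluationSet{\ff}$ fact factored out exactly as you do (it is \Cref{lem:zero_one_preservation}, invoked precisely for the empty-set corner of the magic wand). Your explicit two-pass ordering --- finiteness first, then the inclusion --- and your observation that attainment of the suprema/infima is bought by that finiteness are the right way to make the induction airtight.
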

\begin{proof}
	Straightforward by induction on $\ff$.
\end{proof}
\subsection{Lower Bounding $\QSLA$ by $\SLA$ Formulae}
\label{sec:atleast}

\begin{table}[t]
	\centering
	\caption{Inductive definition of $\atleast{\ra}{\ff}$ for a \emph{given} $\ra \in \Probs$.}
	\label{tab:at_least}%
	\renewcommand{\arraystretch}{1.4}%
		\begin{tabular}[t]{l@{\qquad\qquad}l@{\qquad}l@{\qquad}l}
		\hline\hline
		$\ff \in \QSLA$			&          $\atleast{\ra}{\ff} \in \SLA$			\\
		\hline
		$\iverson{\slb}$		& $\true~\text{if}~\ra=0~\text{else}~\slb $         \\
		$\iverson{\bb} \cdot \fg + \iverson{\neg\bb} \cdot \fh $ & $(\bb \wedge \atleast{\ra}{\fg}) \vee (\neg\bb  \wedge \atleast{\ra}{\fh})$\\
		$\qq \cdot \fg + (1-\qq) \cdot \fh $ & $\bigvee_{\rb \in \evaluationSet{\fg}, \rc \in  \evaluationSet{\fh}, \pp\cdot\rb +(1-\pp)\cdot\rc \geq \ra } \quad \atleast{\rb}{\fg} \wedge \atleast{\rc}{\fh}$ \\
		$\fg \cdot \fh $  &$\bigvee_{\rb \in \evaluationSet{\fg}, \rc \in  \evaluationSet{\fh}, \rb\cdot\rc \geq \ra }\quad  \atleast{\rb}{\fg} \wedge \atleast{\rc}{\fh}$\\
		$1-\fg$ & $\true ~\text{if}~\ra=0~\text{else}~\neg~\atleast{\rd}{\ff}$  \\
				& \quad for $\rd = \min \setcomp{\rb \in \evaluationSet{\fg}}{\rb > 1-\ra}$ \\
		$\emax{\fg}{\fh}$ & $\atleast{\ra}{\fg} \vee\atleast{\ra}{\fh}$ \\
		$\emin{\fg}{\fh}$ & $\atleast{\ra}{\fg} \wedge \atleast{\ra}{\fh}$ \\
		$\Sup \xx \colon \fg $ & $\exists \xx \colon \atleast{\ra}{\fg}$\\
		$\Inf \xx \colon \fg $ & $\forall \xx \colon \atleast{\ra}{\fg}$\\
		$\fg \sepcon \fh $ & $\bigvee_{\rb \in \evaluationSet{\fg}, \rc \in  \evaluationSet{\fh}, \rb \cdot\rc \geq \ra } \quad \atleast{\rb}{\fg} \sepcon\atleast{\rc}{\fh}$ \\
		$\iverson{\slb} \sepimp \fg$ & $\slb \sepimp \atleast{\ra}{\fg}$ \\
		\hline\hline
	\end{tabular}%
\renewcommand{\arraystretch}{1}%
	
\end{table}

We now consider the formal construction underlying Observation $2$ from \Cref{sec:qsla_to_sla_idea}. That is, given $\ff \in \QSLA$ and $\ra \in \Probs$, we provide the syntactic construction of an $\SLA$ formula $\atleast{\ra}{\ff}$ evaluating to $\true$ on state $(\sk,\hh)$ if and only if $\ff$ evaluates at least to $\ra$ on $(\sk,\hh)$. This construction relies on $\evaluationSet{\ff}$ from the previous section and is given by induction on the structure of $\ff$ as shown in \Cref{tab:at_least}. We consider the individual constructs.  For that, we fix some state $(\sk,\hh)$. \\ \\
\noindent
\emph{The case $\ff = \iverson{\slb}$.} There are two cases. If $\ra=0$, then $\ra$ trivially lower bounds the value of $\iverson{\slb}$. Conversely, if $\ra >0$, then $\ra$ lower bounds $\iverson{\slb}$ on state $(\sk,\hh)$ if and only if $(\sk,\hh)$ satisfies $\slb$. \\ \\
%
%
%
\noindent
For the composite cases, recall that by \Cref{thm:evalset} there are effectively constructible finite sets $\evaluationSet{\fg},\evaluationSet{\fh}$ covering all values $\fg$ and $\fh$ evaluate to. \\ \\
\noindent
\emph{The case $\ff = \iverson{\bb} \cdot \fg + \iverson{\neg\bb} \cdot \fh$.} The formula $\ff$ represents a Boolean choice between the formulae $\fg$ and $\fh$, depending on the truth value of $\bb$. Hence, there are two cases: If $(\sk,\hh)$ \emph{does} satisfy $\bb$, then $\ra$ lower bounds $\ff$ iff $\ra$ lower bounds $\fg$. Conversely, if $(\sk,\hh)$ does \emph{not} satisfy $\bb$, then $\ra$ lower bounds $\ff$ iff $\ra$ \mbox{lower bounds $\fh$.} \\ \\
%
%
\noindent
\emph{The case $\ff = \pp \cdot \fg + (1-p) \cdot \fh$.} Since $\evaluationSet{\fg}$ and $\evaluationSet{\fh}$ cover every possible valuation of $\fg$ and $\fh$, respectively, it follows that $\ra$ lower bounds the valuation of $\ff$  if and only if there are $\rb \in \evaluationSet{\fg}$ and $\rc\in\evaluationSet{\fh}$ such that (1) $\rb$ lower bounds $\fg$, (2) $\rc$ lower bounds $\fh$, and (3) $\ra$ lower bounds the convex sum $\pp\cdot\rb+(1-\pp)\cdot\rc$. \\ \\
%
%
%
%
\noindent
\emph{The case $\ff =  \fg \cdot\fh$.} The reasoning is analogous to the previous case. \\ \\
%
\noindent
\emph{The case $\ff =  1-\fg$.} We write $\ra \leq \semapp{1-\fg}{(\sk, \hh)}$ equivalently as $\neg (1-\ra < \semapp{\fg}{(\sk,\hh)})$. In order to turn the strict inequality into a non-strict one, we consider the successor $\rd$ of $1-\ra$ in $\evaluationSet{\fg}$, \ie the smallest $\rd$ in $\evaluationSet{\fg}$ greater than $1-\ra$. Since $\evaluationSet{\fg}$ is finite, such a $\rd$ always exists if $1-\ra \neq 1$. We illustrate the idea in the following picture, where all elements in $\evaluationSet{\fg}$ are marked by \textbullet{}.
\begin{center}
	\begin{tikzpicture}
		\draw[-latex] (-3.5,0) -- (4,0) ; 
		\foreach \x in {-3.5, -3, -2.5, -1.7, 1, 1.6, 2.4, 3.1, 3.5} 
			\node at (\x,0) {\textbullet};
		
		\draw[shift={(-3.5,0)},color=black] (0pt,7pt) -- (0pt,-7pt);	
		\node at (-3.5,-0.5) {$0$};

		\draw[shift={(3.5,0)},color=black] (0pt,7pt) -- (0pt,-7pt);	
		\node at (3.5,-0.5) {$1$};

		\draw[shift={(0,0)},color=black] (0pt,7pt) -- (0pt,-7pt);	
		\node at (0,-0.5) {$1-\ra$};

		\draw[shift={(1,0)},color=black] (0pt,7pt) -- (0pt,-7pt);	
		\node at (1,-0.5) {$\rd$};

		\draw[shift={(2.4,0)},color=black] (0pt,7pt) -- (0pt,-7pt);	
		\node at (2.3,-0.5) {$\semapp{\fg}{(\sk,\hh)}$};
	\end{tikzpicture}
\end{center}
%
%
For the successor $\rd$, checking if $\rd$ is a lower bound of $\semapp{\fg}{(\sk, \hh)}$ is equivalent to checking if $1-\ra$ is a strict lower bound - if  $\rd$ is not a lower bound, then we ran out of possible valuations that are strictly lower bounded by $1-\ra$.
\\ \\
\noindent
\emph{The case $\ff =  \fg \circ \fh$ for $\circ \in \{\max,\min \}$.} The probability $\ra$ lower bounds the maximum of $\fg$ and $\fh$ on state $(\sk,\hh)$ if and only if $\ra$ lower bounds $\fg$ \emph{or} $\ra$ lower bounds $\fh$. For $\circ=\min$, the reasoning is dual.
\\ \\
\noindent
\emph{The case $\ff =  \Sup \xx \colon \fg$.}  Recall that
\[
 \semapp{\ff}{(\sk,\hh)} \eeq \max \big\{\semapp{\fg}{(\sk\statesubst{\xx}{v},\hh)} ~\mid~ v \in \Vals \big\}~.
\]
Now observe that $\ra$ lower bounds the above maximum if and only if $\ra$ lower bounds \emph{some} element of the above set, i.e., if and only if there is some $v$ with
\[
\ra \lleq \semapp{\fg}{(\sk\statesubst{\xx}{v},\hh)}
\qquad\text{which is equivalent to}\qquad 
(\sk,\hh) \mmodels \exists \xx \colon \atleast{\ra}{\fg}~.
\] 
%
%
%
\noindent
\emph{The case $\ff = \Inf \xx \colon \fg$.}  Recall that
\[
\semapp{\ff}{(\sk,\hh)} \eeq \min \big\{\semapp{\fg}{(\sk\statesubst{\xx}{v},\hh)} ~\mid~ v \in \Vals \big\}~.
\]
Since $\ra$ lower bounds the above minimum if and only if $\ra$ lower bounds \emph{all} elements of the above set, the reasoning is dual to the previous case. \\ \\
\noindent 
\emph{The case $\ff= \fg\sepcon\fh$.} Recall that
\[
\semapp{\ff}{(\sk,\hh)} \eeq \max \left\{\semapp{\fg}{(\sk,\hh_1)} \cdot \semapp{\fh}{(\sk,\hh_2)} ~\mid~ \hh = \hh_1 \sepcon \hh_2 \right\}~.
\]
Since $\evaluationSet{\fg}$ and $\evaluationSet{\fh}$ cover every possible valuation of $\fg$ and $\fh$, respectively, $\ra$ lower bounds the evaluation of $\ff$ on $(\sk,\hh)$ iff there are $\rb\in \evaluationSet{\fg}, \rc\in\evaluationSet{\fh}$ and $\hh_1,\hh_2 $ with $\hh_1 \sepcon \hh_2 = \hh$ such that (1) $\rb$ lower bounds $\fg$ on $(\sk,\hh_1)$, (2) $\rc$ lower bounds $\fh$ on $(\sk,\hh_2)$, and (3) $\ra$ lower bounds $\rb\cdot\rc$. Given such $\rb$ and$\rc$, we can phrase this equivalently in $\SLA$ as 
\[
   (\sk,\hh) \mmodels \atleast{\rb}{\fg} \sepcon\atleast{\rc}{\fh}~.
\]
\noindent
\emph{The case $\ff =  \iverson{\slb}\sepimp \fg$.} Recall that
\[
\semapp{\ff}{(\sk,\hh)} \eeq \inf \left\{ \semapp{\fg}{(\sk,\hh\sepcon\hh')}  ~\mid~ \hh' \disjoint \hh ~\text{and}~ \iverson{\slb}(\sk,\hh')=1 \right\}~.
\]
Probability $\ra$ lower bounds the above infimum if and only if for \emph{every} extension $\hh'$ of the heap $\hh$ such that the stack $\sk$ together with $\hh'$ satisfy $\slb$,
probability  $\ra$ is a lower bound on $\semapp{\fg}{(\sk,\hh\sepcon\hh')}$. Put more formally, the latter statement reads
\[
   \text{for all $\hh' \disjoint \hh$ with $(\sk,\hh')\models\slb$}\colon~ (\sk,\hh\sepcon\hh')\models \atleast{\ra}{\fg}~,
\]
which is equivalent to $(\sk,\hh)\models\slb \sepimp\atleast{\ra}{ \fg}$. \\

\noindent
Our construction thus applies to arbitrary $\QSLA$ formulae and we get:
\begin{theorem}
	\label{thm:atleast}
	For every $\ff \in \QSLA$ and all $\ra  \in \Probs$ there is an effectively constructible $\SLA$ formula $\atleast{\ra}{\ff}$ such that for all $(\sk,\hh) \in \States$, we have
	\[
	  (\sk,\hh) \mmodels \atleast{\ra}{\ff}
		\qquad\text{iff}\qquad
	  \ra\lleq \semapp{\ff}{(\sk,\hh)}~.
	\]
\end{theorem}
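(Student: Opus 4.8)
The plan is to prove the equivalence by induction on the structure of $\ff$, following the clauses of \Cref{tab:at_least}. Two ingredients are used throughout. First, \Cref{thm:evalset}: for every subformula $\fg$ and every state $(\sk,\hh)$ one has $\semapp{\fg}{(\sk,\hh)} \in \evaluationSet{\fg}$, so the finite disjunctions indexed by $\evaluationSet{\fg}$ (and $\evaluationSet{\fh}$) always contain a disjunct that refers to the value(s) actually taken by $\fg$ (and $\fh$) at the state under consideration. Second, the elementary fact that $\{0,1\} \subseteq \evaluationSet{\fg}$ for every $\fg \in \QSLA$ (\Cref{lem:zero_one_preservation}), itself a one-line induction, which is needed to make the negation clause well-defined.

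For the base case $\ff = \iverson{\slb}$: if $\ra = 0$ then $\ra \leq \semapp{\ff}{(\sk,\hh)}$ for every state and $\atleast{\ra}{\ff} = \true$, while if $\ra > 0$ then $\ra \leq \iverson{\slb}(\sk,\hh) \in \{0,1\}$ holds iff $\iverson{\slb}(\sk,\hh) = 1$ iff $(\sk,\hh) \models \slb$. For $\iverson{\bb}\cdot\fg + \iverson{\neg\bb}\cdot\fh$ I case-split on whether $(\sk,\hh) \models \bb$ and invoke the induction hypothesis on $\fg$ resp.\ $\fh$. The three ``weighted'' clauses $\pp\cdot\fg + (1-\pp)\cdot\fh$, $\fg\cdot\fh$, and $\fg\sepcon\fh$ all follow one template, with an operation $\circ$ --- the convex sum $\pp(\cdot)+(1-\pp)(\cdot)$, ordinary multiplication, or ordinary multiplication, respectively --- that is monotone in each argument on $[0,1]$: for the ``$\Leftarrow$'' direction, a true disjunct supplies $\rb,\rc$ with $\rb \circ \rc \geq \ra$ together with (via the induction hypotheses) $\semapp{\fg}{} \geq \rb$ and $\semapp{\fh}{} \geq \rc$ --- for $\sepcon$ on a splitting $\hh = \hh_1 \sepcon \hh_2$ --- so monotonicity yields $\semapp{\ff}{} \geq \rb \circ \rc \geq \ra$; for the ``$\Rightarrow$'' direction, take $\rb,\rc$ to be the \emph{actual} values of $\fg,\fh$ at the state (for $\sepcon$, at a splitting that maximizes the finite set of products, which exists since $\hh$ is finite), which lie in $\evaluationSet{\fg},\evaluationSet{\fh}$ by \Cref{thm:evalset} and satisfy $\rb \circ \rc = \semapp{\ff}{} \geq \ra$, hence index a disjunct that holds. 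The clauses $\emax{\fg}{\fh}$ and $\emin{\fg}{\fh}$ are immediate from the pointwise definitions plus the induction hypotheses. For $\Sup\xx\colon\fg$ the set $\{\semapp{\fg}{(\sk\statesubst{\xx}{v},\hh)} \mid v \in \Vals\} \subseteq \evaluationSet{\fg}$ is finite, so the supremum is a maximum and $\ra \leq \semapp{\ff}{(\sk,\hh)}$ iff $\ra \leq \semapp{\fg}{(\sk\statesubst{\xx}{v},\hh)}$ for some $v$, which by the induction hypothesis applied at $(\sk\statesubst{\xx}{v},\hh)$ is $(\sk,\hh) \models \exists \xx\colon \atleast{\ra}{\fg}$. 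The clauses $\Inf\xx\colon\fg$ and $\iverson{\slb}\sepimp\fg$ are dual; there no attainment argument is needed, since $\ra$ lower-bounds an infimum iff it lower-bounds every element of the set, and the convention $\inf\emptyset = 1$ matches the vacuously true wand / universally quantified formula.

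The case I expect to require the most care is quantitative negation $\ff = 1-\fg$: the induction hypothesis only produces formulae expressing \emph{non-strict} lower bounds, whereas $\ra \leq \semapp{1-\fg}{(\sk,\hh)}$ unfolds to $\neg\bigl(1-\ra < \semapp{\fg}{(\sk,\hh)}\bigr)$, a \emph{strict} inequality. This is precisely what the ``successor'' $\rd = \min\setcomp{\rb \in \evaluationSet{\fg}}{\rb > 1-\ra}$ of \Cref{tab:at_least} handles: for $\ra > 0$ we have $1-\ra < 1 \in \evaluationSet{\fg}$, so the finite set defining $\rd$ is non-empty and $\rd$ is well-defined; and since $\semapp{\fg}{(\sk,\hh)} \in \evaluationSet{\fg}$ while $\evaluationSet{\fg}$ contains no value in the open interval between $1-\ra$ and $\rd$, one gets $1-\ra < \semapp{\fg}{(\sk,\hh)}$ iff $\rd \leq \semapp{\fg}{(\sk,\hh)}$, which by the induction hypothesis is $(\sk,\hh) \models \atleast{\rd}{\fg}$. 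Negating both sides yields $\ra \leq \semapp{1-\fg}{(\sk,\hh)}$ iff $(\sk,\hh) \models \neg\,\atleast{\rd}{\fg}$; the residual case $\ra = 0$ is trivial since $\atleast{0}{1-\fg} = \true$ and $0$ lower-bounds every expectation. This closes the induction, and a routine inspection of \Cref{tab:at_least} shows each $\atleast{\ra}{\ff}$ is effectively constructible, the disjunctions being finite by \Cref{thm:evalset}.
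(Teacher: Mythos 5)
Your proof is correct and follows essentially the same route as the paper's: induction on $\ff$ using \Cref{thm:evalset} and \Cref{lem:zero_one_preservation}, a monotonicity-plus-attainment argument for the weighted/multiplicative/separating-conjunction cases, finiteness to turn suprema into maxima, and the successor $\rd$ to convert the strict inequality arising in the $1-\fg$ case into a non-strict one. No gaps.
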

\begin{proof}
	By induction on $\ff$. 
	See \Cref{app:entailment_checking} for details. \cmcomment{TR / App.}
\end{proof}
Finally, we obtain our main theorem.
\begin{theorem}
	\label{thm:qsla_by_sla}
	Entailment checking in $\QSLA$ reduces to entailment checking in $\SLA$, i.e,
	for all $\ff,\fg \in \QSLA$, we have
	\[
	   \ff\eentails\fg 
	   \qquad\text{iff}\qquad 
	   \text{for all $\alpha \in \evaluationSet{\ff}$}\colon ~ \atleast{\ra}{\ff} \eentails \atleast{\ra}{\fg}~.
	\]
\end{theorem}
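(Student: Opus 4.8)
The plan is to formalize the chain of equivalences already sketched in \Cref{sec:qsla_to_sla_idea}, with \Cref{thm:evalset} and \Cref{thm:atleast} as the two essential ingredients. The combinatorial heart of the argument is the following elementary observation: for any set $S \subseteq \Probs$ and any $x,y \in \Probs$ with $x \in S$, we have $x \leq y$ if and only if, for every $\ra \in S$, $\ra \leq x$ implies $\ra \leq y$. The ``only if'' direction is immediate by transitivity of $\leq$; for the ``if'' direction one instantiates $\ra := x$, which lies in $S$ by hypothesis and satisfies $\ra \leq x$ trivially, so the assumed implication yields $x = \ra \leq y$.

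First I would unfold the definition of $\ff \entails \fg$ in $\QSLA$, i.e.\ $\semapp{\ff}{(\sk,\hh)} \leq \semapp{\fg}{(\sk,\hh)}$ for all $(\sk,\hh) \in \States$. Fix such a state and put $x := \semapp{\ff}{(\sk,\hh)}$ and $y := \semapp{\fg}{(\sk,\hh)}$. By \Cref{thm:evalset} the set $\evaluationSet{\ff}$ is finite and $x \in \eval{\ff} \subseteq \evaluationSet{\ff}$, so applying the observation above with $S := \evaluationSet{\ff}$ shows that $\semapp{\ff}{(\sk,\hh)} \leq \semapp{\fg}{(\sk,\hh)}$ is equivalent to: for every $\ra \in \evaluationSet{\ff}$, $\ra \leq \semapp{\ff}{(\sk,\hh)}$ implies $\ra \leq \semapp{\fg}{(\sk,\hh)}$. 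Since this holds for every state, $\ff \entails \fg$ is equivalent to the same statement universally quantified over $(\sk,\hh) \in \States$ as well.

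Next I would invoke \Cref{thm:atleast} twice — once for $\ff$ and once for $\fg$, recalling that $\atleast{\ra}{\cdot}$ is an $\SLA$ formula defined for \emph{all} $\ra \in \Probs$, hence in particular for $\ra \in \evaluationSet{\ff}$ even when $\ra \notin \evaluationSet{\fg}$ — to rewrite $\ra \leq \semapp{\ff}{(\sk,\hh)}$ as $(\sk,\hh) \models \atleast{\ra}{\ff}$ and $\ra \leq \semapp{\fg}{(\sk,\hh)}$ as $(\sk,\hh) \models \atleast{\ra}{\fg}$. Swapping the two universal quantifiers (over states and over $\ra \in \evaluationSet{\ff}$) and folding the resulting implication back into the definition of entailment in $\SLA$ then yields precisely: for every $\ra \in \evaluationSet{\ff}$, $\atleast{\ra}{\ff} \entails \atleast{\ra}{\fg}$. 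As $\evaluationSet{\ff}$ is finite (\Cref{thm:evalset}) and each formula $\atleast{\ra}{\cdot}$ is effectively constructible (\Cref{thm:atleast}), this is indeed a reduction to finitely many $\SLA$ entailments.

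I expect the only genuinely subtle point to be the ``if'' direction of the elementary observation, which relies crucially on soundness of the overapproximation, $\eval{\ff} \subseteq \evaluationSet{\ff}$: this is exactly what guarantees that the actually attained value $\semapp{\ff}{(\sk,\hh)}$ occurs among the candidate thresholds $\ra$, so that it can serve as a witness. A merely finite superset of $\eval{\ff}$ that happened to omit this value would not suffice. It is also worth noting that ranging $\ra$ over $\evaluationSet{\ff}$ alone (and not over $\evaluationSet{\fg}$) is enough, since we only ever need thresholds that $\ff$ itself can reach; everything else is routine unfolding of definitions.
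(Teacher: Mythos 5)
Your proof is correct and follows essentially the same route as the paper, which establishes the theorem by exactly the chain of equivalences you formalize (definition of entailment, then \Cref{thm:evalset} to restrict attention to thresholds in $\evaluationSet{\ff}$, then \Cref{thm:atleast} to translate each threshold check into an $\SLA$ satisfaction, then folding back into finitely many $\SLA$ entailments). Your isolation of the elementary ``$x \leq y$ iff every $\ra \in S$ below $x$ is below $y$, provided $x \in S$'' observation, with the soundness of the overapproximation $\eval{\ff} \subseteq \evaluationSet{\ff}$ supplying the witness $\ra := x$, is precisely the content the paper leaves implicit in its second equivalence step.
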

\begin{proof}
	Follows from Theorems \ref{thm:evalset} and \ref{thm:atleast} and the reasoning \mbox{at the end of \Cref{sec:qsla_to_sla_idea}.}
\end{proof}
\begin{remark}[Avoiding $\true$ in $\SLA$ entailments] \label{remark:true_in_sla}
	 Formulae of the form $\atleast{\ra}{\ff} \in\SLA$ may introduce the atom $\true$, which is not admitted by some decidable separation logic fragments, 
     such as~\cite{Iosif2013Tree}. 
Fortunately, we can avoid $\true$ in $\atleast{\ra}{\ff}$ formulae. $\true$ is only required in formulae of the form $\atleast{0}{\ff}$, which arise in two situations when applying \Cref{thm:qsla_by_sla}: (1) in entailment checks of the form $\atleast{0}{\ff} \entails \atleast{0}{\fg}$, which always hold and can thus be omitted, and (2) if $\ff=\pp \cdot \fg + (1-\pp) \cdot \fh$.  In the latter case, if we have  $\ra \neq 0$ in 
	\[\atleast{\ra}{\ff} \eeq \bigvee_{\rb \in \evaluationSet{\fg}, \rc \in  \evaluationSet{\fh}, \pp\cdot\rb +(1-\pp)\cdot\rc \geq \ra } \quad \atleast{\rb}{\fg} \wedge \atleast{\rc}{\fh}~, \]
	then either $\rb \neq 0$ or $\rc \neq 0$ holds for every disjunct. Hence, subformulae of the form $\atleast{0}{\fg}$ or $\atleast{0}{\fh}$ can be omitted, as well. \hfill $\triangle$
\end{remark}
%

\section{Complexity}
\label{sec:complexity}
We now analyze the complexity of our approach. Recall that \Cref{thm:qsla_by_sla} reduces checking $\ff \entails \fg$ in $\QSLA$ to checking 
	\begin{equation*}
	\text{for all}~\ra \in \evaluationSet{\ff}\colon ~ \atleast{\ra}{\ff} \eentails \atleast{\ra}{\fg}
 \end{equation*}
in $\SLA$. We consider two aspects: (1) the \emph{number of $\SLA$ entailments} and (2) the \emph{size of the resulting $\SLA$ formulae} occurring in each entailment. 
We express these quantities in terms of the size of a $\QSLA$ formula $\ff$ and a $\SLA$ formula $\sla$ and denote them as $\sizeof{\ff}$ and $\sizeof{\sla}$ respectively. In these sizes, we count every construct in the formula and require that the size of atoms are defined at instantiation. Moreover, we assume that every atom in $\Predset$ is at least of size $1$ and especially the atom $\true$ is of size $1$. Additionally we count in an $\QSLA$ formula $\ff$ the constructs that increase the number of possible evaluation results of $\ff$, namely $\qq \cdot \fg + (1-\qq) \cdot \fh$, $\fg \cdot \fh$ and $\fg \sepcon \fh$, and denote it as $\probsizeof{\ff}$.\footnote{For a formal definition see \Cref{tab:sizes_sl_qsl} on \cpageref{tab:sizes_sl_qsl}.}

We will see that for an entailment $\ff \entails \fg$ in $\QSLA$, (1) the number of $\SLA$ entailments is in $2^{\bigo(\probsizeof{\ff})}$ in the worst case (see \Cref{thm:evalset_size}) and (2) the size of the resulting $\SLA$ formulae are in $\bigo(\sizeof{\ff}) \cdot 2^{\bigo(\probsizeof{\ff}^2)}$ and $\bigo(\sizeof{\fg}) \cdot 2^{\bigo(\probsizeof{\fg}^2)}$ respectively in the worst case (see \Cref{thm:atleast_size}). Now let us assume we have an entailment checker for $\SLA$ formulae that can solve entailments of the form $\atleast{\ra}{\ff} \entails \atleast{\ra}{\fg}$ and which has a runtime complexity of $\SLRuntime(n,m)$ where $n$ and $m$ are the size of $\SLA$ formulae on the left and right side of an entailment respectively. Putting the above together, checking the entailment $\ff \entails \fg$ in $\QSLA$ then has a runtime complexity of
\begin{gather*}  
	2^{\bigo(\probsizeof{\ff})} \cdot \SLRuntime\left(\bigo (\sizeof{\ff}) \cdot 2^{\bigo(\probsizeof{\ff}^2)}, \bigo(\sizeof{\fg}) \cdot 2^{\bigo(\probsizeof{\fg}^2)}\right)\\
	+ \bigo (\sizeof{\ff}) \cdot 2^{\bigo(\probsizeof{\ff}^2)} + \bigo(\sizeof{\fg}) \cdot 2^{\bigo(\probsizeof{\fg}^2)}~.
\end{gather*}
If we furthermore reasonably assume that $\SLRuntime(n,m)$ is at least linear in both arguments (otherwise the entailment checker can only check trivial entailments anyway), the runtime complexity simplifies to
\[ 2^{\bigo(\probsizeof{\ff})} \cdot \SLRuntime\left(\bigo (\sizeof{\ff}) \cdot 2^{\bigo(\probsizeof{\ff}^2)}, \bigo(\sizeof{\fg}) \cdot 2^{\bigo(\probsizeof{\fg}^2)}\right)~. \]
As for aspect (1), we first observe that checking $\ff \entails \fg$  by means of \Cref{thm:qsla_by_sla} requires checking $\sizeof{\evaluationSet{\ff}}$ entailments in $\SLA$. However, only the constructs we count with $\probsizeof{\ff}$ increase the number of possible evaluations, which in turn will also increase the size of the overapproximation $\evaluationSet{\ff}$. Every time any of these constructs occur, the number of possible evaluations $\eval{\ff}$ may double. Consequently, also the overapproximation $\evaluationSet{\ff}$ doubles in size when any of these constructs occur. Other constructs do not increase the number of evaluations, but instead inherit the evaluations from their subformulae. 
\begin{theorem}
	\label{thm:evalset_size}
	 We have $\sizeof{\evaluationSet{\ff}} \leq 2^{\probsizeof{\ff}+1}$. Hence, checking $\ff \entails \fg$ by means of \Cref{thm:qsla_by_sla} requires checking $2^{\bigo(\probsizeof{\ff})}$ entailments in $\SLA$.
\end{theorem}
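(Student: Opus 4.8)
The plan is to prove the bound $\sizeof{\evaluationSet{\ff}} \le 2^{\probsizeof{\ff}+1}$ by structural induction on $\ff$, working in lockstep with the inductive definitions of $\evaluationSet{\ff}$ (\Cref{tab:evluationsets}) and of $\probsizeof{\ff}$. Before the induction I would record one auxiliary fact: $\{0,1\} \subseteq \evaluationSet{\ff}$ for every $\ff \in \QSLA$ — essentially \Cref{lem:zero_one_preservation}, proved by an easy induction over the same table, since $\{0,1\}$ already contains $0$ and $1$ and every table operation preserves their membership ($0\cdot 0 = 0$, $1\cdot 1 = 1$, $1-0=1$, $1-1=0$, unions and pointwise $\max/\min$ trivially, quantifiers and magic wand inherit). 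This is what rescues the binary cases that $\probsizeof{\cdot}$ does \emph{not} count.

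For the induction step I would handle four groups of cases. First, the atomic case $\ff = \iverson{\slb}$: $\sizeof{\evaluationSet{\ff}} = \sizeof{\{0,1\}} = 2 = 2^{0+1}$ and $\probsizeof{\ff} = 0$, done. Second, the three ``multiplicative'' constructs $\qq \cdot \fg + (1-\qq)\cdot\fh$, $\fg\cdot\fh$ and $\fg\sepcon\fh$: in each, $\evaluationSet{\ff}$ is the image of $\evaluationSet{\fg}\times\evaluationSet{\fh}$ under a binary map, so $\sizeof{\evaluationSet{\ff}} \le \sizeof{\evaluationSet{\fg}}\cdot\sizeof{\evaluationSet{\fh}}$, which by the induction hypothesis is at most $2^{\probsizeof{\fg}+1}\cdot 2^{\probsizeof{\fh}+1} = 2^{\probsizeof{\fg}+\probsizeof{\fh}+2}$; since these are precisely the constructs counted by $\probsizeof{\cdot}$, we have $\probsizeof{\ff} = \probsizeof{\fg}+\probsizeof{\fh}+1$, so the bound is exactly $2^{\probsizeof{\ff}+1}$, with no slack lost. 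Third, the ``inheriting'' unary constructs $1-\fg$, $\Sup\xx\colon\fg$, $\Inf\xx\colon\fg$ and $\iverson{\slb}\sepimp\fg$: here $\evaluationSet{\ff}$ is either $\evaluationSet{\fg}$ or its injective image $1-\evaluationSet{\fg}$, so $\sizeof{\evaluationSet{\ff}} = \sizeof{\evaluationSet{\fg}}$, while $\probsizeof{\ff} = \probsizeof{\fg}$, so the induction hypothesis transfers directly.

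The remaining group — and the part I expect to be the actual obstacle — consists of the ``union'' constructs $\iverson{\bb}\cdot\fg + \iverson{\neg\bb}\cdot\fh$, $\emax{\fg}{\fh}$ and $\emin{\fg}{\fh}$. For each, $\evaluationSet{\ff}\subseteq\evaluationSet{\fg}\cup\evaluationSet{\fh}$: literally by definition for the Boolean choice, and because $\max(\rb,\rc)$ and $\min(\rb,\rc)$ always lie in $\{\rb,\rc\}$ for the other two. These three constructs contribute nothing to $\probsizeof{\cdot}$, so $\probsizeof{\ff} = \probsizeof{\fg}+\probsizeof{\fh}$, and the naive bound $\sizeof{\evaluationSet{\fg}\cup\evaluationSet{\fh}}\le\sizeof{\evaluationSet{\fg}}+\sizeof{\evaluationSet{\fh}}$ is too weak — it already fails whenever one subformula has $\probsizeof{\cdot}=0$, since we cannot afford even a factor-$2$ blow-up. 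This is exactly where the auxiliary lemma pays off: from $\{0,1\}\subseteq\evaluationSet{\fg}\cap\evaluationSet{\fh}$ and $0\neq 1$ one gets $\sizeof{\evaluationSet{\ff}}\le\sizeof{\evaluationSet{\fg}}+\sizeof{\evaluationSet{\fh}}-2\le 2^{\probsizeof{\fg}+1}+2^{\probsizeof{\fh}+1}-2$, and an elementary check shows $2^{a+1}+2^{b+1}-2\le 2^{a+b+1}$ for all integers $a,b\ge 0$ (equality when $\min(a,b)=0$, strict otherwise), which yields $\sizeof{\evaluationSet{\ff}}\le 2^{\probsizeof{\ff}+1}$.

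Finally, the second assertion of the theorem is immediate from the first: by \Cref{thm:qsla_by_sla}, checking $\ff\entails\fg$ amounts to checking one $\SLA$ entailment for each $\ra\in\evaluationSet{\ff}$, and we have just bounded $\sizeof{\evaluationSet{\ff}}$ by $2^{\probsizeof{\ff}+1} = 2^{\bigo(\probsizeof{\ff})}$. The only non-mechanical ingredient in the whole argument is the $\{0,1\}\subseteq\evaluationSet{\cdot}$ fact together with the small arithmetic inequality it feeds into; every other step is routine bookkeeping over \Cref{tab:evluationsets}.
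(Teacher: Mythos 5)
Your proof is correct and follows essentially the same route as the paper's: a structural induction over \Cref{tab:evluationsets}, with the multiplicative, inheriting, and union-style constructs grouped exactly as you do, and with \Cref{lem:zero_one_preservation} ($\{0,1\}\subseteq\evaluationSet{\cdot}$) supplying the crucial rescue for the union cases. The only (cosmetic) difference is how that case is finished: the paper splits on whether $\probsizeof{\fg}$ or $\probsizeof{\fh}$ is zero and in the former case concludes $\evaluationSet{\ff}=\evaluationSet{\fh}$ outright, whereas you use the uniform bound $\sizeof{\evaluationSet{\fg}\cup\evaluationSet{\fh}}\le\sizeof{\evaluationSet{\fg}}+\sizeof{\evaluationSet{\fh}}-2$ together with $2^{a}+2^{b}-1\le 2^{a+b}$ (equivalently $(2^a-1)(2^b-1)\ge 0$), which is a valid and slightly tidier way to close the same gap.
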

\begin{proof}
	By induction on $\ff$. 
	For details see \Cref{app:complexity}.\cmcomment{TR / App.}
\end{proof}
For the size of the resulting $\SLA$ formulae, i.e., aspect (2), recall that we construct entailments of the form
\[
   \atleast{\ra}{\ff} \eentails \atleast{\ra}{\fg}~.
\]
We thus determine an upper bound on the size of any $\SLA$ formula $\atleast{\ra}{\ff}$. Here we make a similar observation as in aspect (1): whenever one of the constructs we count with $\probsizeof{\ff}$ appears, the size of the formula \emph{increases by the exponential factor $\sizeof{\evaluationSet{\ff}}$}. Such a multiplication of increasing exponential expressions then results asymptotically in a squared exponent. The other constructs increase the size by only a constant per construct. By combining both observations we can finally conclude an upper bound on the size of the formula $\atleast{\ra}{\ff}$.
\begin{theorem}\label{thm:atleast_size}
	For any formula $\ff \in \QSLA$ and all probabilities $\ra \in \Probs$, the $\SLA$ formula $\atleast{\ra}{\ff}$ has at most size $ 3 \cdot \sizeof{\ff} \cdot 2^{(\probsizeof{\ff}+1)^2}$. Hence the size of the formula $\atleast{\ra}{\ff}$ is in $\bigo(\sizeof{\ff}) \cdot 2^{\bigo(\probsizeof{\ff}^2)}$.
\end{theorem}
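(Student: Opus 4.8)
The plan is to prove, by structural induction on $\ff$ following the construction in \Cref{tab:at_least}, the stronger statement that $\sizeof{\atleast{\ra}{\ff}} \leq 3 \cdot \sizeof{\ff} \cdot 2^{(\probsizeof{\ff}+1)^2}$ holds \emph{uniformly} for all $\ra \in \Probs$. Uniformity in $\ra$ is what makes the induction go through: the cases $\qq \cdot \fg + (1-\qq) \cdot \fh$, $\fg \cdot \fh$ and $\fg \sepcon \fh$ build $\atleast{\ra}{\ff}$ by referring to the $\SLA$-translations $\atleast{\rb}{\fg}$ and $\atleast{\rc}{\fh}$ for \emph{every} pair $(\rb,\rc) \in \evaluationSet{\fg} \times \evaluationSet{\fh}$, and the $1-\fg$ case refers to $\atleast{\rd}{\fg}$ for a $\rd$ depending on $\ra$; since the claimed right-hand side does not mention the probability argument, one induction hypothesis bounds all of these at once. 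The base case $\ff = \iverson{\slb}$ is immediate, since $\atleast{\ra}{\ff}$ is $\true$ or $\slb$, both of size at most $\sizeof{\ff}$, while $\probsizeof{\ff} = 0$.

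For the inductive step I would separate the remaining constructs into two groups. The \emph{non-branching} constructs --- Boolean choice, $1-\fg$, $\emax{\fg}{\fh}$, $\emin{\fg}{\fh}$, $\Sup\xx\colon\fg$, $\Inf\xx\colon\fg$, and $\iverson{\slb}\sepimp\fg$ --- each assemble $\atleast{\ra}{\ff}$ from the translations of their proper subexpectations using only a constant number of extra connectives and, at worst, one copy of a pure atom ($\bb$ or $\slb$) whose size is already part of $\sizeof{\ff}$. For these, $\probsizeof{\ff}$ is the sum of the $\probsizeof{\cdot}$ of the subexpectations, hence $(\probsizeof{\ff'}+1)^2 \leq (\probsizeof{\ff}+1)^2$ for each subexpectation $\ff'$, and the constant additive growth of the formula is comfortably absorbed by the linear prefactor $3 \cdot \sizeof{\ff}$ together with $2^{(\probsizeof{\ff}+1)^2} \geq 2$. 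These cases are routine (the $1-\fg$ case additionally relies on the uniformity of the hypothesis in the probability, here $\rd$).

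The real work is in the three \emph{branching} constructs $\qq \cdot \fg + (1-\qq)\cdot\fh$, $\fg \cdot \fh$ and $\fg \sepcon \fh$, for which $\atleast{\ra}{\ff}$ is a disjunction indexed by a subset of $\evaluationSet{\fg} \times \evaluationSet{\fh}$ and $\probsizeof{\ff} = \probsizeof{\fg} + \probsizeof{\fh} + 1$. Here I would (i) bound the number of disjuncts by $\sizeof{\evaluationSet{\fg}} \cdot \sizeof{\evaluationSet{\fh}} \leq 2^{\probsizeof{\fg}+1} \cdot 2^{\probsizeof{\fh}+1} = 2^{\probsizeof{\ff}+1}$ using \Cref{thm:evalset_size}, and (ii) bound the size of each disjunct by $\max_{\rb}\sizeof{\atleast{\rb}{\fg}} + \max_{\rc}\sizeof{\atleast{\rc}{\fh}} + 1$, to which the uniform induction hypothesis applies. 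Multiplying (i) by (ii) --- and accounting for the $\vee$-connectives --- and then pushing the exponents inside is exactly where the squared exponent arises; the elementary facts to verify are $(\probsizeof{\fg}+\probsizeof{\fh}+2)^2 \geq (\probsizeof{\fg}+1)^2 + (\probsizeof{\fg}+\probsizeof{\fh}+2)$ (and its symmetric version in $\probsizeof{\fh}$), and $(\probsizeof{\ff}+1)^2 \geq \probsizeof{\ff}+2$, both valid for non-negative integers (the latter using $\probsizeof{\ff} \geq 1$, which holds here). This yields $\sizeof{\atleast{\ra}{\ff}} \leq 3(\sizeof{\fg} + \sizeof{\fh} + 1) \cdot 2^{(\probsizeof{\ff}+1)^2} \leq 3 \cdot \sizeof{\ff} \cdot 2^{(\probsizeof{\ff}+1)^2}$, closing the induction; the particular constant $3$ is exactly what is needed to absorb the extra $\vee$'s and the ``$+1$'' contributed by each disjunct. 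The asymptotic bound $\bigo(\sizeof{\ff}) \cdot 2^{\bigo(\probsizeof{\ff}^2)}$ then follows at once. I expect this bookkeeping in the branching cases to be the main obstacle: one has to keep the two independent sources of blow-up --- the exponentially many disjuncts and the already exponentially large disjuncts --- cleanly separated and check that their product still sits below $2^{(\probsizeof{\ff}+1)^2}$ with enough slack left for the linear prefactor.
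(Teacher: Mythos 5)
Your proposal is correct and follows essentially the same route as the paper's proof: a structural induction with an induction hypothesis that is uniform in the probability argument, the base case and non-branching cases absorbed into the linear prefactor, and the branching cases bounded by (number of disjuncts, via \Cref{thm:evalset_size}) times (per-disjunct size, via the IH). The only divergence is in the final arithmetic of the branching cases: the paper coarsens both $2^{(\probsizeof{\fg}+1)^2}$ and $2^{(\probsizeof{\fh}+1)^2}$ to $2^{\probsizeof{\ff}^2}$, which forces a case split on $\probsizeof{\ff}=1$ versus $\probsizeof{\ff}>1$ and a separate auxiliary inequality (\Cref{lem:difficult_complexity_inequality}), whereas your sharper estimate $(\probsizeof{\fg}+1)^2+(\probsizeof{\ff}+1)\leq(\probsizeof{\ff}+1)^2$ (and its symmetric counterpart) closes all branching cases uniformly --- a minor but genuine simplification; both inequalities you invoke check out for the relevant ranges.
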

\begin{proof}
	By induction on $\ff$. 
	For details see \Cref{app:complexity}.\cmcomment{TR / App.}
\end{proof}
\begin{remark}[Complexity of $\SLA$ Entailments in $\QSLA$]
By \Cref{thm:evalset_size} and \Cref{thm:atleast_size}, the number of entailments and the size of formulae $\atleast{\ra}{\ff}$ is only exponential if $\probsizeof{\ff}$ is not constant. However, we would assume that an entailment $\ff \entails \fg$ in $\QSLA$, where neither in $\ff$ nor in $\fg$ the probabilistic choice $\pp \cdot \fg + (1-\pp) \cdot \fh$ appears, should have a similar runtime complexity as $\SLA$ entailment. While it is easy to see that $\evaluationSet{\ff}=\{0, 1\}$ has constant size in this setting, the size of the formula is still exponential. In the case where no probabilistic choice is present, we generate multiple exponentially-sized tautologies of the form $\atleast{0}{\ff}$. However, due to \Cref{remark:true_in_sla} we can eliminate all occurrences of $\atleast{0}{\ff}$. That means, if $\ff$ does not contain $\pp \cdot \fg + (1-\pp) \cdot \fh$, then for $\ra \neq 0$, we can construct an equivalent formula to $\atleast{\ra}{\ff}$ in such a way that its size is in $\bigo(\sizeof{\ff})$ and $\sizeof{\evaluationSet{\ff}}=2$. \hfill $\triangle$
\end{remark}

\section{Application: Decidable $\hpgcl$ Verification}
\label{sec:applications}
\begin{table}[t]
	\centering
	\caption{$\SLA$ requirements for entailment checking in $\QSLA$.}%
	\label{tab:sla_requirements}%
	\renewcommand{\arraystretch}{1.25}%
	\begin{tabular}{l@{\qquad}l@{\qquad}l@{\qquad}l}
		\hline\hline
		$\QSLfrag$ fragment contains		&          $\SLfrag$ contains/is closed under		
		\\
		\hline
        $\iverson{\slb}$                                      &  $\slb$, $\true$  
        \\
        $\iverson{\bb} \cdot \ff + \iverson{\neg\bb} \cdot \fg$ &  $\bb$,$\neg\bb$, $\land$, $\lor$                                      
        \\
        $\pp \cdot \ff + (1-\pp) \cdot \fg$                     & $\land, \lor$                                       
        \\
        $\ff \cdot \fg$                                         & $\land, \lor$                                       
        \\
        $1 - \ff$                                               & $\neg$, $\true$                          
        \\

        $\emax{\ff}{\fg}$                                       &  $\vee$                                
        \\

        $\emin{\ff}{\fg}$                                       &  $\wedge$                                
        \\
        $\Sup \xx \colon \ff$                                   & $\exists$                            
        \\
        $\Inf \xx \colon \ff$                                   & $\forall$                              
        \\
        $\ff \sepcon \fg$                                       &       $\sepcon,\lor$                             
        \\
        $\iverson{\slb} \sepimp \ff$                          &  $\slb \sepimp \cdot$                      
        \\
        %
        %
        %
        %
        %
		%
		\hline\hline
	\end{tabular}%
	\renewcommand{\arraystretch}{1}%
	%
\end{table}
Since entailment in full separation logic is undecidable, it is common to consider \emph{fragments} of separation logic with a (semi-)decidable entailment problem. 
Given a $\QSLA$ fragment $\QSLfrag$, we provide sufficient and easy-to-check characterizations on $\SLA$ fragments $\SLfrag$ ensuring that entailment checking in $\QSLfrag$ reduces to entailment checking in $\SLfrag$. This simplifies the search for decidable fragments of \emph{quantitative} separation logic. 

We then apply our results in \Cref{sec:qsh} to show the decidability of entailment checking for \emph{quantitative symbolic heaps}---a quantitative extension of the well-known symbolic heap fragment of separation logic---and demonstrate the applicability to the verification of probabilistic pointer programs. \\ \\
Our reduction from entailments in $\QSLA$ to entailments in $\SLA$ relies on the construction of the $\atleast{\ra}{\ff}$ formulae from \Cref{sec:atleast}. This \mbox{suggests to define:}
\begin{definition}
	Let $\QSLfrag$ be a $\QSLA$ fragment. We say that an $\SLA$ fragment $\SLfrag$ is $\QSLfrag$-admissible if $\atleast{\ra}{\ff} \in \SLfrag$ holds for all $\ff \in \QSLfrag$ and all $\ra \in \Probs$.  \hfill $\triangle$
\end{definition}
The syntactic nature of our construction of the $\SLfrag$ formulae $\atleast{\ra}{\ff}$ 
allows for a syntactic criterion on $\SLA$ fragments to be $\QSLfrag$-admissible.
\begin{lemma}
	\label{lem:sla_requirements}
	Let $\QSLfrag$ be a $\QSLA$ fragment. If an $\SLA$ fragment $\SLfrag$ satisfies the requirements provided in \Cref{tab:sla_requirements}, then $\SLfrag$ is $\QSLfrag$-admissible.
\end{lemma}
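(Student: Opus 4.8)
The plan is to prove the statement by structural induction on $\ff \in \QSLfrag$, strengthening the induction hypothesis to: \emph{for every subformula $\fg$ of $\ff$ and every $\ra \in \Probs$, we have $\atleast{\ra}{\fg} \in \SLfrag$}. Quantifying over \emph{all} probabilities is essential, because the definition of $\atleast{\ra}{\ff}$ in \Cref{tab:at_least} refers, for the arithmetic connectives, to formulae $\atleast{\rb}{\fg}$ and $\atleast{\rc}{\fh}$ for various $\rb \in \evaluationSet{\fg}$ and $\rc \in \evaluationSet{\fh}$. We treat $\QSLfrag$ as a sub-grammar of the $\QSLA$ grammar, so it is closed under taking subformulae; hence the induction is well-founded and the hypothesis applies to the immediate subformulae of $\ff$. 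In the base case $\ff = \iverson{\slb}$ we have $\atleast{\ra}{\ff} \in \{\true, \slb\}$, and both lie in $\SLfrag$ by the first row of \Cref{tab:sla_requirements}.

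For each composite construct, I would read off the shape of $\atleast{\ra}{\ff}$ from \Cref{tab:at_least} and check that it is built only from (i) formulae $\atleast{\rb}{\fg}$, $\atleast{\rc}{\fh}$ of immediate subformulae, which lie in $\SLfrag$ by the induction hypothesis, and (ii) exactly the connectives and atoms listed in the corresponding row of \Cref{tab:sla_requirements}, under which $\SLfrag$ is assumed closed. The routine cases are $\atleast{\ra}{\emax{\fg}{\fh}} = \atleast{\ra}{\fg} \vee \atleast{\ra}{\fh}$ ($\vee$-closure), $\atleast{\ra}{\emin{\fg}{\fh}} = \atleast{\ra}{\fg} \wedge \atleast{\ra}{\fh}$ ($\wedge$-closure), $\atleast{\ra}{\iverson{\bb}\cdot\fg+\iverson{\neg\bb}\cdot\fh} = (\bb \wedge \atleast{\ra}{\fg}) \vee (\neg\bb \wedge \atleast{\ra}{\fh})$ (atoms $\bb$, $\neg\bb$ and $\wedge,\vee$), $\atleast{\ra}{\Sup \xx \colon \fg} = \exists \xx \colon \atleast{\ra}{\fg}$ ($\exists$-closure), $\atleast{\ra}{\Inf \xx \colon \fg} = \forall \xx \colon \atleast{\ra}{\fg}$ ($\forall$-closure), and $\atleast{\ra}{\iverson{\slb} \sepimp \fg} = \slb \sepimp \atleast{\ra}{\fg}$ (closure under forming $\slb \sepimp (\cdot)$). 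Each of these is immediate once the induction hypothesis is in place.

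The cases deserving more care are those whose definition ranges over the overapproximations $\evaluationSet{\cdot}$. For $\ff \in \{\pp \cdot \fg + (1-\pp)\cdot\fh,\; \fg \cdot \fh,\; \fg \sepcon \fh\}$ the formula $\atleast{\ra}{\ff}$ is a disjunction indexed by pairs $(\rb,\rc) \in \evaluationSet{\fg} \times \evaluationSet{\fh}$; by \Cref{thm:evalset} these index sets are finite, so this is a genuine finite formula, expressible in $\SLfrag$ via iterated $\vee$, and each disjunct is $\atleast{\rb}{\fg} \wedge \atleast{\rc}{\fh}$ (resp.\ $\atleast{\rb}{\fg} \sepcon \atleast{\rc}{\fh}$), which lies in $\SLfrag$ by the induction hypothesis together with $\wedge$- resp.\ $\sepcon$-closure. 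One should also observe that the index set is non-empty — the pair $(1,1)$ always satisfies the side condition since $1 \in \evaluationSet{\fg} \cap \evaluationSet{\fh}$ by \Cref{lem:zero_one_preservation} and $\ra \leq 1$ — so no $\false$ atom is ever needed. For $\ff = 1 - \fg$ and $\ra \neq 0$ one first checks that the witness $\rd = \min \setcomp{\rb \in \evaluationSet{\fg}}{\rb > 1-\ra}$ is well-defined: the set is finite and contains $1$ because $1-\ra < 1$ and $1 \in \evaluationSet{\fg}$; then $\atleast{\ra}{1-\fg} = \neg\,\atleast{\rd}{\fg} \in \SLfrag$ by the induction hypothesis and $\neg$-closure, while for $\ra = 0$ it is $\true \in \SLfrag$. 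The step I would double-check most carefully — and the only place where the argument is more than bookkeeping — is precisely that these $\evaluationSet{\cdot}$-indexed disjunctions stay inside $\SLfrag$: that finiteness from \Cref{thm:evalset} is genuinely sufficient, and that no Boolean constant or connective outside those listed in \Cref{tab:sla_requirements} is silently required.
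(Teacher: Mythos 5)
Your proposal is correct and follows essentially the same route as the paper's proof: structural induction on $\ff$ with the induction hypothesis quantified over all probabilities, reading off each row of \Cref{tab:at_least} against \Cref{tab:sla_requirements}, and using the finiteness of $\evaluationSet{\cdot}$ (\Cref{thm:evalset}) together with $1 \in \evaluationSet{\fg} \cap \evaluationSet{\fh}$ (\Cref{lem:zero_one_preservation}) to ensure the indexed disjunctions are finite and non-empty and that the witness $\rd$ in the $1-\fg$ case exists. The only additions beyond the paper's argument are minor clarifications (sub-grammar closure under subformulae, explicit well-definedness of $\rd$) that do not change the substance.
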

\begin{proof}
	By induction on $\ff$. For details see \Cref{app:applications}.
\end{proof}
Finally, we provide a sufficient criterion for the decidability of entailment in $\QSLA$ fragments given $\SLA$ fragments with a decidable entailment problem. 
Since entailment checks $\sla \entails \slb$ in $\SLA$ can often (but not always) be reduced to unsatisfiability checks $\sla \wedge \neg\slb$, we take a more fine-grained perspective and distinguish between fragments for the left- and the right-hand side of entailments, respectively. This distinction matters when, e.g., $\SLA$ fragments with a decidable satisfiability problem impose restrictions on quantifiers (\cf \cite{Echenim2020Bernays}).
\begin{theorem}
	\label{thm:decide_entailment_qsl_by_sl}
	Let $\QSLfrag_1, \QSLfrag_2$ be $\QSLA$ fragments, and let $\SLfrag_1,\SLfrag_2$ be $\SLA$ fragments. If $\SLfrag_1$ is $\QSLfrag_1$-admissible and $\SLfrag_2$ is $\QSLfrag_2$-admissible, then
	\begin{align*}
	   &\sla \entails \slb ~\text{for}~\sla \in \SLfrag_1, \slb\in \SLfrag_2 ~\text{is decidable}~ \\
	   \text{implies} \qquad & 
	   \fg \entails \ff ~\text{for}~ \fg \in \QSLfrag_1, \ff\in \QSLfrag_2 ~\text{is decidable}~.
	\end{align*}
\end{theorem}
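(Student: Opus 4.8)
The plan is to assemble a decision procedure for $\QSLA$ entailments out of the three ingredients already established: the reduction of \Cref{thm:qsla_by_sla}, the effective finiteness of $\evaluationSet{\cdot}$ from \Cref{thm:evalset}, and the effective constructibility of the $\atleast{\ra}{\cdot}$ formulae from \Cref{thm:atleast}; admissibility is then invoked to guarantee that the generated $\SLA$ formulae lie in $\SLfrag_1$ and $\SLfrag_2$, so that the assumed $\SLA$ decision procedure applies.

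First I would fix $\fg \in \QSLfrag_1$ and $\ff \in \QSLfrag_2$ and describe the procedure explicitly. By \Cref{thm:evalset}, the set $\evaluationSet{\fg} \subset \Probs$ is finite and effectively constructible; enumerate its elements $\ra_1, \dots, \ra_k$. By \Cref{thm:atleast}, for each $\ra_i$ the $\SLA$ formulae $\atleast{\ra_i}{\fg}$ and $\atleast{\ra_i}{\ff}$ are effectively constructible. By \Cref{thm:qsla_by_sla}, $\fg \entails \ff$ holds if and only if $\atleast{\ra_i}{\fg} \entails \atleast{\ra_i}{\ff}$ holds for every $i \in \{1, \dots, k\}$. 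Hence deciding the $\QSLA$ entailment reduces effectively (indeed, by a truth-table reduction) to deciding finitely many $\SLA$ entailments.

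Second I would check that each of these $\SLA$ entailments is in scope of the assumed decision procedure. Since $\SLfrag_1$ is $\QSLfrag_1$-admissible and $\fg \in \QSLfrag_1$, we have $\atleast{\ra_i}{\fg} \in \SLfrag_1$; since $\SLfrag_2$ is $\QSLfrag_2$-admissible and $\ff \in \QSLfrag_2$, we have $\atleast{\ra_i}{\ff} \in \SLfrag_2$. Thus each $\atleast{\ra_i}{\fg} \entails \atleast{\ra_i}{\ff}$ is of the form $\sla \entails \slb$ with $\sla \in \SLfrag_1$ and $\slb \in \SLfrag_2$, and is therefore decidable by hypothesis. Running the assumed procedure on all $k$ instances and answering ``yes'' precisely when all of them succeed yields a decision procedure for $\fg \entails \ff$, which proves the claim.

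Two points deserve a remark but pose no real difficulty. The indices $\ra_i$ are drawn from $\evaluationSet{\fg}$, i.e.\ from the \emph{left}-hand side of the entailment, and need not belong to $\evaluationSet{\ff}$; this is harmless because $\atleast{\ra}{\cdot}$ is defined, and effectively constructible, for \emph{every} rational probability $\ra$ by \Cref{thm:atleast}, not only for those in the evaluation set of its argument. And one genuinely needs \emph{both} admissibility hypotheses, one to place the left-hand sides in $\SLfrag_1$ and one to place the right-hand sides in $\SLfrag_2$; this is exactly why the statement is phrased with separate fragments for the two sides of the entailment. I do not anticipate a substantive obstacle here: all the technical content sits in \Cref{thm:evalset}, \Cref{thm:atleast}, and \Cref{thm:qsla_by_sla}, and the present theorem is essentially bookkeeping that packages them together with the notion of $\QSLfrag$-admissibility.
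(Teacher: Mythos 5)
Your proof is correct and takes essentially the same route as the paper, which simply cites \Cref{thm:qsla_by_sla}; you have spelled out the bookkeeping (finiteness and effectiveness of $\evaluationSet{\fg}$, effective constructibility of the $\atleast{\ra}{\cdot}$ formulae, and the role of the two admissibility hypotheses in placing the left- and right-hand sides into $\SLfrag_1$ and $\SLfrag_2$) that the paper leaves implicit.
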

\begin{proof}
	This is a consequence of \Cref{thm:qsla_by_sla}. \qed
\end{proof}

\subsection{Quantitative Symbolic Heaps}
\label{sec:qsh}
\noindent
We now demonstrate that our approach can facilitate the automated verification of probabilistic
pointer programs by providing a sample $\QSL$ fragment with a decidable entailment problem.

Recall that $\QSLA$ is parameterized by a set $\Predset$ of predicate symbols. We obtain the quantitative symbolic heap fragment of $\QSL$ by instantiating $\Predset$.
\begin{definition}
   Let $\Predset$ be the set of predicate symbols given by 
   \begin{align*}
   \Predset ~=~ & \{\: \true, \slemp \:\} 
           ~\cup~ \{\: \slsingleton{\xx}{\tuplenotation{\xy_1,\ldots,\xy_\recnum}} ~\mid~ \xx,\xy_1,\ldots,\xy_\recnum \in \Vars \:\} 
   \\
                & ~\cup~ \{\: \xx = \xy,~ \xx \neq \xy,~ \xx=\xy \wedge \slemp,~ \xx\neq\xy \wedge \slemp ~\mid~ \xx,\xy \in \Vars 
                  \:\}~.
   \end{align*}
	Then the set $\QSH$ of \emph{quantitative symbolic heaps} is given by the grammar
	\begin{align*}
		\ff \quad \rightarrow \quad \iverson{\slb} ~\mid~ \iverson{\bb} \cdot \ff + \iverson{\neg\bb}\cdot \ff ~\mid~ \qq \cdot \ff + (1-\qq)\cdot \ff ~\mid~ \Sup \xx \colon \ff 
		~\mid~ \ff\sepcon\ff~. \tag*{$\triangle$}
   \end{align*}
\end{definition}
Quantitative symbolic heaps naturally extend the symbolic heap fragment of separation logic.
Intuitively speaking, a quantitative symbolic heap $\ff$ specifies \emph{probability (sub-)distributions} over (symbolic) heaps. By applying \Cref{thm:qsla_by_sla}, we obtain the following decidability result.
\begin{theorem}
	\label{thm:decide_wp_qsh}
	For loop- and allocation-free $\hpgcl$ programs $\cc$
    (that only perform pointer operations, no arithmetic, and guards from the pure fragment of $\Predset$) and $\ff_1,\ff_2 \in \QSH$,
    it is decidable whether the entailment
	$\wlp{\cc}{\ff_1} \entails \ff_2$ holds.
\end{theorem}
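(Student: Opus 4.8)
The plan is to reduce the quantitative entailment $\wlp{\cc}{\ff_1}\entails\ff_2$ to finitely many entailments in a decidable fragment of separation logic, in three steps. First I would compute $\wlp{\cc}{\ff_1}$ compositionally via the rules of \Cref{tab:wlp}. Since $\cc$ is loop-free and touches the heap only through pointer operations over variables, with guards drawn from the pure fragment of $\Predset$, the hypotheses of \Cref{thm:wp_closed} are met, so $\wlp{\cc}{\ff_1}$ is a well-defined $\QSLA$ formula. I would then strengthen that induction on the (loop- and allocation-free) structure of $\cc$ to show that $\wlp{\cc}{\ff_1}$ in fact lies in the extended fragment $\eQSH$ obtained from $\QSH$ by additionally allowing magic wands $\iverson{\slb}\sepimp(\cdot)$ whose antecedent $\slb$ is a points-to predicate: the rules for \SKIP, probabilistic choice, conditional choice, and sequential composition preserve membership by the induction hypothesis; the rules for disposal, mutation, and lookup introduce only $\Sup$-quantifiers, separating conjunctions, and magic wands with a points-to antecedent; and---crucially---because $\cc$ is allocation-free, the rule for \ALLOC{\xx}{\ee}, which is the only one producing an $\Inf$-quantifier (and hence, after the reduction below, a universal quantifier), is never applied. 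The variable-for-variable substitutions triggered by assignments keep us inside $\eQSH$, since every predicate symbol of $\Predset$ is closed under them.

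Next I would apply \Cref{thm:qsla_by_sla} to the pair $\wlp{\cc}{\ff_1},\ff_2$, both of which belong to $\QSLA$: the entailment $\wlp{\cc}{\ff_1}\entails\ff_2$ holds iff $\atleast{\ra}{\wlp{\cc}{\ff_1}}\entails\atleast{\ra}{\ff_2}$ holds in $\SLA$ for every $\ra\in\evaluationSet{\wlp{\cc}{\ff_1}}$. By \Cref{thm:evalset} this evaluation set is finite and effectively constructible, and by \Cref{thm:atleast} the $\SLA$ formulae $\atleast{\ra}{\wlp{\cc}{\ff_1}}$ and $\atleast{\ra}{\ff_2}$ are effectively constructible. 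Hence it remains to decide each of these finitely many $\SLA$ entailments.

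For the last step I would instantiate \Cref{thm:decide_entailment_qsl_by_sl} with $\QSLfrag_1=\eQSH$ and $\QSLfrag_2=\QSH$. By \Cref{lem:sla_requirements} (\Cref{tab:sla_requirements}) it is enough to exhibit $\SLA$ fragments $\SLfrag_2\subseteq\SLfrag_1$ such that $\SLfrag_2$ contains the atoms of $\Predset$ together with $\true$ and is closed under $\land$, $\lor$, $\exists$, and $\sepcon$, while $\SLfrag_1$ is additionally closed under magic wand with a points-to antecedent. Because $\eQSH$ contains neither $\Inf$ nor the negation construct $1-(\cdot)$, inspecting \Cref{tab:at_least} shows that $\atleast{\ra}{\wlp{\cc}{\ff_1}}$ introduces no universal quantifier and no negation of a non-pure formula, and that magic wands appear only on the left-hand side of the residual entailments (while $\atleast{\ra}{\ff_2}$ is magic-wand-free because $\ff_2\in\QSH$). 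Taking $\SLfrag_1$ and $\SLfrag_2$ to be appropriate instances of the Bernays--Sch\"onfinkel--Ramsey fragment of separation logic of~\cite{Echenim2020Bernays}, whose entailment problem (with left-hand side in $\SLfrag_1$ and right-hand side in $\SLfrag_2$) is decidable, \Cref{thm:decide_entailment_qsl_by_sl} finally yields decidability of $\wlp{\cc}{\ff_1}\entails\ff_2$.

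The hard part will be this last step: choosing $\SLfrag_1$ and $\SLfrag_2$ so that they are simultaneously $\eQSH$- respectively $\QSH$-admissible \emph{and} lie within the reach of the decidability result of~\cite{Echenim2020Bernays}. This amounts to checking the quantifier-prefix and $\sepcon/\sepimp$-nesting side conditions of that result against the concrete shape of the $\atleast{\ra}{\cdot}$ formulae---in particular that the finite disjunctions over evaluation sets and the magic wands with points-to antecedents stay admissible---and to being careful about the passage from entailment to (un)satisfiability, which turns the existential quantifiers of the right-hand side into universal ones; it is essential here that $\ff_2$ lies in $\QSH$, so that the right-hand side carries no magic wand and only a bounded quantifier prefix.
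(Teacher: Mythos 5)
Your proposal is correct and follows essentially the same route as the paper: it introduces the same extended fragment $\eQSH$ (quantitative symbolic heaps plus points-to--guarded magic wands), proves closure of $\eQSH$ under $\wlpC{\cc}$ for loop- and allocation-free $\cc$ (the paper's \Cref{lem:qsh_wp_closed}), reduces via \Cref{thm:qsla_by_sla} and \Cref{thm:decide_entailment_qsl_by_sl} to finitely many $\SLA$ entailments with an $\eQSH$-admissible left fragment and a $\QSH$-admissible right fragment, and discharges these through the Bernays--Sch\"onfinkel--Ramsey result of~\cite{Echenim2020Bernays}. You also correctly pinpoint the delicate final step---prenexing to an $\exists^*\forall^*$ prefix in which no magic wand contains a universally quantified variable, which relies on $\ff_2\in\QSH$ being wand-free---which is exactly what the paper's \Cref{lem:decide_entailment_eqsh_qsh} and its prenexing lemmas verify.
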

Hence, for loop- and allocation-free programs $\cc$ as above, \emph{upper bounds} (in terms of quantitative symbolic heaps $\ff_2$) on the probability $\wlp{\cc}{\ff_1} $ of terminating in a given quantitative symbolic heap $\ff_1$ are decidable. We refer to \Cref{sec:motivation:swap} for an example entailment involving quantitative symbolic heaps. In the sequel, we show how to prove the above result.
%
%
%
%
\subsubsection{Proof of \Cref{thm:decide_wp_qsh}.}
The proof relies on \emph{extended} quantitative symbolic heaps $\eQSH$, which include magic wands with points-to formulae on their left-hand side.
\begin{definition}\label{def:eQSH}
	The set $\eQSH$ of \emph{extended quantitative symbolic heaps} is given by the grammar 
	\begin{align*}
		\fg \quad \rightarrow \quad 
		&\iverson{\slb}
		~\mid~ \iverson{\bb} \cdot \fg + \iverson{\neg\bb}\cdot \fg 
		 ~\mid~ \qq \cdot \fg + (1-\qq)\cdot \fg 		
		~\mid~ \fg\sepcon\fg  \\
		& ~\mid~ \Sup \xx \colon \fg 
		   ~\mid~ \singleton{\xx}{\tuplenotation{\xy_1,\ldots,\xy_\recnum}} \sepimp \fg~.
		   \tag*{$\triangle$}
	\end{align*}
\end{definition}
Notice that indeed $\QSH \subseteq \eQSH$.
\begin{lemma} \label{lem:qsh_wp_closed}
		For every loop- and allocation-free program $\cc \in \hpgcl$ without arithmetic and only with guards of the pure fragment of $\Predset$, extended quantitative symbolic heaps are closed under $\wlpC{\cc}$, i.e.,
		\[
		   \text{for all $\fg \in \eQSH$}\colon \quad \wlp{\cc}{\fg} \in \eQSH~.
		\]
		In particular, since $\QSH \subseteq \eQSH$, we have 
		\[
		    \text{for all $\ff \in \QSH$}\colon \quad \wlp{\cc}{\ff} \in \eQSH~. 
		\]
\end{lemma}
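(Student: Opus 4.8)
The plan is to proceed by induction on the structure of the loop- and allocation-free program $\cc$, using the rules for $\wlpsymbol$ from \Cref{tab:wlp}. The base and composite cases split along the grammar of $\hpgcl$ restricted to the allowed statements: $\SKIP$, $\ASSIGN{x}{\ee}$ (with $\ee$ a variable, since there is no arithmetic), $\PCHOICE{\cc_1}{\pp}{\cc_2}$, $\COMPOSE{\cc_1}{\cc_2}$, $\ITE{\guard}{\cc_1}{\cc_2}$ with $\guard$ from the pure fragment of $\Predset$, $\FREE{\ee}$, $\ASSIGNH{x}{\ee}$, and $\HASSIGN{\ee}{\ee'}$. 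For each, I would take an arbitrary $\fg \in \eQSH$ and show $\wlp{\cc}{\fg} \in \eQSH$; the $\QSH \subseteq \eQSH$ remark then immediately yields the ``in particular'' claim for postexpectations $\ff \in \QSH$.

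First the easy cases. For $\SKIP$, $\wlp{\SKIP}{\fg} = \fg \in \eQSH$ directly. For $\ASSIGN{x}{\ee}$, since $\ee$ is a variable, $\wlp{\ASSIGN{x}{\ee}}{\fg} = \fg\subst{x}{\ee}$ is a variable-for-variable substitution, and a routine sub-induction shows $\eQSH$ is closed under such substitutions (the atoms $\slemp$, points-to, equalities/disequalities, and $\true$ are all closed under renaming variables, and the grammar operators commute with substitution, with bound variables renamed to avoid capture). For $\PCHOICE{\cc_1}{\pp}{\cc_2}$, the result is $\pp \cdot \wlp{\cc_1}{\fg} + (1-\pp) \cdot \wlp{\cc_2}{\fg}$, which is in $\eQSH$ by the induction hypotheses and the convex-combination production. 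For $\COMPOSE{\cc_1}{\cc_2}$, apply the induction hypothesis for $\cc_2$ to get $\wlp{\cc_2}{\fg} \in \eQSH$, then the induction hypothesis for $\cc_1$ to that formula. For $\ITE{\guard}{\cc_1}{\cc_2}$, the result $\iverson{\guard} \cdot \wlp{\cc_1}{\fg} + \iverson{\neg\guard} \cdot \wlp{\cc_2}{\fg}$ is a Boolean choice; since $\guard$ is from the pure fragment of $\Predset$, it is an admissible pure guard $\bb$, so this matches the $\iverson{\bb}\cdot\fg + \iverson{\neg\bb}\cdot\fh$ production.

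The heap-accessing statements are where the real work lies. For $\FREE{\ee}$, the result is $\validpointer{\ee} \sepcon \fg$; here $\ee$ is a variable $\xx$ (no arithmetic), so $\validpointer{\xx}$ desugars to $\Sup z\colon \singleton{\xx}{z}$, which lies in $\QSH \subseteq \eQSH$, and the separating conjunction with $\fg \in \eQSH$ stays in $\eQSH$. For $\HASSIGN{\ee}{\ee'}$ with $\ee = \xx$, $\ee' = \xy$ variables, the result is $\validpointer{\xx} \sepcon (\singleton{\xx}{\xy} \sepimp \fg)$; the magic wand has a points-to formula on its left, so $\singleton{\xx}{\xy}\sepimp\fg$ is exactly an $\eQSH$ construct, and again the $\sepcon$ with $\validpointer{\xx} \in \eQSH$ is fine. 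For $\ASSIGNH{x}{\ee}$ with $\ee = \xx$ a variable, the result is $\Sup y\colon \singleton{\xx}{y} \sepcon (\singleton{\xx}{y} \sepimp \fg\subst{x}{y})$: by the substitution lemma $\fg\subst{x}{y} \in \eQSH$, the inner wand with a points-to left-hand side is an $\eQSH$ construct, the $\sepcon$ with a points-to atom stays in $\eQSH$, and the $\Sup y$ binder is permitted by the grammar. I expect the main obstacle to be the bookkeeping in the substitution lemma for $\eQSH$ — in particular handling capture of bound variables under $\Sup$ and under the left-hand side of the magic wand, and confirming that after substitution the magic wand still has a \emph{points-to} formula (not an arbitrary formula) on its left. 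Since $\recnum$ may exceed $1$, one should also check that the tuple form $\singleton{\xx}{\tuplenotation{\xy_1,\ldots,\xy_\recnum}}$ behaves correctly under renaming; but as all components are variables this is straightforward. Assembling these cases completes the induction.
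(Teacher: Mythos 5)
Your proposal is correct and follows essentially the same route as the paper's proof: induction on the program structure, using that the no-arithmetic restriction forces expressions to be single variables, that $\eQSH$ is closed under variable-for-variable substitution, and that each $\wlpsymbol$ rule lands in an $\eQSH$ production (with $\validpointer{\ee}$ unfolded to $\Sup z\colon\singleton{\ee}{z}$ and the wand always carrying a points-to formula on its left). The only cosmetic differences are that the paper fixes $\recnum=1$ and pulls the supremum in the lookup/mutation cases to the outermost level, whereas you keep $\validpointer{\ee}$ as a separate $\eQSH$ conjunct; both are admissible under the grammar.
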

\begin{proof}
By induction on the structure of loop- and allocation-free program $\cc$. 
See \Cref{app:qsh} for details.\cmcomment{TR / App.}
\end{proof}
Hence, if $\fg \models \ff$ is decidable for $\fg \in \eQSH$ and $\ff \in \QSH$, \Cref{thm:decide_wp_qsh} follows.
\begin{lemma}
	\label{lem:decide_entailment_eqsh_qsh}
	For $\fg \in \eQSH$ and $\ff \in \QSH$, it is decidable whether 
	      $\fg \entails \ff$ holds.
\end{lemma}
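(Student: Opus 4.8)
The plan is to combine the master reduction of Theorem~\ref{thm:qsla_by_sla} with Theorem~\ref{thm:decide_entailment_qsl_by_sl}, instantiating the left fragment as $\eQSH$ and the right fragment as $\QSH$. By Theorem~\ref{thm:qsla_by_sla} the entailment $\fg\entails\ff$ is equivalent to the \emph{finite} conjunction of $\SLA$ entailments $\atleast{\ra}{\fg}\entails\atleast{\ra}{\ff}$ for $\ra\in\evaluationSet{\fg}$, all of which are effectively constructible (Theorems~\ref{thm:evalset} and~\ref{thm:atleast}); hence it suffices to find $\SLA$ fragments $\SLfrag_1\ni\atleast{\ra}{\fg}$ and $\SLfrag_2\ni\atleast{\ra}{\ff}$ for which the asymmetric entailment problem $\sla\entails\slb$ with $\sla\in\SLfrag_1$, $\slb\in\SLfrag_2$ is decidable.

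First I would read the shape of $\SLfrag_1,\SLfrag_2$ off Table~\ref{tab:at_least}. Because $\QSH$ is built only from $\iverson{\slb}$, Boolean choice, probabilistic choice, $\Sup\,\xx\colon(\cdot)$ and $\sepcon$, the formula $\atleast{\ra}{\ff}$ for $\ff\in\QSH$ is a combination of the atoms of $\Predset$ ($\true$, $\slemp$, points-to, (dis)equalities and their $\slemp$-conjoined variants) using only $\land$, $\lor$, $\exists$ and $\sepcon$ --- in particular no magic wand. For $\fg\in\eQSH$ the sole new case, $\singleton{\xx}{\vec{\xy}}\sepimp\fg'$, is translated to $\slsingleton{\xx}{\vec{\xy}}\sepimp\atleast{\ra}{\fg'}$, so $\atleast{\ra}{\fg}$ may additionally use magic wands, but always with a single points-to formula as antecedent. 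Taking $\SLfrag_2$ to be the closure of the $\Predset$-atoms under $\land,\lor,\exists,\sepcon$ and $\SLfrag_1$ its further closure under $\slsingleton{\xx}{\vec{\xy}}\sepimp(\cdot)$, Lemma~\ref{lem:sla_requirements} (checking the relevant rows of Table~\ref{tab:sla_requirements}) shows that $\SLfrag_2$ is $\QSH$-admissible and $\SLfrag_1$ is $\eQSH$-admissible; by Theorem~\ref{thm:decide_entailment_qsl_by_sl} it thus remains to decide $\sla\entails\slb$ for $\sla\in\SLfrag_1$, $\slb\in\SLfrag_2$.

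For that last step I would reduce $\sla\entails\slb$ to unsatisfiability of $\sla\land\neg\slb$ and place it in the Bernays--Sch\"onfinkel--Ramsey fragment of separation logic of~\cite{Echenim2020Bernays}. The only point requiring genuine work --- and the main obstacle --- is the magic wand in $\sla$: BSR-SL with unrestricted $\sepimp$ is problematic, but every magic wand in $\SLfrag_1$ has a \emph{precise} antecedent $\slsingleton{\xx}{\vec{\xy}}$ (with a unique heap model), so $\slsingleton{\xx}{\vec{\xy}}\sepimp\psi$ is equivalent to a magic-wand-free formula obtained by a case split: it returns $\true$ when $\xx$ is already allocated, and otherwise it pushes the \enquote{add the cell $\xx\mapsto\vec{\xy}$} operation structurally through $\psi$ --- turning a points-to atom into an equality/emptiness constraint, $\slemp$ into $\false$, distributing over $\sepcon,\land,\lor$, and (again by precision of the antecedent) commuting with $\exists$. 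After this elimination, both $\sla$ and $\slb$ only use the atoms of $\Predset$, the connectives $\land,\lor,\neg,\sepcon$, and existential quantifiers; prenexing $\sla$ to $\exists\vec x\colon M_\sla$ and $\slb$ to $\exists\vec y\colon M_\slb$ then gives $\sla\land\neg\slb\equiv\exists\vec x\,\forall\vec y\colon(M_\sla\land\neg M_\slb)$, an $\exists^{*}\forall^{*}$-prenex, magic-wand-free separation-logic formula, and since $\Vals$ is countably infinite this class is decidable~\cite{Echenim2020Bernays}. Free variables of $\sla,\slb$ are harmless (treat them as Skolem constants), and everything else is bookkeeping over Tables~\ref{tab:at_least} and~\ref{tab:sla_requirements}.
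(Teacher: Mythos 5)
Your overall architecture coincides with the paper's up to the last step: both identify a wand-free fragment $\SLfrag_2$ for the right-hand sides and a fragment $\SLfrag_1$ that additionally allows $\slsingleton{\xx}{(\xy_1,\ldots,\xy_\recnum)}\sepimp(\cdot)$ for the left-hand sides, invoke \Cref{lem:sla_requirements} and \Cref{thm:decide_entailment_qsl_by_sl}, reduce $\sla\entails\slb$ to unsatisfiability of $\sla\wedge\neg\slb$, and appeal to \cite{Echenim2020Bernays}. Where you diverge is in how you bring $\sla\wedge\neg\slb$ into the decidable class: you propose to \emph{eliminate} the magic wand by exploiting precision of the points-to antecedent, whereas the paper keeps the wand. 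The elimination is unnecessary: \cite[Theorem~3.3]{Echenim2020Bernays} already admits $\sepimp$ provided no \emph{universally quantified} variable occurs under it. The paper therefore only proves prenexing lemmas (\Cref{lem:prenex_eqsh}, \Cref{lem:prenex_qsh}) showing that every $\SLfrag_1$/$\SLfrag_2$ formula is equivalent to $\exists^*\colon\slc$ with $\slc$ quantifier-free --- the nontrivial cases being pulling $\exists$ out of $\sepcon$ and out of the points-to-guarded wand --- and then observes that all universal quantifiers in $\exists^*\forall^*\colon\slc_1\wedge\neg\slc_2$ stem from the wand-free $\neg\slb$ side, so the side condition holds for free.

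Your route is not unsound in spirit (wand elimination for precise antecedents is a known technique), but the step you dismiss as ``bookkeeping'' is exactly where the work lies, and your sketch has a concrete problem. Expressing the guard ``$\xx$ is already allocated'' in this $\Predset$ requires $(\exists z\colon\slsingleton{\xx}{z})\sepcon\true$, so your else-branch $\neg\mathit{alloc}(\xx)\wedge(\cdots)$ introduces a negated existential, i.e.\ a universal quantifier; after you distribute the wand over $\sepcon$ (itself needing the argument that the fresh location is absent from \emph{every} subheap of the current heap), these negated existentials end up \emph{underneath} a $\sepcon$, and $(\forall z\colon\sla)\sepcon\slb$ is not equivalent to $\forall z\colon(\sla\sepcon\slb)$, so your final prenexing to $\exists^*\forall^*$ is not justified as stated. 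This is repairable (e.g., note that the wand is vacuously true when $\xx$ is allocated, so the guard can be written as the positive disjunct $\mathit{alloc}(\xx)\vee T(\psi)$ and the negated guard dropped; and in recursive calls the allocation case never arises since the location is fresh for all subheaps), but none of this is in your proof, and you also do not treat nested wands. The paper's argument avoids the entire issue by never eliminating the wand.
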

\begin{proof}
	We employ \Cref{lem:sla_requirements} to determine two $\SLA$ fragments $\SLfrag_1, \SLfrag_2$ such that $\SLfrag_1$ is $\eQSH$-admissible and $\SLfrag_2$ is $\QSH$-admissible. 
	Then, by \Cref{thm:decide_entailment_qsl_by_sl}, decidability of $\fg \entails \ff$ follows from decidability of $\sla\entails\slb$ for $\sla \in \SLfrag_1$ and $\slb \in \SLfrag_2$. For that, we exploit the equivalence
	\[
	    \sla\entails\slb \qquad \text{iff} \qquad \sla \wedge \neg\slb ~\text{is unsatisfiable}~.
	\]
	The latter is decidable by \cite[Theorem 3.3]{Echenim2020Bernays} since  $\sla \wedge \neg\slb$ is equivalent to a formula of the form $\exists^*\forall^* \colon \slc$ with $\slc$ quantifier-free and no formula $\slc_1 \sepimp \slc_2$ occurring in $\slc$ contains a universally quantified variable. 
	See \Cref{app:qsh} for details.\cmcomment{TR / App.}
\end{proof}

\section{Related Work}
\label{sec:related}
\paragraph{Weakest preexpectations.}
Weakest precondition reasoning was established in a classical setting by Dijkstra \cite{dijkstra1976discipline} and has been extended to provide semantic foundations for probabilistic programs by Kozen \cite{Kozen1983Probabilistic,Kozen1997Semantics} and McIver \& Morgan \cite{McIver2005Abstraction}, who also coined the term \emph{weakest preexpectations}.
Their relation to operational models is studied in \cite{gretz2014semantics}.
Moreover, weakest preexpectation reasoning has been shown to be useful for obtaining bounds on the expected resource consumption \cite{ngo2018resource} and, in particular, the expected run-time \cite{kaminski2018weakest} of probabilistic programs.

\paragraph{Logics for probabilistic pointer programs.}
Although many algorithms rely on randomized dynamic data structures, formal reasoning about programs that are both probabilistic and heap manipulating has received scarce attention.
A notable exception is the work by Tassarotti and Harper \cite{Tassarotti2019Separation}, who introduce a concurrent separation logic with support for probabilistic reasoning, called Polaris.
Their focus is on program refinement, employing a semantic model that is based on the idea of coupling, which underlies recent work on probabilistic relational Hoare logics \cite{barthe2017coupling}.
However, no other decision procedures targeting entailments for $\QSL$ or other logics targeting probabilistic pointer programs exist.

\paragraph{Leveraging SL research.}
As shown in \Cref{tab:sla_requirements}, building $\QSL$ entailment checkers by employing our reduction technique requires the availability of $\SL$ fragments that support certain logical operations, and whose entailment problem is decidable.
Since the inception of separation logic~\cite{Ishtiaq2001BI}, the latter has been extensively studied.
In particular, the symbolic heap fragment of $\SL$ has received a lot of attention.
\Cref{tab:sl_entailment} gives an overview of related approaches. \footnote{$\sepcon$ is always covered. Supported (Boolean or separating) connectives are marked with \enquote{+}, unsupported ones with \enquote {--}. \enquote{$\ast$} means that the restrictions on the connective are more involved.  \enquote{Pure} means that the connective can only appear in pure formulae and \enquote{flat} means that the quantifier needs to be on the outermost level.}

\begin{table}
	\centering
	\caption{$\SL$ fragments with decidable entailment problem.}%
	\label{tab:sl_entailment}%
	\begin{tabular}{l@{\quad}cccccc@{\qquad}l@{\qquad}l}
	  \hline\hline
	  Paper &
	  $\neg$ &
	  $\wedge$ &
	  $\vee$ &
	  $\sepimp$ &
	  $\exists$ &
	  $\forall$ &
	  Ind.~predicates &
	  Complexity\\
	  \hline
	  \cite{Antonopoulos2014Foundations} &
	  pure &
	  pure &
	  \changedinline{pure} &
	  -- &
	  flat &
	  -- &
	  user defined &
	  \textsc{ExpTime}-hard\\
	  \cite{Berdine2004Decidable}
	  \cite{Cook2011Tractable} &
	  -- &
	  pure &
	  -- &
	  -- &
	  -- &
	  -- &
	  Lists &
	  Polynomial\\
	  \cite{Echenim2021Decidable} &
	  -- &
	  -- &
	  -- &
	  -- &
	  + &
	  -- &
	  user defined &
	  2-\textsc{ExpTime}-complete\\
	  \cite{Echenim2021Unifying} &
	  -- &
	  -- &
	  + &
	  -- &
	  + &
	  -- &
	  user defined &
	  2-\textsc{ExpTime}-complete\\
	  \cite{Iosif2013Tree} &
	  -- &
	  + &
	  -- &
	  -- &
	  flat &
	  -- &
	  user defined &
	  ?\\
	  \cite{Iosif2014Deciding} &
	  -- &
	  pure &
	  -- &
	  -- &
	  flat &
	  -- &
	  user defined &
	  \textsc{ExpTime}-complete\\
	  \cite{Katelaan2019Effective} &
	  -- &
	  -- &
	  -- &
	  -- &
	  + &
	  -- &
	  user defined &
	  2-\textsc{ExpTime}\\
	  \cite{Matheja2020Complete} &
	  $\ast$ &
	  + &
	  + &
	  $\ast$ &
	  -- &
	  -- &
	  user defined &
	  2-\textsc{ExpTime}\\
	  \cite{Reynolds2016Decision} &
	  + &
	  + &
	  + &
	  + &
	  -- &
	  -- &
	  -- &
	  ?\\
	  \cite{Demri2018} &
	  + &
	  + &
	  + &
	  -- &
	  -- &
	  -- &
	  Lists &
	  \textsc{PSpace}-complete\\
	  \cite{Echenim2020Bernays} &
	  + &
	  + &
	  + &
	  $\ast$ &
	  $\ast$ &
	  $\ast$ &
	  -- &
	  \textsc{PSpace}-complete\\
	  \hline\hline
	\end{tabular}%
\end{table}


%
%
%
%
%
%
%
%
%
%
%
%
%
%
%
%
%
%
%
%
%
%

\section{\cm{Discussion and} Conclusion}
\label{sec:conclusion}
We studied entailment checking in $\QSL$ by means of a reduction to entailment checking in $\SL$. We analyzed the complexity of our approach and demonstrated its applicability by means of several examples. In particular, our reduction yields the first decidability result for probabilistic pointer program verification. 

\cm{
Our primary goal was to investigate the entailment problem for QSL to pave the way for automated verification of probabilistic pointer programs. 
\Cref{thm:decide_entailment_qsl_by_sl} provides a generic result that enables building upon the large body of work dealing with classical SL entailments to obtain both theoretical and practical insights.
Theoretically, \Cref{thm:decide_entailment_qsl_by_sl} gives sufficient criteria to derive QSL fragments with a decidable entailment problem from a classical SL fragment.
We derived a QSL fragment such that reasoning about a simple probabilistic heap-manipulating language becomes decidable.
More practically, \Cref{thm:decide_entailment_qsl_by_sl} allows reusing existing (possibly incomplete) SL solvers to solve the entailments derived by our construction---an empirical evaluation of how well existing solvers can deal with these entailments is an interesting direction for future work.
}

\cm{
We believe that our fine-grained complexity analysis demonstrates that our approach can be practically feasible: the exponential blow-up in \Cref{thm:atleast_size} stems from the number of probabilistic constructs in the given QSL formulae. We expect the number of such constructs to be small for many randomized algorithms.
We remark that existing approaches on checking quantitative entailments between heap-independent expectations encounter similar exponential blow-ups (cf., \cite{DBLP:conf/sas/KatoenMMM10,Batz2021Latticed}). 
    There is thus some
evidence that such exponential blow-ups do not prohibit one from automatically verifying non-trivial properties. We are not aware of work on checking quantitative entailments between expectations that avoids such exponential blow-ups.
}

Future work includes considering richer classes of $\QSL$ and applications of entailment checking such as $k$-induction~\cite{Batz2021Latticed}. Another interesting direction is the applicability of our reduction to other approaches that aim for local reasoning about the resources employed by probabilistic programs, such as~\cite{Tassarotti2019Separation,Bao21BunchedIndependence,Barthe20ProbSL}.


%
%
%
 \bibliographystyle{splncs04}
%
\bibliography{bibfile}

\begin{thebibliography}{10}
\providecommand{\url}[1]{\texttt{#1}}
\providecommand{\urlprefix}{URL }
\providecommand{\doi}[1]{https://doi.org/#1}

\bibitem{Antonopoulos2014Foundations}
Antonopoulos, T., Gorogiannis, N., Haase, C., Kanovich, M.I., Ouaknine, J.:
  Foundations for decision problems in separation logic with general inductive
  predicates. In: FoSSaCS. Lecture Notes in Computer Science, vol.~8412, pp.
  411--425. Springer (2014)

\bibitem{PWCS2013}
Baier, C., Engel, B., Kl{\"u}ppelholz, S., M{\"a}rcker, S., Tews, H., V{\"o}lp,
  M.: A probabilistic quantitative analysis of probabilistic-write/copy-select.
  In: NASA Formal Methods. LNCS, vol.~7871, pp. 307--321. Springer (2013)

\bibitem{Bao21BunchedIndependence}
Bao, J., Docherty, S., Hsu, J., Silva, A.: A bunched logic for conditional
  independence. In: {LICS}. pp. 1--14. {IEEE} (2021)

\bibitem{barthe2017coupling}
Barthe, G., Gr\'{e}goire, B., Hsu, J., Strub, P.Y.: Coupling proofs are
  probabilistic product programs. In: POPL. p. 161–174. ACM (2017)

\bibitem{Barthe20ProbSL}
Barthe, G., Hsu, J., Liao, K.: A probabilistic separation logic. Proc. {ACM}
  Program. Lang.  \textbf{4}({POPL}),  55:1--55:30 (2020)

\bibitem{Batz2021Latticed}
Batz, K., Chen, M., Kaminski, B.L., Katoen, J., Matheja, C., Schr{\"{o}}er, P.:
  Latticed k-induction with an application to probabilistic programs. In: {CAV}
  {(2)}. Lecture Notes in Computer Science, vol. 12760, pp. 524--549. Springer
  (2021)

\bibitem{Batz2021Relatively}
Batz, K., Kaminski, B.L., Katoen, J., Matheja, C.: Relatively complete
  verification of probabilistic programs: an expressive language for
  expectation-based reasoning. Proc. {ACM} Program. Lang.  \textbf{5}({POPL}),
  1--30 (2021)

\bibitem{Batz2019Quantitative}
Batz, K., Kaminski, B.L., Katoen, J., Matheja, C., Noll, T.: Quantitative
  separation logic: a logic for reasoning about probabilistic pointer programs.
  Proc. {ACM} Program. Lang.  \textbf{3}({POPL}),  34:1--34:29 (2019)

\bibitem{BerdineCCDOWY07}
Berdine, J., Calcagno, C., Cook, B., Distefano, D., O'Hearn, P.W., Wies, T.,
  Yang, H.: Shape analysis for composite data structures. In: {CAV}. Lecture
  Notes in Computer Science, vol.~4590, pp. 178--192. Springer (2007)

\bibitem{Berdine2004Decidable}
Berdine, J., Calcagno, C., O'Hearn, P.W.: A decidable fragment of separation
  logic. In: {FSTTCS}. Lecture Notes in Computer Science, vol.~3328, pp.
  97--109. Springer (2004)

\bibitem{BerdineCO05}
Berdine, J., Calcagno, C., O'Hearn, P.W.: Smallfoot: Modular automatic
  assertion checking with separation logic. In: {FMCO}. Lecture Notes in
  Computer Science, vol.~4111, pp. 115--137. Springer (2005)

\bibitem{Berdine2005Symbolic}
Berdine, J., Calcagno, C., O'Hearn, P.W.: Symbolic execution with separation
  logic. In: {APLAS}. Lecture Notes in Computer Science, vol.~3780, pp. 52--68.
  Springer (2005)

\bibitem{CalcagnoDOY11}
Calcagno, C., Distefano, D., O'Hearn, P.W., Yang, H.: Compositional shape
  analysis by means of bi-abduction. J. {ACM}  \textbf{58}(6),  26:1--26:66
  (2011)

\bibitem{CarbinMR16}
Carbin, M., Misailovic, S., Rinard, M.C.: Verifying quantitative reliability
  for programs that execute on unreliable hardware. Commun. {ACM}
  \textbf{59}(8),  83--91 (2016)

\bibitem{ChinDNQ12}
Chin, W., David, C., Nguyen, H.H., Qin, S.: Automated verification of shape,
  size and bag properties via user-defined predicates in separation logic. Sci.
  Comput. Program.  \textbf{77}(9),  1006--1036 (2012)

\bibitem{Cook2011Tractable}
Cook, B., Haase, C., Ouaknine, J., Parkinson, M.J., Worrell, J.: Tractable
  reasoning in a fragment of separation logic. In: {CONCUR}. Lecture Notes in
  Computer Science, vol.~6901, pp. 235--249. Springer (2011)

\bibitem{Demri2018}
Demri, S., Lozes, {\'E}., Mansutti, A.: The effects of adding reachability
  predicates in propositional separation logic. In: Foundations of Software
  Science and Computation Structures. LNCS, vol. 10803, pp. 476--493. Springer
  (2018)

\bibitem{dijkstra1976discipline}
Dijkstra, E.W.: A Discipline of Programming. Prentice-Hall (1976)

\bibitem{Echenim2020Bernays}
Echenim, M., Iosif, R., Peltier, N.: The {B}ernays-{S}ch{\"{o}}nfinkel-{R}amsey
  class of separation logic with uninterpreted predicates. {ACM} Trans. Comput.
  Log.  \textbf{21}(3),  19:1--19:46 (2020)

\bibitem{Echenim2021Decidable}
Echenim, M., Iosif, R., Peltier, N.: Decidable entailments in separation logic
  with inductive definitions: Beyond establishment. In: {CSL}. LIPIcs,
  vol.~183, pp. 20:1--20:18. Schloss Dagstuhl - Leibniz-Zentrum f{\"{u}}r
  Informatik (2021)

\bibitem{Echenim2021Unifying}
Echenim, M., Iosif, R., Peltier, N.: Unifying decidable entailments in
  separation logic with inductive definitions. In: {CADE}. Lecture Notes in
  Computer Science, vol. 12699, pp. 183--199. Springer (2021)

\bibitem{Gordon2014Probprogs}
Gordon, A.D., Henzinger, T.A., Nori, A.V., Rajamani, S.K.: Probabilistic
  programming. In: {FOSE}. pp. 167--181. {ACM} (2014)

\bibitem{GotsmanBCS07}
Gotsman, A., Berdine, J., Cook, B., Sagiv, M.: Thread-modular shape analysis.
  In: {PLDI}. pp. 266--277. {ACM} (2007)

\bibitem{gretz2014semantics}
Gretz, F., Katoen, J.P., McIver, A.: Operational versus weakest pre-expectation
  semantics for the probabilistic guarded command language. Performance
  Evaluation  \textbf{73},  110--132 (2014)

\bibitem{haslbeck_diss}
Haslbeck, M.P.L.: Verified Quantitative Analysis of Imperative Algorithms.
  Ph.D. thesis, Technical University of Munich, Germany (2021)

\bibitem{Iosif2013Tree}
Iosif, R., Rogalewicz, A., Sim{\'{a}}cek, J.: The tree width of separation
  logic with recursive definitions. In: {CADE}. Lecture Notes in Computer
  Science, vol.~7898, pp. 21--38. Springer (2013)

\bibitem{Iosif2014Deciding}
Iosif, R., Rogalewicz, A., Vojnar, T.: Deciding entailments in inductive
  separation logic with tree automata. In: {ATVA}. Lecture Notes in Computer
  Science, vol.~8837, pp. 201--218. Springer (2014)

\bibitem{Ishtiaq2001BI}
Ishtiaq, S.S., O'Hearn, P.W.: {BI} as an assertion language for mutable data
  structures. In: {POPL}. pp. 14--26. {ACM} (2001)

\bibitem{Iverson1962}
Iverson, K.E.: A Programming Language. John Wiley \& Sons, Inc., USA (1962)

\bibitem{JacobsSPVPP11}
Jacobs, B., Smans, J., Philippaerts, P., Vogels, F., Penninckx, W., Piessens,
  F.: Verifast: {A} powerful, sound, predictable, fast verifier for {C} and
  java. In: {NASA} Formal Methods. Lecture Notes in Computer Science,
  vol.~6617, pp. 41--55. Springer (2011)

\bibitem{JungKJBBD18}
Jung, R., Krebbers, R., Jourdan, J., Bizjak, A., Birkedal, L., Dreyer, D.: Iris
  from the ground up: {A} modular foundation for higher-order concurrent
  separation logic. J. Funct. Program.  \textbf{28}, ~e20 (2018)

\bibitem{kaminski2018weakest}
Kaminski, B.L., Katoen, J.P., Matheja, C., Olmedo, F.: Weakest precondition
  reasoning for expected runtimes of randomized algorithms. J.\ ACM
  \textbf{65}(5) (2018)

\bibitem{Kaminski2019Advanced}
Kaminski, B.L.: Advanced weakest precondition calculi for probabilistic
  programs. Ph.D. thesis, {RWTH} Aachen University, Germany (2019)

\bibitem{Katelaan2019Effective}
Katelaan, J., Matheja, C., Zuleger, F.: Effective entailment checking for
  separation logic with inductive definitions. In: {TACAS} {(2)}. Lecture Notes
  in Computer Science, vol. 11428, pp. 319--336. Springer (2019)

\bibitem{DBLP:conf/sas/KatoenMMM10}
Katoen, J., McIver, A., Meinicke, L., Morgan, C.C.: Linear-invariant generation
  for probabilistic programs: - automated support for proof-based methods. In:
  {SAS}. Lecture Notes in Computer Science, vol.~6337, pp. 390--406. Springer
  (2010)

\bibitem{Kozen1997Semantics}
Kozen, D.: Semantics of probabilistic programs. In: {FOCS}. pp. 101--114.
  {IEEE} Computer Society (1979)

\bibitem{Kozen1983Probabilistic}
Kozen, D.: A probabilistic {PDL}. In: {STOC}. pp. 291--297. {ACM} (1983)

\bibitem{Matheja2020Automated}
Matheja, C.: Automated reasoning and randomization in separation logic. Ph.D.
  thesis, {RWTH} Aachen University, Germany (2020)

\bibitem{Matheja2020Complete}
Matheja, C., Pagel, J., Zuleger, F.: Complete entailment checking for
  separation logic with inductive definitions. CoRR  \textbf{abs/2002.01202}
  (2020)

\bibitem{McIver2005Abstraction}
McIver, A., Morgan, C.: Abstraction, Refinement and Proof for Probabilistic
  Systems. Monographs in Computer Science, Springer (2005)

\bibitem{0012859}
Mitzenmacher, M., Upfal, E.: Probability and Computing: Randomized Algorithms
  and Probabilistic Analysis. Cambridge University Press (2005)

\bibitem{Morgan1996}
Morgan, C., McIver, A., Seidel, K.: Probabilistic predicate transformers. ACM
  Trans. Program. Lang. Syst.  \textbf{18}(3),  325--353 (may 1996)

\bibitem{0001SS17}
M{\"{u}}ller, P., Schwerhoff, M., Summers, A.J.: Viper: {A} verification
  infrastructure for permission-based reasoning. In: Dependable Software
  Systems Engineering, {NATO} Science for Peace and Security Series - {D:}
  Information and Communication Security, vol.~50, pp. 104--125. {IOS} Press
  (2017)

\bibitem{ngo2018resource}
Ngo, V.C., Carbonneaux, Q., Hoffmann, J.: Bounded expectations: Resource
  analysis for probabilistic programs. SIGPLAN Not.  \textbf{53}(4),  496--512
  (2018)

\bibitem{Piskac2013Automating}
Piskac, R., Wies, T., Zufferey, D.: Automating separation logic using {SMT}.
  In: {CAV}. Lecture Notes in Computer Science, vol.~8044, pp. 773--789.
  Springer (2013)

\bibitem{Reynolds2016Decision}
Reynolds, A., Iosif, R., Serban, C., King, T.: A decision procedure for
  separation logic in {SMT}. In: {ATVA}. Lecture Notes in Computer Science,
  vol.~9938, pp. 244--261 (2016)

\bibitem{Reynolds2002Separation}
Reynolds, J.C.: Separation logic: {A} logic for shared mutable data structures.
  In: {LICS}. pp. 55--74. {IEEE} Computer Society (2002)

\bibitem{Saheb1978}
Saheb-Djahromi, N.: Probabilistic lcf. In: Winkowski, J. (ed.) Mathematical
  Foundations of Computer Science 1978. pp. 442--451. Springer, Berlin,
  Heidelberg (1978)

\bibitem{TaLKC18}
Ta, Q., Le, T.C., Khoo, S., Chin, W.: Automated lemma synthesis in
  symbolic-heap separation logic. Proc. {ACM} Program. Lang.
  \textbf{2}({POPL}),  9:1--9:29 (2018)

\bibitem{Tassarotti2019Separation}
Tassarotti, J., Harper, R.: A separation logic for concurrent randomized
  programs. Proc. {ACM} Program. Lang.  \textbf{3}({POPL}),  64:1--64:30 (2019)

\end{thebibliography}


\vfill

{\small\medskip\noindent{\bf Open Access} This chapter is licensed under the terms of the Creative Commons\break Attribution 4.0 International License (\url{http://creativecommons.org/licenses/by/4.0/}), which permits use, sharing, adaptation, distribution and reproduction in any medium or format, as long as you give appropriate credit to the original author(s) and the source, provide a link to the Creative Commons license and indicate if changes were made.}

{\small \spaceskip .28em plus .1em minus .1em The images or other
third party material in this chapter are included in the\break
chapter's Creative Commons license, unless indicated otherwise in a
credit line to the\break material.~If material is not included in
the chapter's Creative Commons license and\break your intended use
is not permitted by statutory regulation or exceeds the
permitted\break use, you will need to obtain permission directly
from the copyright holder.}

\medskip\noindent\includegraphics{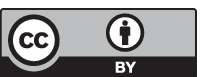}

\appendix

\newpage
\section*{Appendix}
\section{Proof of \Cref{thm:qsl_well_defined}}\label{app:qsl_well_defined}
\begin{restateTheorem}{thm:qsl_well_defined}
    The semantics of $\QSLA$ formulae is well-defined, i.e., for all $\ff\in \QSLA$, we have
	%
	$\sem{\ff} \in\Eone.$
\end{restateTheorem}
\begin{proof}
By induction on $\ff$. For the base case $\ff=\iverson{\slb}$, we have $\semapp{\iverson{\slb}}{(\sk, \hh)} \in \{0,1\}$. For all other cases, it is straightforward to prove that $\semapp{\ff}{(\sk, \hh)}\in [0,1]$, since $[0,1]$ is closed under all operations used in \Cref{tab:semantics_qsl} if they are defined. It is only left to proof that $\sem{\Sup \xx \colon \fg}$, $\sem{\Inf \xx \colon \fg}$ and $\sem{\fg \sepcon \fh}$ are well-defined. For that the $\max$ and $\min$ need to be defined on the given sets, i.e. that the sets
\begin{gather*}
    \big\{\semapp{\fg}{(\sk\statesubst{\xx}{v},\hh)} ~\mid~ v \in \Vals \big\} \\
    \left\{\semapp{\fg}{(\sk,\hh_1)} \cdot \semapp{\fh}{(\sk,\hh_2)} ~\mid~ \hh = \hh_1 \sepcon \hh_2 \right\}
\end{gather*}
are non-empty and finite. Since $\Vals$ is non-empty and $\hh = \hh \sepcon \emptyheap$ where $\emptyheap$ is the heap with $\dom{\emptyheap}=\emptyset$, both sets are non-empty. Lastly, from \Cref{thm:evalset} it follows directly, that both sets are also finite.
\end{proof}
\section{Appendix to \Cref{sec:motivation}}
\subsection{Proof of \Cref{thm:wp_closed}}\label{app:wp_closed}
\begin{restateTheorem}{thm:wp_closed}
    Let $\cc \in \hpgcl$ be loop-free and $\Predset$ be a set of \mbox{predicate symbols. If}
	\begin{enumerate}
		\item  $\Predset$ contains the points-to predicate for all variables and all expressions occurring in allocation, disposal, lookup and mutation in $\cc$,
		\item $\Predset$ contains all guards and their negations occurring in $\cc$, and
		\item all predicates in $\Predset$ are closed under substitution of variables by other variables and arithmetic expressions occurring on right-hand sides of assignments in $\cc$, 
	\end{enumerate}
	then, for every $\QSLA$ formula $\ff$, $\wlp{\cc}{\ff} \in \QSLA$.
\end{restateTheorem}
\begin{proof}
First we remark that since $\Predset$ is closed under substitution of variables by arithmetic expressions occurring on right-hand sides of assignments in $\cc$, so is $\QSLA$. For a formula $\ff \in \QSLA$, substitutions are always only handed to the atoms in $\Predset$. Since $\Predset$ is closed under substitution of variables by arithmetic expressions occurring on right-hand sides of assignments, we have for every such expression $\ee$ on a right-hand side of an assignment and $\slb \in \Predset$ that also $\slb \subst{\xx}{\ee} \in \Predset$, where the semantics $\sem{\slb \subst{\xx}{\ee}}$ of $\slb \subst{\xx}{\ee}$ is defined as
\[ \sem{\slb \subst{\xx}{\ee}} = \setcomp{(\sk \statesubst{\xx}{\ee(\sk)}, \hh) \in \slb}{ (\sk, \hh) \in \States}~. \]
Now we prove the theorem by induction on $\cc$.

\medskip \noindent
\emph{For the base case $\cc=\SKIP$,} we have $\wlp{\SKIP}{\ff}=\ff\in \QSLA$ by assumption.

\medskip \noindent
\emph{For the base case $\cc=\ASSIGN{\xx}{\ee}$,} we have $\wlp{\ASSIGN{\xx}{\ee}}{\ff}=\ff \subst{\xx}{\ee} \in \QSLA$ since $\QSLA$ is closed under the substitution of $\xx$ by $\ee$.

\medskip \noindent
\emph{For the base case $\cc=\ALLOC{\xx}{\ee}$,} we have 
\[\wlp{\ALLOC{\xx}{\ee}}{\ff}=\Inf \xy \colon \singleton{\xy}{\ee} \sepimp \ff \subst{\xx}{\xy} \in \QSLA\] 
because $\QSLA$ is closed under substitution of $\xx$ by $\xy$ and $\Predset$ contains the points-to formula $\singleton{\xy}{\ee}$ by assumption.

\medskip \noindent
\emph{For the base case $\cc=\FREE{\ee}$,} we have 
\[\wlp{\FREE{\ee}}{\ff}=\validpointer{\ee} \sepcon \ff = \Sup \xx \colon \singleton{\ee}{\xx} \sepcon \ff \in \QSLA\] 
because $\Predset$ contains $\singleton{\ee}{\xx}$ by assumption.

\medskip \noindent
\emph{For the base case $\cc=\ASSIGNH{\xx}{\ee}$,} we have 
\[\wlp{\ASSIGNH{\xx}{\ee}}{\ff} = \Sup \xx \colon \singleton{\ee}{\xy} \sepcon (\singleton{\ee}{xy} \sepimp \ff \subst{\xx}{\xy}) \in \QSLA\] 
because $\QSLA$ is closed under substitution of $\xx$ by $\xy$ and $\Predset$ contains the points-to formulae $\singleton{\ee}{\xy}$ by assumption.

\medskip \noindent
\emph{For the base case $\cc=\HASSIGN{\ee}{\ee'}$,} we have 
\begin{align*}
    &\wlp{\HASSIGN{\ee}{\ee'}}{\ff} \\
    = \quad & \validpointer{\ee} \sepcon (\singleton{\ee}{\ee'} \sepimp \ff) \\
    = \quad & (\Sup \xx \colon \singleton{\ee}{\xx}) \sepcon (\singleton{\ee}{\ee'} \sepimp \ff) \in \QSLA
\end{align*} 
because $\Predset$ contains the points-to formulae $\singleton{\ee}{\xx}$ and $\singleton{\ee}{\ee'}$ by assumption.

\medskip \noindent
For all other composite cases we assume for some fixed, but arbitrary loop-free programs $\cc_1, \cc_2 \in \hpgcl$ that for all $\fg \in \QSLA$ we have $\wlp{\cc_1}{\fg} \in \QSLA$ and for all $\fh \in \QSLA$ we have $\wlp{\cc_2}{\fh} \in \QSLA$.

\medskip \noindent
\emph{For the case $\cc=\PCHOICE{\cc_1}{\pp}{\cc_2}$,} we have 
\[\wlp{\cc}{\ff}=\pp \cdot \wlp{\cc_1}{\ff} + (1-\pp) \cdot \wlp{\cc_2}{\ff} \in \QSLA\] 
by the induction hypothesis.

\medskip \noindent
\emph{For the case $\cc=\COMPOSE{\cc_1}{\cc_2}$,} we have $\wlp{\cc_2}{\ff} \in \QSLA$ 
by the induction hypothesis, thus we also have $\wlp{\cc_1}{\wlp{\cc_2}{\ff}} \in \QSLA$ by the induction hypothesis.

\medskip \noindent
\emph{For the case $\cc=\ITE{\guard}{\cc_1}{\cc_2}$,} we have 
\[\wlp{\cc}{\ff}= \iverson{\guard} \cdot \wlp{\cc_1}{\ff} + \iverson{\neg \guard} \cdot \wlp{\cc_2}{\ff} \in \QSLA\] 
by the induction hypothesis and since $\Predset$ contains $\guard$ and $\neg \guard$ by assumption.

\medskip \noindent
This concludes the proof.

\end{proof}

\subsection{$\cswap$ Example} \label{app:swap_example}
\begin{changed}
\subsubsection{Full Computation}
\begingroup
\allowdisplaybreaks
\begin{align*}
    & \wlp{\cswap}{\singleton{x}{\xz_1} \sepcon \singleton{y}{\xz_2}} \\
\mirrorentails \qquad & \wlp{\cswap}{\singleton{y}{\xz_2} \sepcon (\singleton{x}{\xz_1} \cdot (\iverson{\textsf{tmp1}=\xz_2} \cdot \iverson{\textsf{tmp2} = \xz_1}))} \tag{monotonicity} \\
\mirrorentails \qquad & \wlp{\cswap}{\singleton{y}{\textsf{tmp1}} \sepcon (\singleton{x}{\textsf{tmp2}} \cdot (\iverson{\textsf{tmp1}=\xz_2} \cdot \iverson{\textsf{tmp2} = \xz_1}))} \tag{variable substitution}\\
\mirrorentails \qquad & \wlpC{\cc_1 \SEMI \cc_2 \SEMI \cc_3}(\wlpC{\cc_4}(\singleton{y}{\textsf{tmp1}} \sepcon (\singleton{x}{\textsf{tmp2}}  \\
    & \qquad \qquad \qquad \qquad \cdot (\iverson{\textsf{tmp1} = \xz_2} \cdot \iverson{\textsf{tmp2}=\xz_1})))) \tag{$\wlpsymbol$ application}\\
\mirrorentails \qquad & \wlp{\cc_1 \SEMI \cc_2 \SEMI \cc_3}{\validpointer{y} \sepcon (\singleton{x}{\textsf{tmp2}} \cdot (\iverson{\textsf{tmp1} = \xz_2} \cdot \iverson{\textsf{tmp2} = \xz_1})) } \tag{Rule (ii)} \\
\mirrorentails \qquad & \wlpC{\cc_1 \SEMI \cc_2 }(\wlpC{\cc_3}(\validpointer{y} \sepcon (\singleton{x}{\textsf{tmp2}}\\
    & \qquad \qquad \qquad \qquad  \cdot (\iverson{\textsf{tmp1} = \xz_2} \cdot \iverson{\textsf{tmp2} = \xz_1})))) \tag{$\wlpsymbol$ application}\\
\mirrorentails \qquad & \wlpC{\cc_1 \SEMI \cc_2 }(0.999 \cdot \wlpC{\HASSIGN{x}{\textsf{tmp2}}}(\validpointer{y} \\
    & \qquad \qquad \qquad \qquad  \sepcon (\singleton{x}{\textsf{tmp2}} \cdot (\iverson{\textsf{tmp1} = \xz_2} \cdot \iverson{\textsf{tmp2} = \xz_1})))\\
    & \qquad \qquad ~~~ + 0.001 \cdot \wlpC{\HASSIGN{x}{\textsf{err}}}(\validpointer{y}\\
    & \qquad \qquad \qquad \qquad \sepcon (\singleton{x}{\textsf{tmp2}}\cdot (\iverson{\textsf{tmp1} = \xz_2} \cdot \iverson{\textsf{tmp2} = \xz_1})))) \tag{$\wlpsymbol$ application}\\
\mirrorentails \qquad & \wlpC{\cc_1 \SEMI \cc_2 }(0.999 \cdot \wlpC{\HASSIGN{x}{\textsf{tmp2}}}(\validpointer{y} \\
& \qquad \qquad \qquad \qquad  \sepcon (\singleton{x}{\textsf{tmp2}} \cdot (\iverson{\textsf{tmp1} = \xz_2} \cdot \iverson{\textsf{tmp2} = \xz_1})))\\
    & \qquad \qquad ~~~ + 0.001 \cdot \iverson{\false}) \tag{monotonicity}\\
\mirrorentails \qquad & \wlpC{\cc_1 \SEMI \cc_2 }(0.999 \cdot \wlpC{\HASSIGN{x}{\textsf{tmp2}}}(\singleton{x}{\textsf{tmp2}} \\
& \qquad \qquad \qquad \qquad  \sepcon (\validpointer{y} \cdot (\iverson{\textsf{tmp1} = \xz_2} \cdot \iverson{\textsf{tmp2} = \xz_1})))\\
    & \qquad \qquad ~~~ + 0.001 \cdot \iverson{\false}) \tag{commutativity and associativity}\\
\mirrorentails \qquad & \wlpC{\cc_1 \SEMI \cc_2 }(0.999 \cdot (\validpointer{x} \sepcon (\validpointer{y}\\
& \qquad \qquad \qquad \qquad ~~\cdot (\iverson{\textsf{tmp1} = \xz_2} \cdot \iverson{\textsf{tmp2} = \xz_1})))\\
& \qquad \qquad ~~~~ + 0.001 \cdot \iverson{\false}) \tag{Rule (ii)}\\
\mirrorentails \qquad & \wlpC{\cc_1}(\wlpC{\cc_2}(0.999 \cdot (\validpointer{x} \sepcon (\validpointer{y}\\
& \qquad \qquad \qquad \qquad \qquad\cdot (\iverson{\textsf{tmp1} = \xz_2} \cdot \iverson{\textsf{tmp2} = \xz_1})))\\
& \qquad \qquad \qquad ~~ + 0.001 \cdot \iverson{\false}) \tag{$\wlpsymbol$ application}\\
\mirrorentails \qquad & \wlpC{\cc_1}(0.999 \cdot \wlpC{\cc_2}(\validpointer{x} \sepcon (\validpointer{y}\\
& \qquad \qquad \qquad \qquad \qquad\cdot (\iverson{\textsf{tmp1} = \xz_2} \cdot \iverson{\textsf{tmp2} = \xz_1})))\\
& \qquad ~~ + 0.001 \cdot \wlpC{\cc_2}(\iverson{\false})) \tag{super-distributivity}\\
\mirrorentails \qquad & \wlpC{\cc_1}(0.999 \cdot(\wlpC{\cc_2}((\singleton{y}{\xz_1} \cdot \iverson{\textsf{tmp2} = \xz_1})  \\
    & \qquad \qquad \qquad \qquad \qquad \sepcon (\iverson{\textsf{tmp1} = \xz_2} \cdot \validpointer{x}))\\
    & \qquad ~~ + 0.001 \cdot \iverson{\false}) \tag{monotonicity}\\
\mirrorentails \qquad & \wlp{\cc_1}{0.999 \cdot (\singleton{y}{\xz_1} \sepcon (\iverson{\textsf{tmp1} = \xz_2} \cdot \validpointer{x})) + 0.001 \cdot \iverson{\false}} \tag{Rule (i)} \\
\mirrorentails \qquad & \quad 0.999 \cdot \wlp{\cc_1}{\singleton{y}{\xz_1} \sepcon (\iverson{\textsf{tmp1} = \xz_2} \cdot \validpointer{x})} \\
    & + 0.001 \cdot \wlp{\cc_1}{\iverson{\false}} \tag{super-distributivity} \\
\mirrorentails \qquad & \quad 0.999 \cdot \wlp{\cc_1}{\singleton{y}{\xz_1} \sepcon (\iverson{\textsf{tmp1} = \xz_2} \cdot \singleton{x}{\xz_2})} \\
    & + 0.001 \cdot \iverson{\false} \tag{monotonicity} \\
\mirrorentails \qquad & \quad 0.999 \cdot \wlp{\cc_1}{(\singleton{x}{\xz_2} \cdot \iverson{\textsf{tmp1} = \xz_2}) \sepcon \singleton{y}{\xz_1}} \\
    & + 0.001 \cdot \iverson{\false} \tag{commutativity} \\
\mirrorentails \qquad & \quad 0.999 \cdot (\singleton{x}{\xz_2} \sepcon \singleton{y}{\xz_1}) + 0.001 \cdot \iverson{\false} \tag{Rule (i)} \\
\end{align*}
\endgroup

\subsubsection{Necessary Separation Logic Entailments}

\begin{align*}
    & \quad (((\singleton{\xx}{\xz_2} \sepcon \singleton{\xy}{\xz_1}) \land \false) \lor (\singleton{\xx}{\xz_2} \sepcon \singleton{y}{\xz_1})) \lor \false\\
    \entails \quad & \quad ((\singleton{\xx}{\xz_2} \sepcon \singleton{\xy}{\xz_1}) \land \false) \lor (\singleton{\xx}{\xz_2} \sepcon \singleton{\xy}{\xz_1})\\
\\
    & \quad ((\singleton{\xx}{\xz_1} \sepcon \singleton{\xy}{\xz_2}) \land \false) \lor (\singleton{\xx}{\xz_2} \sepcon \singleton{\xy}{\xz_1})\\
    \entails \quad & \quad ((\singleton{\xx}{\xz_1} \sepcon \singleton{\xy}{\xz_1}) \land \false) \lor (\singleton{\xx}{\xz_2} \sepcon \singleton{\xy}{\xz_1})\\
\\
    & \quad (\singleton{\xx}{\xz_2} \sepcon \singleton{\xy}{\xz_1}) \land \false\\ 
    \entails \quad & \quad (\singleton{\xx}{\xz_2} \sepcon \singleton{\xy}{\xz_1}) \land \false
\end{align*}

\subsubsection{SMT-LIB 2 File}

\begin{verbatim}
    (set-logic QF_ALL_SUPPORTED)
    (declare-sort Loc 0)
    (declare-heap (Loc Int))
    (declare-const x Loc)
    (declare-const y Loc)
    (declare-const z1 Int)
    (declare-const z2 Int)
    (assert (or (and (or (or (and (sep (pto x z2) (pto y z1)) false) 
    (sep (pto x z2) (pto y z1))) false) 
    (not (or (and (sep (pto x z2) (pto y z1)) false) 
    (sep (pto x z2) (pto y z1))))) 
    (or (and (or (and (sep (pto x z2) (pto y z1)) false) 
    (sep (pto x z2) (pto y z1))) 
    (not (or (and (sep (pto x z2) (pto y z1)) false) 
    (sep (pto x z2) (pto y z1))))) 
    (and (and (sep (pto x z2) (pto y z1)) false) 
    (not (and (sep (pto x z2) (pto y z1)) false))))))
    (check-sat)
\end{verbatim}
\end{changed}

\section{Appendix to \Cref{sec:entailment_checking}}
\label{app:entailment_checking}
\begin{lemma} \label{lem:zero_one_preservation}
	For all formulae $\ff \in \QSLA$ we have $0, 1 \in \evaluationSet{\ff}$.
\end{lemma}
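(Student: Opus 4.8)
The plan is to prove the claim by a straightforward induction on the structure of $\ff \in \QSLA$, unfolding the inductive definition of $\evaluationSet{\ff}$ from \Cref{tab:evluationsets}. In the base case $\ff = \iverson{\slb}$ we have $\evaluationSet{\ff} \eeq \{0,1\}$ by definition, so $0,1 \in \evaluationSet{\ff}$ holds trivially.

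For the inductive step, suppose $0,1 \in \evaluationSet{\fg}$ and (where applicable) $0,1 \in \evaluationSet{\fh}$ for the immediate subformulae. The cases in which $\evaluationSet{\ff}$ is literally inherited from a subformula are immediate: this covers $\ff = \Sup \xx \colon \fg$, $\ff = \Inf \xx \colon \fg$, and $\ff = \iverson{\slb} \sepimp \fg$, all with $\evaluationSet{\ff} = \evaluationSet{\fg}$. For $\ff = \iverson{\bb}\cdot\fg + \iverson{\neg\bb}\cdot\fh$ we have $\evaluationSet{\ff} = \evaluationSet{\fg} \cup \evaluationSet{\fh}$, so $0,1 \in \evaluationSet{\ff}$ as well. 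For $\ff = 1-\fg$ we have $\evaluationSet{\ff} = 1 - \evaluationSet{\fg}$, hence $1 = 1-0 \in \evaluationSet{\ff}$ and $0 = 1-1 \in \evaluationSet{\ff}$ using the induction hypothesis for $\fg$.

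The only cases worth any attention are the ones where $\evaluationSet{\ff}$ is formed by applying a binary operation pointwise to $\evaluationSet{\fg}$ and $\evaluationSet{\fh}$ (possibly after scaling), namely $\ff = \qq\cdot\fg + (1-\qq)\cdot\fh$, $\ff = \fg\cdot\fh$, $\ff = \fg\sepcon\fh$, $\ff = \emax{\fg}{\fh}$, and $\ff = \emin{\fg}{\fh}$; in all five the relevant operation $\circ$ is, respectively, $(\rb,\rc)\mapsto \qq\rb+(1-\qq)\rc$, multiplication, multiplication, $\max$, and $\min$. The single observation needed is that one may instantiate \emph{both} arguments with the same value drawn from $\evaluationSet{\fg} \cap \evaluationSet{\fh}$: by the induction hypothesis $0 \in \evaluationSet{\fg}$ and $0 \in \evaluationSet{\fh}$, so $0 \circ 0 = 0 \in \evaluationSet{\ff}$, and likewise $1 \in \evaluationSet{\fg}$, $1 \in \evaluationSet{\fh}$ give $1 \circ 1 = 1 \in \evaluationSet{\ff}$, since indeed $0 \circ 0 = 0$ and $1 \circ 1 = 1$ for each of these five operations. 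This closes the induction. There is no genuine obstacle; the mild point is just that the convex-combination and product cases need a common witness in both evaluation sets, which the induction hypothesis provides. We note in passing that this lemma is precisely what legitimizes the claim made in the $\ff = \iverson{\slb}\sepimp\fg$ case of the construction in \Cref{sec:evaluationset} that $1$ must belong to $\evaluationSet{\fg}$.
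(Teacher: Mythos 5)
Your proof is correct and follows essentially the same route as the paper's: a structural induction over $\ff$ using the definition of $\evaluationSet{\ff}$ from \Cref{tab:evluationsets}, with the key observation in the binary cases that $0\circ 0=0$ and $1\circ 1=1$ (the paper spells these out as $0=0+0$, $1=\qq+(1-\qq)$, $0=0\cdot 0$, $1=1\cdot 1$, etc.). Nothing is missing.
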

\begin{proof}
	By induction on $\ff$.
    
    \medskip \noindent
    \emph{For the base case $\ff=\iverson{\slb}$,} we have $0, 1 \in \{0, 1\} = \evaluationSet{\ff}$.

    \medskip \noindent
    For all other composite cases we assume for some fixed, but arbitrary $\fg, \fh \in \QSLA$ that $0, 1 \in \evaluationSet{\fg}$ and $0, 1 \in \evaluationSet{\fh}$.
    
    \medskip \noindent
    \emph{For the induction step $\ff=\iverson{\bb} \cdot \fg +\iverson{\neg\bb}\cdot \fh$,} we have by the induction hypothesis $0, 1 \in \evaluationSet{\fg} \subseteq \evaluationSet{\ff}$.

    \medskip \noindent
    \emph{For the induction step $\ff=\pp \cdot \fg + (1-\pp) \cdot \fh$,} we have by the induction hypothesis $0, 1 \in \evaluationSet{\fg}$ and $0, 1 \in \evaluationSet{\fh}$. Thus we also have $0, \pp \in \pp \cdot \evaluationSet{\fg}$ and $0, (1-\pp) \in (1-\pp) \cdot \evaluationSet{\fh}$, thus with $0=0+0$ and $1=\pp + (1-\pp)$ also $0, 1 \in \evaluationSet{\ff}$.

    \medskip \noindent
    \emph{For the induction step $\ff=\fg \cdot \fh$,} we have by the induction hypothesis $0, 1 \in \evaluationSet{\fg}$ and $0, 1 \in \evaluationSet{\fh}$. Thus we also have with $0 = 0 \cdot 0$ and $1 = 1 \cdot 1$ that $0, 1 \in \evaluationSet{\ff}$.

    \medskip \noindent
    \emph{For the induction step $\ff=1- \fg$,} we have by the induction hypothesis $0, 1 \in \evaluationSet{\fg}$, thus with $0 = 1-1$ and $1= 1-0$ we also have $0, 1 \in \evaluationSet{\ff}$.

    \medskip \noindent
    \emph{For the induction step $\ff=\emax{\fg}{\fh}$,} we have by the induction hypothesis $0, 1 \in \evaluationSet{\fg}$ and $0, 1 \in \evaluationSet{\fh}$ thus with $0 = \max(0, 0)$ and $1 = \max(1, 1)$ we also have $0, 1 \in \evaluationSet{\ff}$.

    \medskip \noindent
    \emph{The induction step $\ff=\emin{\fg}{\fh}$} is analogous to the previous case.

    \medskip \noindent
    \emph{For the induction steps $\ff=\Inf \xx \colon \fg$,} we have by the induction hypothesis $0, 1 \in \evaluationSet{\fg} = \evaluationSet{\ff}$.

    \medskip \noindent
    \emph{The induction step $\ff=\Sup \xx \colon \fg$} is analogous to the previous case.

    \medskip \noindent
    \emph{The induction step $\ff=\fg \sepcon \fh$} is analogous to the case $\fg \cdot \fh$.

    \medskip \noindent
    \emph{For the induction step $\ff=\iverson{\slb} \sepimp \fg$,} we have by the induction hypothesis $0, 1 \in \evaluationSet{\fg} = \evaluationSet{\ff}$.

    \medskip \noindent
    This concludes the proof.
\end{proof}
\bigskip

\begin{restateTheorem}{thm:atleast}
	For every $\ff \in \QSLA$ and all $\ra  \in \Probs$ there is an effectively constructible $\SLA$ formula $\atleast{\ra}{\ff}$ such that for all $(\sk,\hh) \in \States$, we have
	\[
	  (\sk,\hh) \mmodels \atleast{\ra}{\ff}
		\qquad\text{iff}\qquad
	  \ra\lleq \semapp{\ff}{(\sk,\hh)}~.
	\]
\end{restateTheorem}
\begin{proof}
By induction on $\ff$.

\medskip \noindent
\emph{The case $\iverson{\slb}$.} If $\ra = 0$ then $\ra \leq \semapp{\iverson{\slb}}{(\sk, \hh)}$ trivially. If $0 < \ra$, then
\begin{align*}
	\ra \leq \semapp{\iverson{\slb}}{(\sk,\hh)}
	\qquad\text{iff}\qquad
	\semapp{\iverson{\slb}}{(\sk,\hh)}=1
	\qquad\text{iff}\qquad
	(\sk,\hh)\models \slb~.
\end{align*}

\medskip \noindent
For the composite cases, now assume that for some arbitrary, but fixed formulae $\fg,\fh \in \QSLA$ and all probabilities $\rb,\rc \in \Probs$ there are effectively constructible $\SLA$ formulae $\atleast{\rb}{\fg}$ and $\atleast{\rc}{\fh}$ such that for all $(\sk,\hh) \in \States$,
\begin{align*}
	\rb \leq \semapp{\fg}{(\sk,\hh)} ~\text{iff}~(\sk,\hh) \models \atleast{\rb}{\fg}
	 \qquad \text{and} \qquad
	\rc \leq \semapp{\fh}{(\sk,\hh)} ~\text{iff}~ (\sk,\hh) \models \atleast{\rc}{\fh}~.
\end{align*}

\medskip \noindent
\emph{The case $\ff = \iverson{\bb} \cdot \fg + \iverson{\neg\bb} \cdot \fh$.}
\begin{align*}
	 &\ra \leq \semapp{\iverson{\bb} \cdot \fg + \iverson{\neg\bb}\cdot\fh }{(\sk,\hh)} \\
     \text{iff}\qquad&\ra \leq \iverson{\bb}(\sk,\hh) \cdot \semapp{\fg}{(\sk,\hh)} + \iverson{\neg\bb}(\sk,\hh) \cdot \semapp{\fh}{(\sk,\hh)} \\
	 \text{iff}\qquad& (\sk,\hh)\models \bb~\text{and}~ \ra\leq \semapp{\fg}{(\sk,\hh)}
	                             \quad\text{or}\quad 
	                              (\sk,\hh)\models \neg\bb~\text{and}~ \ra\leq \semapp{\fh}{(\sk,\hh)} \\
	 \text{iff}\qquad& (\sk,\hh)\models \bb~\text{and}~ (\sk,\hh) \models \atleast{\ra}{\fg}
	 \quad\text{or}\quad 
	 (\sk,\hh)\models \neg\bb~\text{and}~ (\sk,\hh) \models \atleast{\ra}{\fh} \tag{IH}\\
	 \text{iff}\qquad& (\sk,\hh)\models (\bb \wedge \atleast{\ra}{\fg}) \vee (\neg\bb  \wedge \atleast{\ra}{\fh})
\end{align*}

\medskip \noindent
\emph{The case $\ff = \pp \cdot \fg + (1-p) \cdot \fh$.}
\begin{align*}
	&\ra \leq \semapp{\pp \cdot \fg + (1-p) \cdot \fh}{(\sk,\hh)} \\
	\text{iff}\qquad&\ra \leq \pp\cdot \semapp{\fg}{(\sk,\hh)} + (1-p) \cdot\semapp{\fh}{(\sk,\hh)}  \\
	\text{iff}\qquad & \text{there are $\rb \in \evaluationSet{\fg}, \rc\in\evaluationSet{\fh}$ with $p\cdot \rb + (1-p)\cdot \rc\geq \ra$ with } \\
	                         & \rb\leq \semapp{\fg}{(\sk,\hh)} ~\text{and}~\rc\leq \semapp{\fh}{(\sk,\hh)} \tag{monotonicity} \\
	\text{iff}\qquad & \text{there are $\rb \in \evaluationSet{\fg}, \rc\in\evaluationSet{\fh}$ with $p\cdot \rb + (1-p)\cdot \rc\geq \ra$ with } \\
	& (\sk,\hh) \models \atleast{\rb}{\fg} ~\text{and}~(\sk,\hh) \models \atleast{\rc}{\fh} \tag{IH}\\
	\text{iff}\qquad & (\sk,\hh)\models \bigvee_{\rb \in \evaluationSet{\fg}, \rc \in  \evaluationSet{\fh}, \pp\cdot\rb +(1-\pp)\cdot\rc \geq \ra } \quad \atleast{\rb}{\fg} \wedge \atleast{\rc}{\fh}~.
    \tag{$\evaluationSet{\fg}$ and $\evaluationSet{\fh}$ are finite}
\end{align*}

\medskip \noindent
\emph{The case $\ff =  \fg \cdot\fh$} is analogous to the previous case.

\medskip \noindent
\emph{The case $\ff =  1-\fg$.} 
\begin{align*}
    &\ra \leq \semapp{1-\fg}{(\sk, \hh)}\\
    \text{iff}\qquad& \ra \leq 1-\semapp{\fg}{(\sk, \hh)}\\
    \text{iff}\qquad& 0 = \ra ~\text{or}\\
                    & 0<\ra ~\text{and}~ \ra \leq 1-\semapp{\fg}{(\sk, \hh)}\\
    \text{iff}\qquad& 0 = \ra ~\text{or}\\
                    & 1>1-\ra ~\text{and}~ 1-\ra \geq \semapp{\fg}{(\sk, \hh)}\\
    \text{iff}\qquad& 0 = \ra ~\text{or}\\
                    & 1 > 1-\ra ~\text{and not}~ 1-\ra < \semapp{\fg}{(\sk, \hh)}\\
    \text{iff}\qquad& 0 = \ra ~\text{or}\\
                    & 1 > 1-\ra ~\text{and not: there exists $\rd \in \evaluationSet{\fg}$ such that }~ \\
                    & \qquad \qquad 1-\ra < \rd \leq \semapp{\fg}{(\sk, \hh)} \tag{$\semapp{\fg}{(\sk, \hh)}\in \evaluationSet{\fg}$}\\
    \text{iff}\qquad& 0 = \ra ~\text{or}\\
                    & 1 > 1-\ra ~\text{and not}~ \min \setcomp{\rb \in \evaluationSet{\fg}}{\rb > 1-\ra} \leq \semapp{\fg}{(\sk, \hh)} \tag{$\dagger$, see below}\\
    \text{iff}\qquad& 0 = \ra ~\text{or}\\
                    & 0 < \ra ~\text{and not} ~ \rd \leq \semapp{\fg}{(\sk, \hh)} ~\text{for}~ \rd = \min \setcomp{\rb \in \evaluationSet{\fg}}{\rb > 1-\ra}\\
    \text{iff}\qquad& 0 = \ra ~\text{or}\\
                    & 0 < \ra ~\text{and not} ~ (\sk, \hh) \models \atleast{\rd}{\fg} ~\text{for}~ \rd = \min \setcomp{\rb \in \evaluationSet{\fg}}{\rb > 1-\ra}\\
    \text{iff}\qquad& 0 = \ra ~\text{or}\\
                    & 0 < \ra ~\text{and} ~ (\sk, \hh) \models \neg \atleast{\rd}{\fg} ~\text{for}~ \rd = \min \setcomp{\rb \in \evaluationSet{\fg}}{\rb > 1-\ra}\\
\end{align*}
Regarding $\dagger$: If there exists a $\rd \in \evaluationSet{\fg}$ with $\rd \leq \semapp{\fg}{(\sk, \hh)}$, then we also have for all $\rb \leq \rd$ that $\rb \leq \semapp{\fg}{(\sk, \hh)}$ by transitivity. Since furthermore $\evaluationSet{\fg}$ is finite, $1-\ra < 1$ and $1 \in \evaluationSet{\fg}$ we can also pick the smallest $\rb\in \evaluationSet{\fg}$ satisfying $1-\ra < \rb$. For the other direction we have that since $\evaluationSet{\fg}$ is finite, $1 \in \evaluationSet{\fg}$ and $1>1-\ra$, the set $\setcomp{\rb \in \evaluationSet{\fg}}{\rb > 1-\ra}$ is finite and non-empty. Thus there also exists an $\rd$ such that $1-\ra < \rd \leq \semapp{\fg}{(\sk, \hh)}$.

\medskip \noindent
\emph{The case $\ff =  \emax{\fg}{\fh}$.}
\begin{align*}
    &\ra \leq \semapp{\emax{\fg}{\fh}}{(\sk, \hh)}\\
    \text{iff}\qquad&\ra \leq \max( \semapp{\fg}{(\sk, \hh)}, \semapp{\fh}{(\sk, \hh)}) \\
    \text{iff}\qquad& \ra \leq \semapp{\fg}{(\sk, \hh)} ~\text{or}~ \ra \leq \semapp{\fh}{(\sk, \hh)} \\
    \text{iff}\qquad& (\sk, \hh) \models \atleast{\ra}{\fg} ~\text{or}~ (\sk, \hh) \models \atleast{\ra}{\fh} \tag{IH}\\
    \text{iff}\qquad& (\sk, \hh) \models \atleast{\ra}{\fg} \lor \atleast{\ra}{\fh}
\end{align*}

\medskip \noindent
\emph{The case $\ff =  \emin{\fg}{\fh}$} is analogous to the previous case.

\medskip \noindent
\emph{The case $\ff =  \Sup \xx \colon \fg$.} 
\begin{align*}
    &\ra \leq \semapp{\Sup \xx \colon \fg}{(\sk, \hh)} \\
    \text{iff}\qquad& \ra \leq \max \big\{\semapp{\fg}{(\sk\statesubst{\xx}{v},\hh)} ~\mid~ v \in \Vals \big\} \\
    \text{iff}\qquad& \text{there is a $v \in \Vals$ with}~ \ra \leq \semapp{\fg}{(\sk\statesubst{\xx}{v}, \hh)} \tag{$\eval{\fg}$ is finite}\\
    \text{iff}\qquad& \text{there is a $v \in \Vals$ with}~ (\sk\statesubst{\xx}{v}, \hh) \models \atleast{\ra}{\fg} \tag{IH} \\
    \text{iff}\qquad& (\sk, \hh) \models \exists \xx \colon \atleast{\ra}{\fg}
\end{align*}

\medskip \noindent
\emph{The case $\ff =  \Inf \xx \colon \fg$} is analogous to the previous case.

\medskip \noindent
\emph{The case $\ff =  \fg \sepcon \fh$.}
\begin{align*}
    &\ra \leq \semapp{\fg \sepcon \fh}{(\sk, \hh)} \\
    \text{iff}\qquad& \ra \leq \max \left\{\semapp{\fg}{(\sk,\hh_1)} \cdot \semapp{\fh}{(\sk,\hh_2)} ~\mid~ \hh = \hh_1 \sepcon \hh_2 \right\} \\
    \text{iff}\qquad& \text{there exists $\hh_1, \hh_2 \in \Heaps$ with $\hh = \hh_1 \sepcon \hh_2$ and}\\
    & \qquad \ra \leq \semapp{\fg}{(\sk, \hh_1)} \cdot \semapp{\fh}{(\sk, \hh_2)} \tag{$\eval{\fg}$ and $\eval{\fh}$ are finite}\\
    \text{iff}\qquad& \text{there exists $\hh_1, \hh_2 \in \Heaps$ with $\hh = \hh_1 \sepcon \hh_2$ and}\\
    & \text{there exists $\rb \in \evaluationSet{\fg}$, $\rc \in \evaluationSet{\fh}$ with $\rb \cdot \rc \geq \ra$ and}\\
    & \qquad \rb \leq \semapp{\fg}{(\sk, \hh_1)} ~\text{and}~ \rc \leq \semapp{\fh}{(\sk, \hh_2)} \tag{monotonicity}\\
    \text{iff}\qquad& \text{there exists $\rb \in \evaluationSet{\fg}$, $\rc \in \evaluationSet{\fh}$ with $\rb \cdot \rc \geq \ra$ and}\\
    & \text{there exists $\hh_1, \hh_2 \in \Heaps$ with $\hh = \hh_1 \sepcon \hh_2$ and}\\
    & \qquad \rb \leq \semapp{\fg}{(\sk, \hh_1)} ~\text{and}~ \rc \leq \semapp{\fh}{(\sk, \hh_2)}\\
    \text{iff}\qquad& \text{there exists $\rb \in \evaluationSet{\fg}$, $\rc \in \evaluationSet{\fh}$ with $\rb \cdot \rc \geq ra$ and}\\
    & \text{there exists $\hh_1, \hh_2 \in \Heaps$ with $\hh = \hh_1 \sepcon \hh_2$ and}\\
    & \qquad (\sk, \hh_1) \models \atleast{\rb}{\fg} ~\text{and}~ (\sk, \hh_2) \models \atleast{\rc}{\fh} \tag{IH}\\
    \text{iff}\qquad& \text{there exists $\rb \in \evaluationSet{\fg}$, $\rc \in \evaluationSet{\fh}$ with $\rb \cdot \rc \geq \ra$ and}\\
    & \qquad (\sk, \hh) \models \atleast{\rb}{\fg} \sepcon \atleast{\rc}{\fh}\\
    \text{iff}\qquad& (\sk, \hh) \models \bigvee_{\rb \in \evaluationSet{\fg}, \rc \in  \evaluationSet{\fh}, \rb \cdot \rc \geq \ra } \atleast{\rb}{\fg} \sepcon \atleast{\rc}{\fh} \tag{$\evaluationSet{\fg}$ and $\evaluationSet{\fh}$ are finite}\\
\end{align*}

\medskip \noindent
\emph{The case $\ff =  \iverson{\slb}\sepimp \fg$.}
\begin{align*}
    &\ra \leq \semapp{\iverson{\slb}\sepimp \fg}{(\sk, \hh)} \\
    \text{iff}\qquad& \ra \leq \inf \left\{ \semapp{\fg}{(\sk,\hh\sepcon\hh')}  ~\mid~ \hh' \disjoint \hh ~\text{and}~ \iverson{\slb}(\sk,\hh')=1 \right\} \\
    \text{iff}\qquad& \text{for all $\hh' \in \Heaps$ with $\hh' \disjoint \hh$ and $(\sk, \hh') \models \slb$ we have} ~ \ra \leq \semapp{\fg}{(\sk, \hh \sepcon \hh')} \tag{$\dagger$ see below}\\
    \text{iff}\qquad& \text{for all $\hh' \in \Heaps$ with $\hh' \disjoint \hh$ and $(\sk, \hh') \models \slb$ we have} ~ (\sk, \hh \sepcon \hh') \models \atleast{\ra}{\fg} \tag{IH} \\
    \text{iff}\qquad& (\sk, \hh) \models \slb \sepimp \atleast{\ra}{\fg}
\end{align*}
Regarding $\dagger$ we have the following cases: Either there is no $\hh'$ with $\hh' \disjoint \hh$ and $(\sk, \hh') \models \slb$, then both statements hold for all $\ra\in \Probs$. If there is at least one $\hh'$ with $\hh' \disjoint \hh$ and $(\sk, \hh') \models \slb$, then since $\eval{\fg}$ is finite, there is a $\hh'$ with $\hh \disjoint \hh'$ and $(\sk, \hh') \models \slb$ that minimizes $\semapp{\fg}{(\sk, \hh \sepcon \hh')}$ and we let $\rb = \semapp{\fg}{(\sk, \hh \sepcon \hh')}$. Thus $\ra \leq \rb$ if and only if for all $\hh'$ with $\hh' \disjoint \hh$ and $(\sk, \hh') \models \slb$ we have $\ra \leq \semapp{\fg}{(\sk, \hh \sepcon \hh')}$.

\bigskip \noindent
This concludes the proof.
\end{proof}

\section{Appendix to \Cref{sec:complexity}}
\label{app:complexity}
\begin{table}[t]
	\centering
	\caption{Inductive definitions of the size of $\SLA$ and $\QSLA$ formulae.}%
	\label{tab:sizes_sl_qsl}%
	\renewcommand{\arraystretch}{1.2}%
	\begin{tabular}[t]{l@{\quad}l@{\quad}l@{\quad}l}
		\hline\hline
		$\sla$			&          $\sizeof{\sla}$			\\
		\hline
		$\slb\in \Predset$		& $\sizeof{\sla}$		              \\
		$\neg \slb$		& $1+ \sizeof{\slb}$		              \\
		$\slb \wedge \slc$		& $1+\sizeof{\slb}+\sizeof{\slc}$ 		              \\
		$\slb \vee \slc$		& $1+\sizeof{\slb}+\sizeof{\slc}$ 		              \\
		$\exists \xx \colon \slb$  & $1+\sizeof{\slb}$ \\
		$\forall \xx \colon \slb$  & $1+\sizeof{\slb}$\\
		$\slb \sepcon \slc$ &  $1+\sizeof{\slb}+\sizeof{\slc}$ \\
		$\slb \sepimp \slc$ &  $1+\sizeof{\slb}+\sizeof{\slc}$ \\
		\hline\hline
	\end{tabular}%
	\qquad
	\begin{adjustbox}{scale=0.87}
	\begin{tabular}[t]{l@{\quad}l@{\quad}l@{\quad}l@{\quad}l}
		\hline\hline
		$\ff$			&          $\sizeof{\ff} $   &	$\probsizeof{\ff}$		\\
		\hline
		$\iverson{\slb}$		& $\sizeof{\slb}$ & $0$	              \\
		$\iverson{\bb} \cdot \fg + \iverson{\neg\bb} \cdot \fh $ & $1+\sizeof{\bb}+\sizeof{\fg}+\sizeof{\neg\bb}+\sizeof{\fh}$ & $\probsizeof{\fg}+\probsizeof{\fh}$ \\
		$\qq \cdot \fg + (1-\qq) \cdot \fh $ & $1+\sizeof{\fg}+\sizeof{\fh}$ & $1+\probsizeof{\fg}+\probsizeof{\fh}$ \\
		$\fg \cdot \fh $  & $1+\sizeof{\fg}+\sizeof{\fh}$ & $1+\probsizeof{\fg}+\probsizeof{\fh}$\\
		$1-\fg$ & $1+\sizeof{\fg}$  & $\probsizeof{\fg}$\\
		$\emax{\fg}{\fh}$ & $1+\sizeof{\fg}+\sizeof{\fh}$ & $\probsizeof{\fg}+\probsizeof{\fh}$\\
		$\emin{\fg}{\fh}$ & $1+\sizeof{\fg}+\sizeof{\fh}$ & $\probsizeof{\fg}+\probsizeof{\fh}$\\
		$\Sup \xx \colon \fg $ & $1+\sizeof{\fg}$ & $\probsizeof{\fg}$\\
		$\Inf \xx \colon \fg $ & $1+\sizeof{\fg}$ & $\probsizeof{\fg}$\\
		$\fg \sepcon \fh $ & $1+\sizeof{\fg}+\sizeof{\fh}$ & $1+\probsizeof{\fg}+\probsizeof{\fh}$\\
		$\iverson{\slb} \sepimp \fg$ & $1+\sizeof{\fg}+\sizeof{\slb}$ & $\probsizeof{\fg}$\\
		\hline\hline
	\end{tabular}%
	\end{adjustbox}
	\renewcommand{\arraystretch}{1}%
	%
\end{table}

\begin{restateTheorem}{thm:evalset_size}
	We have $\sizeof{\evaluationSet{\ff}} \leq 2^{\probsizeof{\ff}+1}$. Hence, checking $\ff \entails \fg$ by means of \Cref{thm:qsla_by_sla} requires checking $2^{\bigo(\probsizeof{\ff})}$ entailments in $\SLA$.
\end{restateTheorem}
\begin{proof}
    We prove this by induction of $\ff$.

	\medskip \noindent
	\emph{For the base case for $\ff=\iverson{\slb}$} we have $\sizeof{\evaluationSet{\iverson{\slb}}}=2^{0+1}=2^{\probsizeof{\iverson{\slb}}+1}$.

	\medskip \noindent
	For the composite cases we assume that for some fixed, but arbitrary formulae $\fg, \fh \in \QSLA$, the inequalities $\sizeof{\evaluationSet{\fg}} \leq 2^{\probsizeof{\fg}+1}$ and $\sizeof{\evaluationSet{\fh}} \leq 2^{\probsizeof{\fh}+1}$ hold. 

	\medskip \noindent
	\emph{For the induction steps $\ff=\iverson{\bb} \cdot \fg + \iverson{\neg\bb} \cdot \fh$,} we distinguish three cases. 

	\begin{enumerate}
		\item $\probsizeof{\fg}=0$. Then by the induction hypothesis, we have $\sizeof{\evaluationSet{\fg}}\leq 2$. By \Cref{lem:zero_one_preservation}, we have $\{0, 1\} = \evaluationSet{\fg}$ and $0, 1 \in \evaluationSet{\fh}$. However, then the union will not increase the set, i.e. $\evaluationSet{\ff}=\evaluationSet{\fh}$. Finally we have $\sizeof{\evaluationSet{\ff}}=\sizeof{\evaluationSet{\fh}}\leq2^{\probsizeof{\fh}+1}= 2^{\probsizeof{\ff}+1}$ by the induction hypothesis.

		\item $\probsizeof{\fh}=0$ is analogous.
	
		\item $0<\probsizeof{\fg}, \probsizeof{\fh}$. Then the size of the set is at most the sum of each set $\sizeof{\evaluationSet{\ff}}\leq \sizeof{\evaluationSet{\fg}}+\sizeof{\evaluationSet{\fh}}$. By the induction hypothesis we then have 
		\[ \sizeof{\evaluationSet{\ff}} \leq \underbrace{2^{\probsizeof{\fg}+1}+2^{\probsizeof{\fh}+1} \leq 2 \cdot 2^{\probsizeof{\fg}} \cdot 2^{\probsizeof{\fh}}}_{\text{because}~ 0<\probsizeof{\fg}, \probsizeof{\fh}} = 2^{\probsizeof{\fg}+\probsizeof{\fh}+1} = 2^{\probsizeof{\ff}+1}~.\]
	\end{enumerate}
	
	\medskip \noindent
	\emph{For the induction steps $\ff=\qq \cdot \fg + (1-\qq) \cdot \fh$,} we have that $\sizeof{\evaluationSet{\ff}} \lleq \sizeof{\evaluationSet{\fg}} \cdot \sizeof{\evaluationSet{\fh}}$. By the induction hypothesis we deduce the upper bound 
	\[ \evaluationSet{\ff}\leq 2^{\probsizeof{\fg}+1} \cdot 2^{\probsizeof{\fh}+1} = 2^{\probsizeof{\fg}+\probsizeof{\fh}+1+1}=2^{\probsizeof{\ff}+1}~.\]

	\medskip \noindent
	\emph{The induction step $\ff=\fg \cdot \fh$} is analogous to the previous case.

	\medskip \noindent
	\emph{For the induction step $\ff=1- \fg$,} we have that the set does not change in size $\sizeof{\evaluationSet{\ff}}=\sizeof{\evaluationSet{\fg}}$ and $\probsizeof{\ff}=\probsizeof{\fg}$. Thus this follows directly from the induction hypothesis.

	\medskip\noindent
	\emph{The induction steps $\ff=\emin{\fg}{\fh}$ and $\ff=\emax{\fg}{\fh}$} are analogous to case $\ff=\iverson{\bb} \cdot \fg + \iverson{\neg\bb} \cdot \fh$.

	\medskip\noindent
	\emph{The induction steps $\ff=\Sup \xx \colon \fg$ and $\ff=\Inf \xx \colon \fg$} are analogous to the case $\ff=1-\fg$.

	\medskip \noindent
	\emph{The induction step $\ff=\fg \sepcon \fh$} is analogous to the case $\ff=\fg \cdot \fh$.

	\medskip \noindent
	\emph{The induction step $\ff=\iverson{\slb} \sepimp \fg$} is analogous to the case $\ff=1-\fg$.

\end{proof}
\bigskip

\begin{restateTheorem}{thm:atleast_size}
	For any formulae $\ff \in \QSLA$ and all probabilities $\ra \in \Probs$, the $\SLA$ formula $\atleast{\ra}{\ff}$ has at most size $3 \cdot \sizeof{\ff} \cdot 2^{(\probsizeof{\ff}+1)^2}$. Hence the size of the formula $\atleast{\ra}{\ff}$ is in $\bigo(\sizeof{\ff}) \cdot 2^{\bigo(\probsizeof{\ff}^2)}$.
\end{restateTheorem}
\begin{proof}
	We show by induction on $\ff$ that $\sizeof{\atleast{\ra}{\ff}} \leq 3 \cdot \sizeof{\ff} \cdot 2^{(\probsizeof{\ff}+1)^2}$ for all $\ra \in \Probs$. 

	\medskip \noindent
	\emph{For the base case $\ff=\iverson{\slb}$} we have $\sizeof{\atleast{\ra}{\ff}} \leq  \sizeof{\slb} = \sizeof{\slb} \cdot 2^{1} \leq 3 \cdot \sizeof{\ff} \cdot 2^{(\probsizeof{\ff}+1)^2}$.

	\medskip \noindent
	For the composite cases we assume that for some arbitrary, but fixed formulae $\fg, \fh \in \QSLA$ and all probabilities $\ra \in \Probs$ the inequalities for both formulae $\sizeof{\atleast{\ra}{\fg}} \leq 3 \cdot \sizeof{\fg} \cdot 2^{(\probsizeof{\fg}+1)^2}$ and $\sizeof{\atleast{\ra}{\fh}} \leq 3 \cdot \sizeof{\fh} \cdot 2^{(\probsizeof{\fh}+1)^2}$ hold.

	\medskip \noindent
	\emph{For the induction step $\ff=\iverson{\bb} \cdot \fg + \iverson{\neg\bb} \cdot \fh$} we have that 
	\begin{align*}
				& \sizeof{\atleast{\ra}{\ff}} \\
	= \quad		& \sizeof{(\iverson{\bb} \land \atleast{\ra}{\fh}) \lor (\iverson{\neg \bb} \land \atleast{\ra}{\fh})} \\
	= \quad		& \sizeof{\iverson{\bb} + 1 + \sizeof{\atleast{\ra}{\fh}} + 1 + \sizeof{\iverson{\neg \bb}} + 1 + \sizeof{\atleast{\ra}{\fh}}} \\
	= \quad		& \sizeof{\bb}+\sizeof{\atleast{\ra}{\fg}}+\sizeof{\neg\bb}+\sizeof{\atleast{\ra}{\fh}}+3  \\
	\leq \quad  &  \sizeof{\bb} + %
				   3 \cdot \sizeof{\fg} \cdot 2^{(\probsizeof{\fg}+1)^2} + %
				   \sizeof{\neg\bb} +%
				   3 \cdot \sizeof{\fh} \cdot 2^{(\probsizeof{\fh}+1)^2} + %
				   3 \tag{IH} \\
	\leq \quad  &  \sizeof{\bb} + %
				   3 \cdot \sizeof{\fg} \cdot 2^{(\probsizeof{\fg}+\probsizeof{\fh}+1)^2} + %
				   \sizeof{\neg\bb} +%
				   3 \cdot \sizeof{\fh} \cdot 2^{(\probsizeof{\fg}+\probsizeof{\fh}+1)^2} + %
				   3 \\
	\leq \quad  & 3 \cdot ( \sizeof{\bb} + \sizeof{\fg} + \sizeof{\neg\bb} + \sizeof{\fh} + 1) \cdot 2^{(\probsizeof{\fg}+\probsizeof{\fh}+1)^2} \\
	= \quad     & 3 \cdot ( \sizeof{\bb} + \sizeof{\fg} + \sizeof{\neg\bb} + \sizeof{\fh} + 1) \cdot 2^{(\probsizeof{\ff}+1)^2} \tag{$\probsizeof{\ff}=\probsizeof{\fg}+\probsizeof{\fh}$}\\
	= \quad     & 3 \cdot \sizeof{\ff} \cdot 2^{(\probsizeof{\ff}+1)^2} \tag{$\sizeof{\ff}=\sizeof{\bb}+\sizeof{\fg}+\sizeof{\neg\bb}+\sizeof{\fh}+1$}
	\end{align*} 

	\medskip \noindent
	\emph{For the induction step $\ff=\pp \cdot \fg + (1-\pp) \cdot \fh$,} we have for $\rb, \rc \in \Probs$ maximizing $\sizeof{\atleast{\rb}{\fg}}$ and $\sizeof{\atleast{\rc}{\fh}}$ that
	\[ \sizeof{\atleast{\ra}{\ff}} \lleq \left( \sizeof{\evaluationSet{\fg}} \cdot \sizeof{\evaluationSet{\fh}} \right) \cdot \left( \sizeof{\atleast{\rb}{\fg}} + \sizeof{\atleast{\rc}{\fh}} + 2 \right)~, \]
	thus using the induction hypothesis and \Cref{thm:evalset_size}
	\[ \sizeof{\atleast{\ra}{\ff}} \lleq \left( 2^{\probsizeof{\fg}+1}\cdot 2^{\probsizeof{\fh}+1} \right) \cdot \left( 3 \cdot \sizeof{\fg} \cdot 2^{(\probsizeof{\fg}+1)^2} + 3 \cdot \sizeof{\fh} \cdot 2^{(\probsizeof{\fh}+1)^2} + 2 \right)~. \tag{$\dagger$}\]
	Remark that $1\leq \probsizeof{\ff}$ by assumption. We now prove the upper bound on the right side by distinguishing two cases on $\probsizeof{\ff}$:
	
	\begin{enumerate}
		\item $\probsizeof{\ff}=1$. Then $\probsizeof{\fg}=\probsizeof{\fh}=0$ and we have:
		\begin{align*}
					& \sizeof{\atleast{\ra}{\ff}} \\
		\leq \quad 	& 2 \cdot 2 \cdot (3 \cdot \sizeof{\fg} \cdot 2 + 3 \cdot \sizeof{\fh} \cdot 2 + 2) \tag{($\dagger$) and $\probsizeof{\ff}=1$} \\
		= \quad     & 2^3 \cdot 3 \cdot ( \sizeof{\fg} + \sizeof{\fh} + \sfrac{1}{5})  \\
		\leq \quad  & 2^3 \cdot 3 \cdot \sizeof{\ff} \tag{$\sizeof{\ff}=\sizeof{\fg}+\sizeof{\fh}+1$}\\
		\leq \quad  & 3 \cdot \sizeof{\ff} \cdot 2^{(\probsizeof{\ff}+1)^2} \tag{$\probsizeof{\ff}=1$}
		\end{align*}

		\item $\probsizeof{\ff} > 1$. Here we require a bit more mathematical tools in form of \Cref{lem:difficult_complexity_inequality}. Then we have:
		\begin{align*}
						& \sizeof{\atleast{\ra}{\ff}} \\
			\leq \quad 	& \left( 2^{\probsizeof{\fg}+1}\cdot 2^{\probsizeof{\fh}+1} \right) \cdot \left( 3 \cdot \sizeof{\fg} \cdot 2^{(\probsizeof{\fg}+1)^2} + 3 \cdot \sizeof{\fh} \cdot 2^{(\probsizeof{\fh}+1)^2} + 2 \right) \tag{$\dagger$}\\
			= \quad     & 2^{\probsizeof{\ff}+1} \cdot \left( 3 \cdot \sizeof{\fg} \cdot 2^{(\probsizeof{\fg}+1)^2} + 3 \cdot \sizeof{\fh} \cdot 2^{(\probsizeof{\fh}+1)^2} + 2 \right) \hspace{5em} \tag{$\probsizeof{\ff}=\probsizeof{\fg}+\probsizeof{\fh}+1$}\\
			= \quad     & 3 \cdot \sizeof{\fg} \cdot 2^{\probsizeof{\ff}+1} \cdot 2^{(\probsizeof{\fg}+1)^2} + 3 \cdot \sizeof{\fh} \cdot 2^{\probsizeof{\ff}+1} \cdot 2^{(\probsizeof{\fh}+1)^2} + 2^{\probsizeof{\ff}+1} \cdot 2\\
			\leq \quad  & \quad 3 \cdot (\sizeof{\fg}+\sizeof{\fh}) \cdot 2^{\probsizeof{\ff}+1} \cdot 2^{(\probsizeof{\fg}+\probsizeof{\fh}+1)^2} \\
						& + 3 \cdot (\sizeof{\fg}+\sizeof{\fh}) \cdot 2^{\probsizeof{\ff}+1} \cdot 2^{(\probsizeof{\fg}+\probsizeof{\fh}+1)^2} \\
						& + 2 \cdot 2^{\probsizeof{\ff}+1}\\
			\leq \quad 	& 2 \cdot 3 \cdot (\sizeof{\fg}+\sizeof{\fh}) \cdot 2^{\probsizeof{\ff}+1} \cdot 2^{(\probsizeof{\fg}+\probsizeof{\fh}+1)^2}  + 2^{\probsizeof{\ff}+2}\\
			\leq \quad 	& 2 \cdot 3 \cdot \sizeof{\ff} \cdot 2^{\probsizeof{\ff}+1} \cdot 2^{(\probsizeof{\fg}+\probsizeof{\fh}+1)^2}  + 2^{\probsizeof{\ff}+2} \tag{$\sizeof{\ff}=\sizeof{\fg}+\sizeof{\fh}+1$}\\
			= \quad 	& 2 \cdot 3 \cdot \sizeof{\ff} \cdot 2^{\probsizeof{\ff}+1} \cdot 2^{\probsizeof{\ff}^2} + 2^{\probsizeof{\ff}+2} \tag{$\probsizeof{\ff}=\probsizeof{\fg}+\probsizeof{\fh}+1$}\\
			= \quad 	& 3 \cdot \sizeof{\ff} \cdot 2^{\probsizeof{\ff}^2+\probsizeof{\ff}+2} + 2^{\probsizeof{\ff}+2} \\
			\leq \quad 	& 3 \cdot \sizeof{\ff} \cdot \left( 2^{\probsizeof{\ff}^2+\probsizeof{\ff}+2} + 2^{\probsizeof{\ff}+2} \right)\\
			\leq \quad  & 3 \cdot \sizeof{\ff} \cdot 2^{(\probsizeof{\ff}+1)^2} \tag{by $1<\probsizeof{\ff}$ and \Cref{lem:difficult_complexity_inequality}}
		\end{align*}

	\end{enumerate}

	\medskip \noindent
	\emph{The induction steps for $\ff=\fg \cdot \fh$} is analogous to the previous case.

	\medskip \noindent
	\emph{For the induction step $\ff=1-\fg$} we have that
	\begin{align*}
				& \sizeof{\atleast{\ra}{\ff}} \\
	\leq \quad  & \sizeof{\atleast{\ra}{\fg}}+1 \\
	\leq \quad  & 3 \cdot \sizeof{\fg} \cdot 2^{(\probsizeof{\fg}+1)^2} + 1 \tag{IH}\\
	\leq \quad 	& 3 \cdot \sizeof{\fg} \cdot 2^{(\probsizeof{\ff}+1)^2} +  2^{(\probsizeof{\ff}+1)^2} \tag{$0\leq \probsizeof{\fg} = \probsizeof{\ff} $} \\
	= \quad     & 3 \cdot (\sizeof{\fg} +\sfrac{1}{3}) \cdot 2^{(\probsizeof{\ff}+1)^2}\\
	\leq \quad  & 3 \cdot \sizeof{\ff} \cdot 2^{(\probsizeof{\ff}+1)^2} \tag{$\sizeof{\ff}=\sizeof{\fg}+1$}
	\end{align*}

	\medskip \noindent
	\emph{The induction steps for $\emax{\fg}{\fh}$, $\emin{\fg}{\fh}$, $\Sup \xx \colon \fg$ and $\Inf \xx \colon \fg$} are analogous to the previous case.

	\medskip \noindent
	\emph{The induction steps for $\fg \sepcon \fh$} is analogous to the case $\ff=\pp \cdot \fg + (1-\pp) \cdot \fh$.

	\medskip \noindent
	\emph{For the induction step $\iverson{\slb} \sepimp \fg$} we have that
	\begin{align*}
				& \sizeof{\atleast{\ra}{\ff}} \\
	= \quad  	& \sizeof{\atleast{\ra}{\fg}}+\sizeof{\slb}+1 \\
	\leq \quad  & 3 \cdot \sizeof{\fg} \cdot 2^{(\probsizeof{\fg}+1)^2} + \sizeof{\slb}+1 \tag{IH}\\
	\leq \quad 	& 3 \cdot \sizeof{\fg} \cdot 2^{(\probsizeof{\ff}+1)^2} + (\sizeof{\slb}+1) \cdot 2^{(\probsizeof{\ff}+1)^2} \tag{$0\leq \probsizeof{\fg} = \probsizeof{\ff} $} \\
	= \quad  & 3 \cdot (\sizeof{\fg} + \sfrac{\sizeof{\slb}}{3}+\sfrac{1}{3}) \cdot 2^{(\probsizeof{\ff}+1)^2}\\
	\leq \quad  & 3 \cdot \sizeof{\ff} \cdot 2^{(\probsizeof{\ff}+1)^2} \tag{$\sizeof{\ff}=\sizeof{\fg}+\sizeof{\slb}+1$}
	\end{align*}

	\medskip \noindent
	This concludes the proof.
\end{proof}
\medskip
\begin{lemma}\label{lem:difficult_complexity_inequality}
	For all natural numbers $n > 1$, we have $2^{n^2+n+2}+2^{n+2} \leq 2^{(n+1)^2}$.
\end{lemma}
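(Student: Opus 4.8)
The plan is a purely elementary estimate; the only idea needed is to pull out the dominant power of two on the right-hand side. First I would rewrite the exponent on the right as $(n+1)^2 = n^2 + 2n + 1$, so that the claim becomes
\[
  2^{n^2+n+2} + 2^{n+2} \;\leq\; 2^{n^2+2n+1}~.
\]
Then I would factor $2^{n^2+n+2}$ out of the right-hand side: since $(n^2+2n+1) - (n^2+n+2) = n-1$, we have $2^{n^2+2n+1} = 2^{n-1}\cdot 2^{n^2+n+2}$.

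Next, using the hypothesis $n > 1$, i.e.\ $n \geq 2$, so that $2^{n-1} \geq 2$, I would bound
\[
  2^{n^2+2n+1} \;=\; 2^{n-1}\cdot 2^{n^2+n+2} \;\geq\; 2\cdot 2^{n^2+n+2} \;=\; 2^{n^2+n+2} + 2^{n^2+n+2}~.
\]
It therefore suffices to show $2^{n+2} \leq 2^{n^2+n+2}$, which is equivalent to $n+2 \leq n^2+n+2$, i.e.\ to $0 \leq n^2$. This holds trivially, which completes the argument. (A quick sanity check at $n=2$: the left side is $2^{8}+2^{4} = 272$ and the right side is $2^{9} = 512$.)

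I do not expect any genuine obstacle here; the proof is a two-line inequality chain once the dominant term $2^{n^2+n+2}$ is isolated. The only point to state carefully is that $n > 1$ is exactly what is needed to guarantee $2^{n-1}\geq 2$, so the hypothesis on $n$ is used precisely there and nowhere else.
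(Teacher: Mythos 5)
Your proof is correct. It differs from the paper's, which proceeds by induction on $n$: the base case $n=2$ is checked numerically, and the inductive step multiplies through by $2^{2n+2}$ and invokes the induction hypothesis before absorbing the slack via $n^2+4n+3 \leq n^2+4n+4 = (n+1+1)^2$. Your argument is a direct estimate: you isolate the dominant term by writing $2^{(n+1)^2} = 2^{n-1}\cdot 2^{n^2+n+2}$, use $n\geq 2$ to get the factor $2^{n-1}\geq 2$, and observe $2^{n+2}\leq 2^{n^2+n+2}$ trivially. This is shorter, avoids induction entirely, and makes transparent exactly where the hypothesis $n>1$ is used; the paper's induction buys nothing here beyond familiarity of form. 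Both proofs are valid, and your sanity check at $n=2$ ($272 \leq 512$) matches the paper's base case.
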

\begin{proof}
By induction over $n$.

\smallskip \noindent
\emph{For the base case $n=2$,} we have $2^{2^2+2+2}+2^{2+2} = 2^{8}+2^4 < 2^8+2^8 = 2^{(2+1)^2}$.

\smallskip \noindent
Now we assume that for some fixed, but arbitrary natural number $n>1$ the inequality $2^{n^2+n+2}+2^{n+2} \leq 2^{(n+1)^2}$ holds.

\smallskip \noindent
\emph{For the induction step $n \rightarrow n+1$,} we have 
\begin{align*}
				& 2^{(n+1)^2+n+1+2}+2^{n+1+2} \\
	= \quad     & 2^{n^2+2n+1+n+1+2} + 2^{n+1+2} \\
	= \quad     & 2^{n^2+n+2} \cdot 2^{2n+2} + 2^{n+2} \cdot 2 \\
	\leq \quad  & 2^{2n+2} \cdot \left( 2^{n^2+n+2} + 2^{n+2} \right) \\
	\leq \quad  & 2^{2n+2} \cdot 2^{(n+1)^2} \tag{IH} \\
	= \quad     & 2^{n^2+4n+3} \\
	\leq \quad  & 2^{n^2+4n+4} \\
	= \quad     & 2^{(n+1+1)^2}~.
\end{align*}
This concludes the proof.
\end{proof}

\section{Appendix to \Cref{sec:applications}}
\label{app:applications}
\begin{restateLemma}{lem:sla_requirements}
	Let $\QSLfrag$ be a $\QSLA$ fragment. If an $\SLA$ fragment $\SLfrag$ satisfies the requirements provided in \Cref{tab:sla_requirements}, then $\SLfrag$ is $\QSLfrag$-admissible.
\end{restateLemma}
\begin{proof}
    By induction on $\ff$.

    \medskip \noindent
    \emph{For the base case $\ff=\iverson{\slb}$,} we have $\atleast{\ra}{\iverson{\slb}}= \true ~\text{if}~ \ra=0 ~\text{otherwise}~ \iverson{\slb}$, thus $\true$ and $\slb$ are required. 

    \medskip \noindent
    For all other composite cases we assume for some fixed, but arbitrary $\fg, \fh \in \QSLA$ that for all $\rb, \rc \in \Probs$ the formulae $\atleast{\rb}{\fg}$ and $\atleast{\rc}{\fh}$ satisfy the requirements.

    \medskip \noindent
    \emph{For the case $\ff=\iverson{\bb} \cdot \fg + \iverson{\neg \bb} \cdot \fh$,} we have $\atleast{\ra}{\ff}=(\bb \land \atleast{\ra}{\fg}) \lor (\neg \bb \land \atleast{\ra}{\fh})$, since by the induction hypothesis $\atleast{\ra}{\fg}$ and $\atleast{\ra}{\fh}$ already satisfy all requirements, we only require additionally $\bb, \neg \bb, \land$ and $\lor$.

    \medskip \noindent
    \emph{For the case $\ff=\pp \cdot \fg + (1-\pp) \cdot \fh$,} we have 
    \[ \atleast{\ra}{\ff} = \bigvee_{\rb \in \evaluationSet{\fg}, \rc \in  \evaluationSet{\fh}, \pp\cdot\rb +(1-\pp)\cdot\rc \geq \ra } \quad \atleast{\rb}{\fg} \wedge \atleast{\rc}{\fh}~. \]
    Here we have two observations: 
    \begin{enumerate}
        \item There are only finitely many disjunctions since $\evaluationSet{\fg}$ and $\evaluationSet{\fh}$ is finite by \Cref{thm:evalset}.
        \item The disjunctions is not empty since $1 \in \evaluationSet{\fg}$ and $1 \in \evaluationSet{\fh}$ by \Cref{lem:zero_one_preservation} and $\pp \cdot 1 + (1-\pp) \cdot 1 = 1 \geq \ra$ for all $\ra \in \Probs$.
    \end{enumerate}
    Thus, for any $\ra \in \Probs$, we can construct the big disjunction by only using $\land$, $\lor$, $\atleast{\rb}{\fg}$ and $\atleast{\rc}{\fh}$ for all $\rb, \rc \in \Probs$.
    Since $\atleast{\rb}{\fg}$ and $\atleast{\rc}{\fh}$ satisfy all requirements by the induction hypothesis, we only require additionally $\land$ and $\lor$.

    \medskip \noindent
    \emph{The case $\ff= \fg \cdot \fh$} is analogous to the previous case.

    \medskip \noindent
    \emph{For the case $\ff= 1- \fg$,} we have $\atleast{\ra}{\ff}=\true ~\text{if}~ \ra=0 ~\text{otherwise}~ \neg \atleast{\rd}{\fg}$ where $\rd = \min \setcomp{\rb \in \evaluationSet{\fg}}{\rb > 1-\ra}$. Remark that we compute $\rd$ during the construction of $\atleast{\ra}{\ff}$ and not during the checking of $(\sk, \hh) \models \atleast{\ra}{\ff}$. Since $\atleast{\rd}{\fg}$ already satisfy all requirements by the induction hypothesis, we thus only additionally require $\neg$ and $\true$.

    \medskip \noindent
    \emph{The cases $\ff=\emax{\fg}{\fh}$, $\ff=\emin{\fg}{\fh}$, $\ff=\Sup \xx \colon \fg$ and $\ff=\Inf \xx \colon \fg$} are analogous to the case $\ff=\iverson{\bb} \cdot \fg + \iverson{\neg \bb} \cdot \fh$.

    \medskip \noindent
    \emph{The case $\ff=\fg \sepcon \fh$} is analogous to the case $\ff=\pp \cdot \fg + (1-\pp) \cdot \fh$.

    \medskip \noindent
    \emph{The case $\ff=\iverson{\slb} \sepimp \fg$} is analogous to the case $\ff=\iverson{\bb} \cdot \fg + \iverson{\neg \bb} \cdot \fh$.

    \medskip \noindent
    This concludes the proof.
\end{proof}

\section{Appendix to \Cref{sec:qsh}}
\label{app:qsh}
\begin{restateTheorem}{thm:decide_wp_qsh}
	For loop- and allocation-free $\hpgcl$ programs $\cc$
    (that only perform pointer operations, no arithmetic and guards of the pure fragment of $\Predset$) and $\ff_1,\ff_2 \in \QSH$,
    it is decidable whether the entailment
	$\wlp{\cc}{\ff_1} \entails \ff_2$ holds.
\end{restateTheorem}
\subsubsection{Proof of \Cref{thm:decide_wp_qsh}}
The proof requires extended quantitative symbolic heaps:
\begin{restateDefinition}{def:eQSH}
	The set $\eQSH$ of \emph{extended quantitative symbolic heaps} is given by the grammar 
	\begin{align*}
		\fg \quad \rightarrow \quad 
		&\iverson{\preda}
		~\mid~ \iverson{\BB} \cdot \fg + \iverson{\neg\BB}\cdot \fg 
		~\mid~ \qq \cdot \fg + (1-\qq)\cdot \fg 		
		~\mid~ \fg\sepcon\fg  \\
		& ~\mid~ \Sup \xx \colon \fg 
		~\mid~ \singleton{\xx}{\tuplenotation{\xy_1,\ldots,\xy_k}} \sepimp \fg~.
		\tag*{$\triangle$}
	\end{align*}
\end{restateDefinition}
Notice that indeed $\QSH \subseteq \eQSH$.
\begin{restateLemma}{lem:qsh_wp_closed}
		For every loop- and allocation-free program $\cc \in \hpgcl$ without arithmetic and only with guards of the pure fragment of $\Predset$, extended quantitative symbolic heaps are closed under $\wlpC{\cc}$, i.e.,
	\[
	   \text{for all $\fg \in \eQSH$}\colon \quad \wlp{\cc}{\fg} \in \eQSH~.
	\]
	In particular, since $\QSH \subseteq \eQSH$, we have 
	\[
		\text{for all $\ff \in \QSH$}\colon \quad \wlp{\cc}{\ff} \in \eQSH~. 
	\]
\end{restateLemma}
\begin{proof}
	Since we not allow arithmetic in expressions, we only have expressions of the form $\ee = \xy$. Moreover, $\eQSH$ is trivially closed under the substitution $\fg \subst{\xx}{\xy}$ for any $\xx, \xy \in \Vars$. Now we prove the lemma by induction on the structure of loop- and allocation-free program $\cc$ with $\recnum = 1$, however the proof is easy to adapt for any $\recnum$.

	\medskip \noindent
	\emph{For the base case $\cc=\SKIP$} we have $\wlp{\SKIP}{\fg}=\fg \in \eQSH$.

	\medskip \noindent
	\emph{For the base case $\cc=\ASSIGN{x}{\xy}$} we have 
	\[\wlp{\ASSIGN{x}{\xy}}{\fg} \eeq \fg \subst{\xx}{\xy} \iin \eQSH~. \]

	\medskip \noindent
	\emph{For the base case $\cc=\FREE{\xy}$} we have 
	\[\wlp{\FREE{\xy}}{\fg} \eeq \Sup \xz \colon \singleton{\xy}{\xz} \sepcon \fg \iin \eQSH\] 
	where $\xz$ is fresh.

	\medskip \noindent
	\emph{For the base case $\cc=\ASSIGNH{\xx}{\xy}$} we have 
	\[\wlp{\ASSIGNH{\xx}{\xy}}{\fg} \eeq \Sup \xz \colon \singleton{\xx}{\xz} \sepcon (\singleton{\xx}{\xy} \sepimp \ff \subst{\xx}{\xy}) \iin \eQSH \]
	where $\xz$ is fresh.

	\medskip \noindent
	\emph{For the base case $\cc=\HASSIGN{\xx}{\xy}$} we have 
	\[ \wlp{\HASSIGN{\xx}{\xy}}{\fg} \eeq \Sup \xz \colon \singleton{\xx}{\xz} \sepcon (\singleton{\xx}{\xy} \sepimp \ff ) \iin \eQSH \] 
	where $\xz$ is fresh.

	\medskip \noindent
	For all other composite cases we assume for some fixed, but arbitrary loop- and allocation-free programs $\cc_1, \cc_2 \in \hpgcl$ without arithmetic and only with guards of the pure fragment of $\Predset$ such that for all $\fg \in \eQSH \colon \quad \wlp{\cc_1}{\fg} \in \eQSH$ and for all $\fh \in \eQSH \colon \wlp{\cc_2}{\fh} \in \eQSH$.

	\medskip \noindent
	\emph{For the case $\cc=\PCHOICE{\cc_1}{\pp}{\cc_2}$} we have 
	\[\wlp{\PCHOICE{\cc_1}{\pp}{\cc_2}}{\fg} \eeq \pp \cdot \wlp{\cc_1}{\fg} + (1- \pp) \cdot \wlp{\cc_2}{\fg} \iin \eQSH\] 
	by the induction hypothesis.
	
	\medskip \noindent
	\emph{For the case $\cc=\COMPOSE{\cc_1}{\cc_2}$} we have 
	\[\wlp{\COMPOSE{\cc_1}{\cc_2}}{\fg} \eeq \wlp{\cc_1}{\wlp{\cc_2}{\fg}} \iin \eQSH\] 
	by the induction hypothesis.

	\medskip \noindent
	\emph{For the case $\cc=\ITE{\guard}{\cc_1}{\cc_2}$} we have either $\guard= (\xx = \xy)$, then 
	\begin{gather*}
		\wlp{\ITE{\xx = \xy}{\cc_1}{\cc_2}}{\fg} \qquad \qquad \qquad \qquad \qquad \\
		\qquad \qquad \qquad \eeq \iverson{\xx = \xy} \cdot \wlp{\cc_1}{\fg} + \iverson{\xx \neq \xy} \cdot \wlp{\cc_2}{\fg} \iin \eQSH
	\end{gather*}
	by the induction hypothesis; or $\guard=(\xx \neq \xy)$ which is analogous.

	\medskip \noindent
	This concludes the proof.
\end{proof}
Hence, if $\fg \models \ff$ is decidable for $\fg \in \eQSH$ and $\ff \in \QSH$, \Cref{thm:decide_wp_qsh} follows.
\begin{restateLemma}{lem:decide_entailment_eqsh_qsh}
	For $\fg \in \eQSH$ and $\ff \in \QSH$, it is decidable whether 
	      $\fg \entails \ff$ holds.
\end{restateLemma}
\begin{proof}
	We employ \Cref{lem:sla_requirements} to determine two $\SLA$ fragments $\SLfrag_1, \SLfrag_2$ such that $\SLfrag_1$ is $\eQSH$-admissible and $\SLfrag_2$ is $\QSH$-admissible. 
	Then, by \Cref{thm:decide_entailment_qsl_by_sl}, decidability of $\fg \entails \ff$ follows from decidability of $\sla\entails\slb$ for $\sla \in \SLfrag_1$ and $\slb \in \SLfrag_2$. For that, we exploit the equivalence
	\[
	\sla\entails\slb \qquad \text{iff} \qquad \sla \wedge \neg\slb ~\text{is unsatisfiable}~.
	\]
	The latter is decidable by \cite[Theorem~3.3]{Echenim2020Bernays} since  $\sla \wedge \neg\slb$ is equivalent to a formula of the form $\exists^*\forall^* \colon \slc$ with $\slc$ quantifier-free and such that no formula $\slc_1 \sepimp \slc_2$ occurring in $\slc$ contains a universally quantified variable. 
The $\eQSH$-admissible $\SLA$ fragment $\SLfrag_1$ is given by
\begin{align*}
	\sla \quad \rightarrow \quad 
	&\preda \\
	&\mid~ \sla \wedge \sla  \\
	&\mid~ \sla \vee \sla  \\
	&\mid~ \exists\xx\colon \sla \\
	&\mid~ \sla \sepcon \sla \\
	&\mid~ \slsingleton{\xx}{\tuplenotation{\xy_1,\ldots,\xy_k}}  \sepimp \sla ~. 
\end{align*} 
The $\QSH$-admissible $\SLA$ fragment $\SLfrag_2$ is given by
\begin{align*}
	\slb \quad \rightarrow \quad 
	&\preda \\
	&\mid~ \slb \wedge \slb  \\
	&\mid~ \slb \vee \slb  \\
	&\mid~ \exists\xx\colon \slb \\
	&\mid~ \slb \sepcon \slb ~.
\end{align*} 
\begin{lemma}
	\label{lem:prenex_eqsh}
	Every $\sla \in \SLfrag_1$ is equivalent to a formula $\slb = \exists \xx_1,\ldots,\xx_n\colon \slc$ for some $n \in \Nats$ with $\slc \in \SLfrag_1$ quantifier-free.
\end{lemma}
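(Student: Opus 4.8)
The plan is to prove the lemma by structural induction on $\sla\in\SLfrag_1$, pulling every existential quantifier outward into a single leading block. In the base case $\sla=\preda$ is an atom, hence already quantifier-free, and the claim holds with $n=0$. For every composite constructor the induction hypothesis supplies, for each immediate subformula $\sla_i$, an equivalent formula $\exists\bar x_i\colon\slc_i$ with $\slc_i\in\SLfrag_1$ quantifier-free. Using that formulae are invariant under $\alpha$-renaming of bound variables, I may assume the blocks $\bar x_i$ are pairwise disjoint and consist of variables not occurring free elsewhere in $\sla$ (in particular, not among the head variables of a magic wand). It then remains, constructor by constructor, to commute $\exists$ past the constructor while staying inside $\SLfrag_1$.

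First I would dispatch the ``easy'' constructors. For $\wedge$ and $\vee$ I invoke the standard equivalences $(\exists x\colon\slb)\wedge\slc\equiv\exists x\colon(\slb\wedge\slc)$ and $(\exists x\colon\slb)\vee\slc\equiv\exists x\colon(\slb\vee\slc)$, valid whenever $x$ is not free in $\slc$; applying them on both sides and concatenating the resulting quantifier blocks yields the prenex form, and the kernel $\slc_1\wedge\slc_2$ resp.\ $\slc_1\vee\slc_2$ stays in $\SLfrag_1$. For $\sla=\exists x\colon\sla_1$ I simply prepend $x$ to the block obtained for $\sla_1$. For the separating conjunction I use $(\exists x\colon\slb)\sepcon\slc\equiv\exists x\colon(\slb\sepcon\slc)$ for $x$ not free in $\slc$: this holds because a splitting $\hh=\hh_1\sepcon\hh_2$ is independent of the stack, so a witness value chosen on the $\slb$-side can be reused and leaves the $\slc$-side untouched; the kernel $\slc_1\sepcon\slc_2$ is again in $\SLfrag_1$.

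The main obstacle is the magic-wand case $\sla=\slsingleton{\xx}{\tuplenotation{\xy_1,\ldots,\xy_k}}\sepimp\sla_1$, since in general $\slb\sepimp(\exists z\colon\slc)$ is \emph{not} equivalent to $\exists z\colon(\slb\sepimp\slc)$: the wand quantifies universally over heap extensions, which does not commute with an existential. The key observation I would isolate as an auxiliary claim is that this obstruction disappears when the antecedent is a points-to atom: for any stack $\sk$ there is at most one heap $h'$ with $(\sk,h')\models\slsingleton{\xx}{\tuplenotation{\xy_1,\ldots,\xy_k}}$, so the wand's meaning at $(\sk,\hh)$ is either vacuously true (when no such $h'$ exists or $\sk(\xx)\in\dom{\hh}$) or amounts to the single check $(\sk,\hh\sepcon h')\models\sla_1$ for this unique $h'$. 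Because the head variables $\xx,\xy_1,\ldots,\xy_k$ are not renamed when freshening the block $\bar x$ of $\sla_1$, the heap $h'$ is the same under $\sk$ and under any update $\sk\statesubst{\bar x}{\bar v}$; a short case analysis (splitting on whether a disjoint $h'$ exists) then establishes
\[
  \slsingleton{\xx}{\tuplenotation{\xy_1,\ldots,\xy_k}}\sepimp\bigl(\exists\bar x\colon\slc\bigr)
  \;\;\equiv\;\;
  \exists\bar x\colon\bigl(\slsingleton{\xx}{\tuplenotation{\xy_1,\ldots,\xy_k}}\sepimp\slc\bigr),
\]
whose kernel is a quantifier-free points-to magic wand and hence lies in $\SLfrag_1$. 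Prepending $\bar x$ to the prenex block finishes this case, completing the induction. I expect the verification of the displayed equivalence above to be the one place requiring a genuine (if routine) semantic argument; everything else is bookkeeping about fresh variables.
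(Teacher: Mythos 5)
Your proposal is correct and follows essentially the same route as the paper's proof: structural induction with standard prenexing for $\wedge$, $\vee$, $\exists$, and $\sepcon$, plus the same key observation for the wand case, namely that a points-to antecedent admits at most one satisfying heap per stack (unaffected by updates to the fresh bound variables), so the existential block commutes with the wand after the same case split on whether a disjoint extension exists. No gaps.
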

\begin{proof}
	By induction on $\sla$. For the base case $\sla =  \preda$ we choose $n=0$ and have nothing to show. The cases $\wedge,\vee,\exists$ are standard. For the remaining cases, we reason as follows: As the induction hypothesis assume that for some arbitrary, but fixed, $\sla_1,\sla_2 \in \SLfrag_1$ there are $\slb_1 = \exists \xx_{1},\ldots,\xx_{n}\colon \slc_1$ and  $\slb_2 = \exists \xy_{1},\ldots,\xy_{m}\colon \slc_2$ with $\slc_1,\slc_2 \in \SLfrag_1$ quantifier-free and $\slb_1 \equiv\sla_1$ and $\slb_2 \equiv\sla_2$.  Furthermore, assume without loss of generality that $\xx_{1},\ldots,\xx_{n}$ do not occur in $\slc_2$ and that $\xy_{1},\ldots,\xy_{m}$ do not occur in $\slc_1$. \\ \\
	\noindent
	\emph{The case $\sla = \sla_1\sepcon\sla_2$.} For every $(\sk,\hh) \in \States$, we have
	\begin{align*}
		&(\sk, \hh) \mmodels \sla_1 \sepcon \sla_2 \\%
		\text{iff}\quad & \text{there are $\hh_1,\hh_2$ with $\hh_1 \sepcon\hh_2 =\hh$ such that} \\
		& (\sk, \hh_1) \mmodels \sla_1~\text{and} ~(\sk, \hh_2) \mmodels \sla_2 \\
		\text{iff}\quad & \text{there are $\hh_1,\hh_2$ with $\hh_1 \sepcon\hh_2 =\hh$ such that} \\
		&\text{there are $v_1,\ldots,v_n,w_1,\ldots,w_m$ such that} \\
		& (\sk\statesubst{\xx_1}{v_1}\ldots\statesubst{\xx_n}{v_n}, \hh_1) \mmodels \slc_1
		~\text{and} ~(\sk\statesubst{\xy_1}{w_1}\ldots\statesubst{\xy_m}{w_m}, \hh_2) \mmodels \slc_2 
		\tag{by I.H.}\\
		\text{iff}\quad &\text{there are $v_1,\ldots,v_n,w_1,\ldots,w_m$ such that} \\ 
		&\text{there are $\hh_1,\hh_2$ with $\hh_1 \sepcon\hh_2 =\hh$ such that} \\
		& (\sk\statesubst{\xx_1}{v_1}\ldots\statesubst{\xx_n}{v_n}, \hh_1) \mmodels \slc_1
		~\text{and} ~(\sk\statesubst{\xy_1}{w_1}\ldots\statesubst{\xy_m}{w_m}, \hh_2) \mmodels \slc_2 
		\\
		\text{iff}\quad &\text{there are $v_1,\ldots,v_n,w_1,\ldots,w_m$ such that} \\ 
		& (\sk\statesubst{\xx_1}{v_1}\ldots\statesubst{\xx_n}{v_n}\statesubst{\xy_1}{w_1}\ldots\statesubst{\xy_m}{w_m}, \hh) \mmodels \slc_1 \sepcon \slc_2 
		\tag{variables do not overlap}\\
		\text{iff}\quad &(\sk,\hh)\models \exists v_1,\ldots,w_n,w_1,\ldots,w_m\colon  
		\slc_1 \sepcon \slc_2~.
	\end{align*}
	\emph{The case $\sla =\slsingleton{\xx}{\tuplenotation{\xz_1,\ldots,\xz_k}}  \sepimp \sla_1$.} We assume without loss of generality that $\xx \not \in \{\xx_1,\ldots,\xx_n\}$ and $\{\xx_1,\ldots,\xx_n\} \cap \{\xz_1 \ldots \xz_k\} = \emptyset$.
	 Let $(\sk,\hh) \in \States$. We distinguish the cases $\sk(\xx) \in \dom{\hh}$ and $\sk(\xx) \not\in \dom{\hh}$. If $\sk(\xx) \in \dom{\hh}$, we have 
	 \begin{align*}
	    &(\sk, \hh) \mmodels \slsingleton{\xx}{\tuplenotation{\xz_1,\ldots,\xz_k}}  \sepimp \sla_1\\%
	    \text{iff}\quad & \true \\
	    \text{iff}\quad &(\sk, \hh) \mmodels \exists \xx_{1},\ldots,\xx_{n} \colon \slsingleton{\xx}{\tuplenotation{\xz_1,\ldots,\xz_k}}  \sepimp \slc_1~.
	    \tag{variables do not overlap}
	 \end{align*}
    If $\sk(\xx) \not\in \dom{\hh}$, we have
    \begin{align*}
    	&(\sk, \hh) \mmodels \slsingleton{\xx}{\tuplenotation{\xz_1,\ldots,\xz_k}}  \sepimp \sla_1\\%
    	\text{iff}\quad & (\sk, \hh\sepcon \slsingleton{\sk(\xx)}{\tuplenotation{\sk(\xz_1),\ldots,\sk(\xz_k)}} ) \mmodels \sla_1\\
    	\text{iff}\quad & (\sk, \hh\sepcon \slsingleton{\sk(\xx)}{\tuplenotation{\sk(\xz_1),\ldots,\sk(\xz_k)}} ) \mmodels \exists \xx_{1},\ldots,\xx_{n}\colon \slc_1
    	\tag{by I.H.} \\
    	\text{iff}\quad &\text{there are $v_1,\ldots,v_n$ such that}  \\
    	& (\sk\statesubst{\xx_1}{v_1}\ldots\statesubst{\xx_n}{v_n}, \hh\sepcon \slsingleton{\sk(\xx)}{\tuplenotation{\sk(\xz_1),\ldots,\sk(\xz_k)}} ) \mmodels  \slc_1 
    	\tag{variables do not overlap}\\
    	\text{iff}\quad &\text{there are $v_1,\ldots,v_n$ such that}  \\
    	& (\sk\statesubst{\xx_1}{v_1}\ldots\statesubst{\xx_n}{v_n}, \hh ) \mmodels \slsingleton{\xx}{\tuplenotation{\xz_1,\ldots,\xz_k}}  \sepimp  \slc_1
    	\tag{variables do not overlap} \\
    	\text{iff}\quad & (\sk, \hh ) \mmodels \exists \xx_1,\ldots,\xx_n \colon \slsingleton{\xx}{\tuplenotation{\xz_1,\ldots,\xz_k}}  \sepimp  \slc_1~.
    \end{align*}
    This completes the proof.
\end{proof}
\begin{lemma}
	\label{lem:prenex_qsh}
	Every $\sla \in \SLfrag_2$ is equivalent to a formula $\slb = \exists \xx_1,\ldots,\xx_n\colon \slc$ for some $n \in \Nats$ with $\slc \in \SLfrag_2$ quantifier-free.
\end{lemma}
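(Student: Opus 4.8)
The plan is to prove this in exactly the same way as \Cref{lem:prenex_eqsh}, namely by induction on the structure of $\sla$, but the argument is strictly simpler: $\SLfrag_2$ is obtained from $\SLfrag_1$ by dropping the magic wand $\slsingleton{\xx}{\tuplenotation{\xz_1,\ldots,\xz_k}} \sepimp \cdot$, so the only nontrivial inductive case is the one for $\sepcon$, which we can lift verbatim from the proof of \Cref{lem:prenex_eqsh}.

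First, for the base case $\sla = \preda$ we choose $n = 0$ and take $\slc = \preda$, which is quantifier-free and in $\SLfrag_2$; there is nothing to show. For the inductive cases, I would assume (renaming bound variables apart, as usual) that for arbitrary $\sla_1, \sla_2 \in \SLfrag_2$ there are $\slb_1 = \exists \xx_1,\ldots,\xx_n \colon \slc_1$ and $\slb_2 = \exists \xy_1,\ldots,\xy_m \colon \slc_2$ with $\slc_1,\slc_2 \in \SLfrag_2$ quantifier-free, $\slb_1 \equiv \sla_1$, $\slb_2 \equiv \sla_2$, and such that $\xx_1,\ldots,\xx_n$ do not occur in $\slc_2$ and $\xy_1,\ldots,\xy_m$ do not occur in $\slc_1$.

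The cases $\sla = \sla_1 \wedge \sla_2$, $\sla = \sla_1 \vee \sla_2$, and $\sla = \exists \xx \colon \sla_1$ are the standard prenex manipulations: since the bound variables are fresh for the respective other conjunct/disjunct, the existential quantifiers commute with $\wedge$ and $\vee$, and for $\exists \xx \colon \sla_1$ we simply prepend $\xx$ to the quantifier block of $\slb_1$. In each case the quantifier-free kernel is $\slc_1 \wedge \slc_2$, $\slc_1 \vee \slc_2$, or $\slc_1$, which is again quantifier-free and lies in $\SLfrag_2$. For the case $\sla = \sla_1 \sepcon \sla_2$, I would run the same chain of equivalences as in the $\sepcon$ case of \Cref{lem:prenex_eqsh}: unfold the semantics of $\sepcon$, apply the induction hypothesis to replace $\sla_1,\sla_2$ by $\exists \vec{\xx}\colon\slc_1$ and $\exists \vec{\xy}\colon\slc_2$, swap the quantifier "there are heaps $\hh_1,\hh_2$ with $\hh_1 \sepcon \hh_2 = \hh$" with "there are values $v_1,\ldots,v_n,w_1,\ldots,w_m$" (valid since heap splittings and value assignments range independently), and use that the two variable blocks do not overlap to conclude $\sla_1 \sepcon \sla_2 \equiv \exists \xx_1,\ldots,\xx_n,\xy_1,\ldots,\xy_m \colon \slc_1 \sepcon \slc_2$. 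Since $\slc_1 \sepcon \slc_2 \in \SLfrag_2$ is quantifier-free, this closes the induction.

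There is no real obstacle here: the only point requiring care is the variable-renaming bookkeeping in the $\sepcon$ case, and that is already handled identically in the proof of \Cref{lem:prenex_eqsh}; the absence of the magic wand is precisely why $\SLfrag_2$ needs no separate argument for it.
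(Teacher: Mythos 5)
Your proposal is correct and matches the paper's intent exactly: the paper proves this lemma simply by noting it is analogous to \Cref{lem:prenex_eqsh}, which is precisely the induction you spell out, with the magic-wand case dropped and the $\sepcon$ case reused verbatim. Nothing further is needed.
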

\begin{proof}
	Analogous to the proof of \Cref{lem:prenex_eqsh}.
\end{proof}
Now let $\sla \in \eQSH$ and $\slb \in \QSH$. By \Cref{lem:prenex_eqsh} and \Cref{lem:prenex_qsh} there are $\sla' =\exists \xx_1,\ldots,\xx_n\colon \slc_1$ and $\slb' = \exists \xy_1,\ldots,\xy_m\colon \slc_2$ with $\slc_1 \in \SLfrag_1$,$\slc_2 \in \SLfrag_2$ quantifier-free and $\sla \equiv \sla'$ and $\slb \equiv \slb'$. Notice that $\slc_2$ does not contain $\sepimp$. We may without loss of generality assume that $\xx_1,\ldots,\xx_n$ do not occur in $\slc_2$ and that $\xy_1,\ldots,\xy_m$ do not occur in $\slc_1$. Hence, we get
\begin{align*}
	& \sla \eentails \slb \\
	\text{iff} \quad &\sla \wedge \neg\slb~\text{is unsatisfiable} \\
	\text{iff} \quad & \exists \xx_1,\ldots,\xx_n\colon \slc_1 \wedge \neg (\exists \xy_1,\ldots,\xy_m\colon \slc_2)~\text{is unsatisfiable}
	\tag{by above reasoning} \\
	\text{iff} \quad & (\exists \xx_1,\ldots,\xx_n\colon \slc_1) \wedge (\forall  \xy_1,\ldots,\xy_m\colon \neg \slc_2)~\text{is unsatisfiable} \\
	\text{iff} \quad & \exists \xx_1,\ldots,\xx_n \colon \forall  \xy_1,\ldots,\xy_m\colon \slc_1  \wedge  \neg \slc_2~\text{is unsatisfiable}~,
	\tag{standard prenexing}
\end{align*}
since $\slc_1 \wedge \neg \slc_2$ is quantifier free and since none of $\xy_1,\ldots,\xy_m$ occur in an instance of $\sepimp$ occurring in $\slc_1 \wedge \neg \slc_2$, the claim follows.

\end{proof}
\end{document}